\theoremstyle{definition}
\newtheorem{theorem}{Theorem}
\newenvironment{manualassumption}[1]{%
  \manualtheoreminner
}{\endmanualtheoreminner}
\newtheorem{corollary}{Corollary}
\newtheorem{lemma}{Lemma}[section]
\newtheorem{remark}{Remark}
\theoremstyle{remark}
\newtheorem{example}{Example}
\title{Continuous permanent unobserved heterogeneity in dynamic discrete choice models
\thanks{I am grateful to Federico Bugni, Adam Rosen, Matt
    Masten, and Arnaud Maurel for their guidance and support. I also thank St\'ephane Bonhomme, Giovanni Compiani, Jeremy Fox, Dave Kaplan, Andriy Norets, Yuya Sasaki, Xun Tang, colleagues at Duke University, as well as various seminar participants
    for useful comments and suggestions.}}
\author{Jackson Bunting\thanks{Department of Economics, University of Washington, \href{mailto:buntingj@uw.edu}{buntingj@uw.edu}.}}
\date{\today}
\begin{document}

\maketitle
 \begin{abstract}

In dynamic discrete choice (DDC) analysis, it is common to use mixture models to control for unobserved heterogeneity. However, consistent estimation typically requires both restrictions on the support of unobserved heterogeneity and a high-level injectivity condition that is difficult to verify. This paper provides primitive conditions for point identification of a broad class of DDC models with multivariate continuous permanent unobserved heterogeneity. The results apply to both finite- and infinite-horizon DDC models, do not require a full support assumption, nor a long panel, and place no parametric restriction on the distribution of unobserved heterogeneity. In addition, I propose a seminonparametric estimator that is computationally attractive and can be implemented using familiar parametric methods.

\begin{description}
	\item[Keywords:] Mixture models, dynamic discrete choice problems, nonparametric identification, unobserved heterogeneity.
	\item[JEL Classification Codes: C14, C61]
\end{description}
\end{abstract}

\newpage

\section{Introduction}

In dynamic discrete choice (DDC) analysis, it is common to use mixture models to control for permanent unobserved heterogeneity. For instance, \textcite{keane1997career,cameron1998life} model the observed distribution of schooling and work decisions as a mixture of individuals with varying unobserved abilities, which differ across occupations.

However, the use of mixture models in DDC analysis has limitations. First, existing identification results restrict the permanent unobserved heterogeneity to be either discrete \parencite{kasahara2009nonparametric} or a scalar random variable  \parencite{hu2012nonparametric}. In the schooling and work example, this limitation may mean the mixture model does not capture the full richness of ability types and patterns of comparative advantage across occupations.

Second, identification of mixture DDC models depends on having `enough variation' in agent behaviour \parencite{kasahara2009nonparametric,hu2012nonparametric}, a condition that is typically assumed at a high level. In the context of the schooling and work example, `enough variation' might require that agents with different unobserved abilities respond adequately differently to changes in wages. Concretely, `enough variation' is an injectivity condition. To express the condition formally, let $P_t(a,x,b)$ represent the model-implied probability that an agent chooses action $a$ in period $t$ given observed covariates $x$ and persistent unobserved heterogeneity $b$. The `enough variation' assumption states for any signed measure $\mu$ on the support of persistent unobserved heterogeneity
\begin{equation}\label{eq:inject_gen}
        \int P_t(a,x,b)d\mu(b)=0~~ \text{for all }(a,x)
        ~ \Longrightarrow ~ \mu=0.
\end{equation}
That is, `enough variation' guarantees that distinct distributions of heterogeneity generate distinct average choice behavior in at least one state. An injectivity condition of this style is imposed in the existing indentification literature.\footnote{Specifically, Equation \eqref{eq:inject_gen} generalizes the rank condition assumed in Proposition 1 \textcite{kasahara2009nonparametric}, and is a specialization of Assumption 2 \textcite{hu2012nonparametric}
.} Yet, despite the crucial role of the injectivity assumption to identification,\footnote{Under some conditions, injectivity is equivalent to identification. See the discussion of Theorem \ref{thm:inject}.} there appear to be few results in the literature on whether it holds in a given DDC model. Gaining insights into the conditions under which injectivity holds is particularly significant given that the assumption, as stated in \textcite[][p. 151]{kasahara2009nonparametric}, ``is not empirically testable from the observed data.'' Moreover, verifying injectivity of an integral operator is known to be a challenging problem in general \parencite[e.g.,][]{andrews2017examples}.

The main contribution of this paper is to propose a general class of DDC models with permanent unobserved heterogeneity that is both continuous and multivariate, and provide low-level conditions for its identification. Applied to the schooling and work example, the class of DDC models in this paper would allow abilities to vary continuously across individuals and to be occupation-specific. I provide sufficient conditions for point identification of all model parameters, including the distribution of agent types (i.e., the distribution of permanent unobserved heterogeneity) and the type-specific choice model. By establishing low-level conditions for identification, the paper provides affirmation of the injectivity assumption for DDC models, demonstrating that it holds at least within one broad class of DDC models.

The paper contains two main results on identification of multinomial DDC models. The first result (Section \ref{sec:model}) pertains to DDC models with random coefficients. The second (Section \ref{sec:model-intercept}) relates to DDC models with random intercepts. I also prove several extensions to these main results, encompassing both stationary (i.e., infinite horizon) and non-stationary (i.e., finite horizon) DDC models. Furthermore, I show an important implication of the results under the additional restriction that permanent unobserved heterogeneity is discrete --- an assumption that is standard in applied work. In this case, a key modeling decision is the number of agent types (i.e., the number of support points of permanent unobserved heterogeneity),\footnote{In general, only a lower bound on the number of mixture components is identified (e.g., \textcite[Proposition 3]{kasahara2009nonparametric}) so identification of finite mixture models requires knowledge of an upper bound (e.g., \textcite[Theorem S.1]{freyberger2018non}). See Section \ref{sec:kwon} for discussion.} which may be a challenging decision if there is no theoretical guidance on the number of agent types. My identification results imply a solution to this problem: namely, that the number of agent types is identified if it is assumed to be finite. 

Within a standard DDC model in the style of \textcite{rust1987optimal,magnac2002identifying}, the low-level conditions for identification can be broadly categorized into two groups. First, I assume a short panel of observations with some continuous variation in the observed covariates, which is a natural prerequisite for nonparametric identification of a continuous latent distribution. Importantly, the results do not require the covariates to have full support, nor place parametric restrictions on the distribution of the permanent unobserved heterogeneity. Second, restrictions on the model primitives are used to ensure injectivity holds. These restrictions have three components: a distributional assumption on the random utility shock, a functional form assumption on the per-period payoffs, and a relevance condition on the covariates.\footnote{This is a key point of departure from the existing identification literature, which allow for more general DDC models at the expense of imposing injectivity at a high level.} The restrictions have the advantage of being low-level and interpretable. For example, the relevance condition can be interpreted as requiring (at least one) covariate to have a non-zero effect on the agent's utility. Moreover, and notably, many of the restrictions are commonly made in the literature. For example, it is common to make distributional assumptions on the random utility shock and functional form assumptions on the per-period payoffs \parencite{aguirregabiria2010dynamic}. In this way, the results of this paper demonstrate that commonly made assumptions impose structure on DDC models that is useful for proving the (otherwise high-level) injectivity condition.

To implement the identification results, I propose a novel estimation method. Existing DDC estimation methods which focus on the parametric case\footnote{In principle, standard DDC models may be semiparametric in the presence of continuous covariates, however, in practice, continuous covariates are often discretized and treated as such for estimation.} \parencite{aguirregabiria2002swapping,arcidiacono2011conditional} do not apply to the model of this paper, as the distribution of unobserved heterogeneity may be an infinite dimensional parameter of interest. Similarly, the computational complexity of DDC models means that immediately available nonparametric methods (such as sieve likelihood estimation) may be impractical. To address these issues I propose a two-step sieve M-estimator, and show it is consistent for the model parameters. I also propose a computationally convenient sieve space based on \textcite{heckman1984method}. Intuitively, the estimator approximates the possibly continuous distribution of permanent unobserved heterogeneity by a discrete distribution. In this setup, the `fixed grid' of support points of the approximating distribution is a tuning parameter of the sieve estimator. Computationally, this estimator is identical to an estimator for a model with finite types, but instead of the number of support points being an identifying assumption, it is simply a tuning parameter.

I illustrate the theory through a simulation exercise and an empirical application based on the labor supply model of \textcite{altuug1998effect}. In this model, agents value consumption and leisure, deciding each period whether to enter the workforce based on expected wages. My identification results allow individual labor productivity to be continuous and consistently estimated from the labor force participation model. The estimates indicate substantial heterogeneity in labor productivity, with a strongly skewed distribution. A counterfactual exercise measures how wages affect labor force participation across the productivity distribution, revealing a highly varied response.

After discussing related literature, I introduce the model and provide one main identification result (Section \ref{sec:model}). Section \ref{sec:extensions} contains the second main identification result (Section \ref{sec:model-intercept}) and other extensions, including to non-stationary DDC problems.  Section \ref{sec:estimation} proposes the two-step sieve M-estimator and shows its consistency. Section \ref{sec:sims} contains the simulation exercise, and Section \ref{sec:applic} the application.

\paragraph{Related literature.}

This paper is closely related to the literature on point identification of DDC models with persistent unobserved heterogeneity \parencite{kasahara2009nonparametric,hu2012nonparametric}. These papers use a short panel to identify type-specific conditional choice probabilities and the distribution of unobserved heterogeneity via an eigendecomposition of the observed data. As mentioned earlier, these papers consider persistent unobserved heterogeneity that is either discrete \parencite{kasahara2009nonparametric} or a scalar random variable  \parencite{hu2012nonparametric}. Relative to these papers, I allow for permanent unobserved heterogeneity that is both continuous and multivariate. As previously mentioned, another important difference is that I provide low-level conditions for the injectivity condition. On the other hand, their approach allows unobserved heterogeneity to enter the model very flexibly, restricted only by certain high-level assumptions.\footnote{However, it is worth noting that \textcite{hu2012nonparametric} do not allow for identification of permanent unobserved heterogeneity from variation in choice behavior alone. Specifically, \textcite{hu2012nonparametric} Assumption 3(ii) requires variation in the state transition by type. To see this, in their notation let
  $W_{t}=(Y_{t},X_{t})$ be observed and $X_{t}^{*}=X^{*}$ latent, then their equation (11) becomes 
\begin{align*}
k\left(w_{t}, \bar{w}_{t}, w_{t-1}, \bar{w}_{t-1}, x^{*}\right)=\frac{f_{X_{t}|X_{t-1},Y_{t-1},X^{*}}(x_{t}|x_{t-1},y_{t-1},x^{*})f_{X_{t}|X_{t-1},Y_{t-1},X^{*}}(\overline{x}_{t}|\overline{x}_{t-1},\overline{y}_{t-1},x^{*})}{f_{X_{t}|X_{t-1},Y_{t-1},X^{*}}(\overline{x}_{t}|x_{t-1},y_{t-1},x^{*})f_{X_{t}|X_{t-1},Y_{t-1},X^{*}}({x}_{t}|\overline{x}_{t-1},\overline{y}_{t-1},x^{*})},
\end{align*}
and thus their Assumption 3(ii) which requires $k\left(w_{t}, \bar{w}_{t}, w_{t-1}, \bar{w}_{t-1}, x^{*}\right)$ to vary in $x^{*}$ fails if the state transition $f_{X_{t}|X_{t-1},Y_{t-1},X^{*}}$ does not depend on $X^{*}$. \textcite{williams2019nonparametric} also makes this point.} For example, my assumptions rule out type-specific transition functions (e.g., \textcite[Section 3.2]{kasahara2009nonparametric}) or unobserved heterogeneity that is first-order Markov (e.g., \textcite{hu2012nonparametric}). See also \textcites{williams2019nonparametric,higgins2023identification} as well as the general review \textcite{compiani2016using}.

Several other papers have analyzed persistent unobserved heterogeneity in DDC models from a partial identification perspective. For instance, \textcite{aguirregabiria2018sufficient} focuses on (point) identification of a subvector of the model parameters, treating permanent unobserved heterogeneity as a nuisance parameter. The related \textcite{aguirregabiria2021identification} considers a DDC model with fixed effects. Some general approaches that allow for set identification include \textcite{chernozhukov2013average,berry2022compiani}. Compared to these papers, I provide conditions for point identification of the DDC model. The paper is also related to the large literature on identification of the distribution of continuous unobserved heterogeneity in binary response models. One stream exploits a linear index and full support covariates, while leaving the distribution of random preference shocks unspecified (\textcite{ichimura1998maximum,lewbel2000semiparametric,gautier2013nonparametric}, among others). Relative to these papers, a DDC model yields a non-linear index with additive parametric preference shocks.

The seminonparametric estimator I propose is based on \textcite{heckman1984method}. Similar `fixed grid' estimators have been analyzed for both the parametric and non-dynamic models \parencite{fox2011simple,fox2016simple}, and are increasingly used in applied work \parencite[e.g.,][]{nevo2016usage,illanes2019retirement}.

\textbf{{Notation:}} For a random variable $X$, $\mathrm{Supp}(X)$ and $f_X$ denote the support and probability density (or mass) function.

\section{Model and identification}\label{sec:model}

\subsection{Model setup}

I consider a standard single-agent dynamic discrete choice structural model as described in \textcite{aguirregabiria2010dynamic}. In each period $t=1,\ldots,T=\infty$, a single agent observes a vector of state variables $(S_{t},\epsilon_t)$ and chooses an action $A_{t}$ from a finite set of actions $A \equiv\{0,1, \ldots,J\}$ (with $J>0$) to maximize expected utility. I assume $\epsilon_t=(\epsilon_{t,a}:a\in A)$ is independent of $\left(\epsilon_\tau, {A}_\tau, S_{\tau+1}\right)$ for $\tau< t$, and is identically distributed according to $d F_{\epsilon}(e)=\prod_a dF_{\epsilon_a}(e_a)$. In addition, conditional on $(A_t,S_t)=\left({a}_t, s_t\right)$, $S_{t+1}$ is independent of $\left(\epsilon_\tau, {A}_{\tau-1}, S_{\tau-1}\right)$ for $\tau \leq t$, with probability distribution $d F_s\left(s_{t+1} \mid {a}_t, s_t\right)$. It then follows that $\left(S_{t+1}, \epsilon_{t+1}\right)$ is a Markov process with a probability density that satisfies
\begin{equation}\label{eq:inf_ci}
d \operatorname{Pr}\left(S_{t+1}=s', \epsilon_{t+1}=e' \mid S_t=s, \epsilon_t=e, A_t={a}\right)=d F_{\epsilon}\left(e'\right) \times d F_s\left(s' \mid {a}, s\right) .
\end{equation}

The agent has a time-separable utility and discounts future payoffs by $\rho \in[0,1)$, where the period $t$ payoff is $u_t(S_t,\epsilon_t,A_t)$. Under these conditions, the agent's choice in time $t$ satisfies
\begin{equation}\label{eq:inf_prob}
a_t=\arg\max_{a\in A}\left\{u_t(s_t,e_t,a)+\rho E[v_{t+1}(S_{t+1})\mid S_t=s_t,A_t=a]\right\},
\end{equation}
where $v_t$ is the so-called integrated value function:
\begin{equation}\label{eq:fixed_point}
    v_t(s_{t})=E\left[\max_{a\in A}\left\{u_t(s_t,\epsilon_t,a)+\rho E[v_{t+1}(S_{t+1})\mid S_t=s_t,A_t=a]\right\}\right].
\end{equation}

In this section I present conditions for identification of the distribution of continuous unobserved heterogeneity within the above model. The first assumption imposes restrictions that are standard for stationary DDC models without permanent unobserved heterogeneity.

\begin{manualassumption}{{I1}}\label{As:inf_bl} (i) $u_t(S_{t},\epsilon_{t},A_{t})=u(S_{t},A_{t})+\sum_{a\in A}\epsilon_{t,a}1[a=A_t]$. (ii)
$\rho\in[0,1)$ is known.
(iii) Equation \eqref{eq:inf_ci}.
 (iv) $u(S_{t},0)=0$.
 (v) $\epsilon_{t,a}$ is independent over
agents, actions and time and distributed extreme value type I. (vi) $\mathrm{Supp}(S_t)$ is bounded.
\end{manualassumption}

Assumption \ref{As:inf_bl} include standard identifying assumptions for DDC models \parencite{magnac2002identifying,aguirregabiria2010dynamic}, including additive separability of the flow utility, that the discount factor is known, a conditional independence assumption, and the outside good. These assumptions are not innocuous --- for example, \textcite{norets2014semiparametric} show that the choice of outside good may affect predicted counterfactual outcomes. Nevertheless, it is standard to assume the unobserved state variables have a known distribution, of which normal and extreme value type I are common choices. It is also common to assume that $S_t$ lies in a compact set, which helps ensure the integrated value function is a bounded function of $S_t$ \parencite{rust1987optimal,kristensen2019solving}.

The next assumption introduces permanent unobserved heterogeneity into the model as an unobserved state variable.

\begin{manualassumption}{I2}\label{As:inf_puh}
\begin{enumerate*}[label=(\roman*)]
    \item $S_t=(X_t^\intercal,\beta^\intercal)^\intercal\in\mathbb{R}^{k+J}$, and $k=J+1$. For each $x\in \mathrm{Supp}(X_1)$, $\beta\mid X_1=x$ admits a bounded density $f_{\beta|X_{1}}$.
    \item $u(s,a)=x^\intercal\left(\beta_{a},\;\gamma_{a}^\intercal\right)^\intercal$.
\item $d\Pr(X_{t+1}=x' \mid A_t=a, X_t=x, \beta=b )= d F_x(x'\mid x,a)$.
\item\label{As:inf_homog}
$\Gamma\equiv(\gamma_{1}\gamma_{2}\cdots\gamma_{J})\in\mathbb{R}^{J\times J}$ is full rank.
\item\label{As:inf_tran}
The probability distribution of $X_{t+1}$ conditional upon $(A_t,X_t)=(a,x)$ has no singular components, and the associated probability density and mass functions are real analytic functions of $x$ with bounded analytic
continuations to $\mathbb{R}^{k}$.
\end{enumerate*} 
\end{manualassumption}

Assumption \ref{As:inf_puh}(i) states that permanent unobserved heterogeneity enters the model as an unobserved state variable. The restrictions placed on its distribution are mild. First, it allows the distribution to have uncountable support. Intuitively this means there may be infinitely many types of agents.\footnote{\label{fn:pmf}One may replace the probability density function in Assumption \ref{As:inf_puh}(i) with probability mass function and the subsequent results go through with minor modification. That is, the results allow for the typical assumption of finitely many types as a special case.} Second, there may be arbitrary dependence between the initial state variable and permanent unobserved heterogeneity.%

Assumption \ref{As:inf_puh}(i) further imposes that the dimension of the permanent unobserved heterogeneity is equal to the size of the choice set minus one (i.e., $\dim(\beta)=J$). It also requires that the dimension of the observed state variable equals the dimension of the permanent unobserved heterogeneity plus one (i.e., $k=\dim(\beta)+1$). Combined with part (ii), this implies that the model has $J$ variables with action-specific but agent-homogeneous effects via $\gamma_a$, and one variable with action- and agent-specific effects. It is straightforward, however, to allow for additional state variables with agent-homogeneous effects (i.e., $k \ge \dim(\beta)+1$ and $\dim(\beta)=J$); see Remark \ref{rem:discrete} for further discussion.

Parts (ii) and (iii) of Assumption \ref{As:inf_puh} control how permanent unobserved heterogeneity enters the model. Part (ii) states that the permanent unobserved heterogeneity enters the model as a random coefficient in the per-period payoff. Importantly, the continuous $\beta$ is vector-valued, allowing its effect to differ across different choice alternatives. By making the unit and time subscripts explicit in part (ii), i.e.,
$$
u(s_{i,t},a_{i,t}) = x_{i,t}^\intercal (\beta_{a,i},\gamma_a^\intercal)^\intercal,
$$
we see that $\beta_i = (\beta_{1,i},\ldots,\beta_{J,i})^\intercal$ can be viewed as an action-specific random effect associated with the first element of the state variable. For example, if $\beta_a$ represents an agent's ability in occupation $a\in A$, some agents may be high ability in all occupations, other agents may be high in some occupations and low in others. Part (iii) requires that the transition of the state variable not depend on the unobserved state variable. As explained below (Remark \ref{rem:transition}), this assumption enables conditions on the model primitives to be used for identification.

The next condition (Assumption I2\ref{As:inf_homog}) imposes that the state variable cannot affect payoffs for each choice in a similar fashion. For example, in the binary choice case ($J=1$), the assumption requires that $\gamma_{1}\neq0\in\mathbb{R}$.

Assumption I2\ref{As:inf_tran} allows the state transition to be a mixture of an absolutely continuous and discrete random variable, but restricts the probability distribution to be a smooth function of the conditioning state variable. In particular, the component probability density and mass functions must be real analytic functions --- that is, functions that have a convergent power series representation. An example of a state transition satisfying Assumption I2\ref{As:inf_tran} is a mixture of a mass point at $x_{t+1}=0$ and a truncated normal: $F_{x}(x';x,a)=\pi{1}(x'=0)+(1-\pi)F_{+}(x';x,a)$, where $F_{+}(x';x,a)$ is a truncated normal whose mean and variance are real analytic functions of $(x,a)$. Other examples of real analytic functions include polynomials, the logistic function, trigonometric functions, the Gaussian function, in addition to compositions, products and linear combinations of these functions. This class of functions is known to include good approximators to square-integrable functions \parencite[e.g.,][Section 2.3]{chen2007large}, and can therefore approximate many density functions arbitrarily well.

\subsection{Injectivity}

Define the conditional choice probability (CCP) function $P(a,x,b)$ to be the model implied probability that $A_t=a$ conditional upon $X_t=x$ and $\beta=b$. The first main theorem states that under the above conditions, the integral operator defined by the CCP function is injective.

\begin{theorem}[Injectivity]\label{thm:inject}
Assume \ref{As:inf_bl} and \ref{As:inf_puh}. Let $\mathcal{X}\subseteq\mathrm{Supp}(X_t)$ be a non-empty open set,  and let $\mu$ be an absolutely continuous finite signed measure on $\mathrm{Supp}(\beta)$. If
\begin{equation*}
\int P(a,x,b)\,d\mu(b) = 0 \quad \text{for almost every } (a,x)\in A\times\mathcal{X},
\end{equation*}
then $\mu=0$, the zero measure.
\end{theorem}

The injectivity condition in Theorem \ref{thm:inject} is fundamental to identification of mixture models. To explain, consider the simple case that $\beta$ is independent of $X_t$ and that the CCP function is known.\footnote{Since the state transition is identified directly from the data, given the model specified in Assumptions \ref{As:inf_bl} and \ref{As:inf_puh}, the CCP function is known if $\gamma=\left\{\gamma_a\in\mathbb{R}^{k-1}:a=1,\ldots,J\right\}$ is known.}. In this case, the data satisfies $\Pr(A_t=a\mid X_t=x)=\int P(a,x,b)dF_\beta(b)$ and the only unknown model parameter is $F_\beta$, the distribution of permanent unobserved heterogeneity. Then, supposing (the interior) of $\mathrm{Supp}(X_t)$ is non-empty and open, the injectivity condition is equivalent to identification of the distribution of unobserved heterogeneity: it states that if two distributions $F_\beta$ and $\tilde{F}_\beta$ are observationally equivalent, i.e.,
\begin{equation*}
    \int P(a,x,b)dF_\beta(b) = \int P(a,x,b)d\tilde{F}_\beta(b)
\end{equation*}
for almost every $(a,x)\in A\times\mathrm{Supp}(X_t)$, then the two distributions are the same, i.e., $F_\beta=\tilde{F}_\beta$. More generally, the injectivity condition in Theorem \ref{thm:inject} is an example of the injectivity assumption in the measurement error literature \parencite[Assumption 3]{hu2008instrumental}, with analogs in the context of DDC models \parencites[Proposition 1]{kasahara2009nonparametric}[Assumption 2]{hu2012nonparametric}.

The proof of Theorem \ref{thm:inject} is provided in Appendix \ref{pf:inj}. Before presenting an outline, a few comments are in order.

\begin{remark}[\textbf{Support of $\bm{X_t}$}]
    Theorem \ref{thm:inject} relies on having continuous variation in the observed state variable: namely that $\mathrm{Supp}(X_t)$ contains a non-empty open set $\mathcal{X}$. Given that injectivity is equivalent to the set $\{ b\mapsto P(a,x,b):(a,x)\in A \times{\mathcal{X}}\}$ being dense in all square integrable functions (see the below overview of the proof), it is natural to require that the set has infinitely many elements. However, importantly, $\mathrm{Supp}(X_t)$ may be arbitrarily small so long as it contains a non-empty open set. As described in the below proof outline, this is an implication of $P$ being real analytic.
\end{remark}

\begin{remark}[\textbf{Discrete state variables}]\label{rem:discrete}
    For notational simplicity, the formal statements in this paper focus on the case that $k=\dim(\beta)+1$ and that each element of $X\in\mathbb{R}^k$ has some continuous component. However, with only notational changes, the results of this paper continue to apply when there are additional observed state variables (i.e., $k\ge \dim(\beta)+1$). In this more general case, there are no limitations on the support of the additional state variables. For instance, they may contain discrete variables such as a constant or indicator functions. See Appendix \ref{sec:discrete_state} for a statement of sufficient conditions for Theorem \ref{thm:inject} in the $k\ge \dim(\beta)+1$ case.
\end{remark}

\begin{remark}[\textbf{Type dependent transitions}]\label{rem:transition}
    In the case that the state transition depends on permanent unobserved heterogeneity (i.e., if Assumption \ref{As:inf_puh}(iii) did not apply), then the kernel of the integral operator useful for identification would depend on both the CCP $P(a,x,b)$ and the state transition $F_x(x';x,a,b)$. In this case, without a behavioral model of $F_x(x';x,a,b)$ it appears to be challenging to provide low level conditions for injectivity of the integral operator. \textcites[Proposition 6]{kasahara2009nonparametric}[Theorem 1]{hu2012nonparametric} provide an identification result for this case, using a high level injectivity assumption.
\end{remark}

\paragraph{Overview of proof of Theorem \ref{thm:inject}.}

Broadly, the argument has two steps: (i) characterizing injectivity in terms of the approximation properties of the CCP function, and (ii) showing that the CCP function satisfies this property.

The characterization of injectivity is developed in two parts. First, I use real analyticity to effectively expand the set of $x$ used to define injectivity. To explain this part, note that the CCP function inherits the smoothness properties of the utility function $u_t$, the state transition $F_x$, and the idiosyncratic shock $F_\epsilon$ (assumed in \ref{As:inf_bl}(i), I2\ref{As:inf_tran}, and \ref{As:inf_bl}(v), respectively). In particular, since these are real analytic, the function $x\mapsto P(a,x,b)$ is also real analytic for each $a\in A$, $b\in\mathrm{Supp}(\beta)$. Under the bounded state variable assumption (Assumption \ref{As:inf_bl}(vi)), this analyticity extends to $\mathbb{R}^{k}$, as shown formally in Lemma \ref{lem:inf_ccp}. This allows us to use a straightforward extension of \textcite{stinchcombe1998consistent} Theorem 3.8 (formalized in Lemma \ref{lem:SW3.8}\footnote{A heuristic justification of Lemma \ref{lem:SW3.8} is as follows: if two mixture distributions generate the same observed moment function $g(x)\equiv E[Y\mid X=x]$ on any small open set and $x\mapsto g(x)$ is real analytic, then they would also yield the same observed moment function on the full Euclidean space (assuming the relevant objects are well defined). Thus, for identification purposes, observing a non-empty open set is as informative as observing the Euclidean space. The idea is related to the properties of neural networks with limited weights, e.g., \textcite{stinchcombe1999neural} Theorem 2.3 and references therein.}) to characterize the injectivity condition in Theorem \ref{thm:inject} as 
\begin{equation}\label{eq:expl_super}
\left(\int P(a,x,b)\,d\mu(b) = 0 \quad \text{for all } (a,x)\in A\times\mathbb{R}^k\right)\; \Longrightarrow\; \mu = 0.
\end{equation}
Relative to the injectivity condition in Theorem \ref{thm:inject}, equation \eqref{eq:expl_super} may be easier to verify since $\mathbb{R}^k\supset\mathcal{X}$.

For the second part, I show in Lemma \ref{lem:inf_ccp} that conditions are satisfied to apply an equivalence result from \textcite{stinchcombe1998consistent} that characterizes condition \eqref{eq:expl_super} in terms of the approximation properties of the set of functions $\{ b\mapsto P(a,x,b):(a,x)\in A \times\mathbb{R}^k\}$. Specifically, that this set is dense in square integrable functions on $\mathrm{Supp}(\beta)$. For intuition of this characterization, consider that in the case that $\beta$ has $R<\infty$ support points, the full (row) rank condition is that the collection of vectors $\{\left(P(a,x,b):b=1,\ldots,R\right):(a,x)\in A \times\mathbb{R}^k\}$ span $\mathbb{R}^{R}$. 

The final step of the proof is to show this property, as summarized in Lemma \ref{lemma_cosapprox}: 
\begin{lemma}[Approximation]\label{lemma_cosapprox}
Under \ref{As:inf_bl} and \ref{As:inf_puh}, the linear span of
\begin{equation*}
\{\, b \mapsto P(a,x,b) : (a,x) \in A \times \mathbb{R}^k \,\}
\end{equation*}
is dense in $\mathcal{L}^2(\mathrm{Supp}(\beta))$, the space of square-integrable functions on $\mathrm{Supp}(\beta)$.
\end{lemma}
To prove Lemma \ref{lemma_cosapprox}, I adapt methods from the classical neural network literature \parencite{hornik1989multilayer,hornik1993some}. Like \textcite{hornik1989multilayer}, the argument is constructive: for a given target function on $\mathrm{Supp}(\beta)$, I find a linear combination of $b\mapsto P(a,x,b)$ that approximates it arbitrarily well. The key part of the construction is to show that for a particular choice of $x\in\mathbb{R}^k$ and $a=0$, $P(a,x,b)$ can approximate the
product of one-dimensional step functions in each component of $b\in\mathbb{R}^J$ (i.e., $\prod_{a=1}^J 1\{b_a > l_a\}$ for $l_1,l_2,\ldots,l_J$). 
It is in this part that the functional form of $u_t$ (Assumption \ref{As:inf_puh}(i)), rank condition on $\gamma$ (Assumption \ref{As:inf_puh}(iv)) and extreme value type I assumption (Assumption \ref{As:inf_bl}(v)) play key roles -- they enable a theoretical guarantee that variation in $x$ can be used to create the step and shift its location in the $\beta$ space. More concretely, Assumption \ref{As:inf_puh}(iv) guarantees that the image of $\Gamma\equiv(\gamma_1\gamma_2\ldots\gamma_J)$ is $\mathbb{R}^{\dim(\beta)}$, and Assumptions \ref{As:inf_puh}(i) and \ref{As:inf_bl}(v) guarantee the linear structure is relevant. A formal proof is in Section \ref{ssec:lemma_cosapprox}.

\subsection{Identification}

To invoke Theorem \ref{thm:inject} for identification of the DDC model, we require the support of the state variable to contain an open set:

\begin{manualassumption}{I3}\label{As:inf_os} For all $x\in\mathrm{Supp}(X_1)$, $\exists ~a\in A$ such that: (i) $\mathrm{Supp}(X_{2}\mid{X}_1=x,A_1=a)$ and $\mathrm{Supp}(X_{3}\mid{X_2}\in{\mathrm{Supp}(X_2\mid X_{1}=x,A_1 =a)},A_2=0)$ contain a non-empty open set; (ii) $S_3\equiv{\mathrm{Supp}}(X_3\mid{X_2}\in{\mathrm{Supp}(X_2\mid X_{1}=x,A_1 =a)},A_2=0)$ and $\cap_{a_3\in{\mathrm{Supp}}(A_3)}\mathrm{Supp}(X_4\mid{X_3}\in{S_3},A_3=a_3)$ span $\mathbb{R}^k$.
\end{manualassumption}

Assumption \ref{As:inf_os} places restrictions on the support of the observed state variable $X_t\in\mathbb{R}^k$. Part (i) requires that the support of the observed state variable contains an open set. Part (ii) requires that the supports contain $k$ linearly independent elements, a mild rank condition which is standard in linear models. As discussed in Example \ref{ex:renewal}, Assumption \ref{As:inf_os} allows for renewal models like \textcite{rust1987optimal}. However, it rules out lagged dependent variables, that is, when $X_t$ contains the lagged choice $A_{t-1}$. This would rule out, for example, a firm entry problem where the current period's entry decision $A_{t}$ depends on whether the firm is currently active ($A_{t-1}$). In particular, lagged dependent variables contradict Assumption \ref{As:inf_os}(ii) since $\mathrm{Supp}(X_4\mid{X_3}=x,A_3=a)$  and $\mathrm{Supp}(X_4\mid{X_3}=x,A_3=\tilde{a})$ are disjoint for $a\neq\tilde{a}$.\footnote{\label{fn:os} Although the open set assumption \ref{As:inf_os}(i) also rules out purely discrete variables, as discussed in Remark \ref{rem:discrete}, these can be allowed with minor notational changes. In this case, Assumption \ref{As:inf_os}(i) is relaxed but Assumption \ref{As:inf_os}(ii) is unchanged. See Section \ref{sec:discrete_state} for a technical statement.} However, unlike some results in the literature, Assumption \ref{As:inf_os} does not require that the support be `rectangular'\footnote{For example, this is Assumption 1(c)-(e) used in  \textcite[]{kasahara2009nonparametric} Propositions 1-9 and subsequently relaxed in Propositions 10 and 11.} --- which requires that, starting from any sequence of choices and past state variables, any state can be reached (i.e., for all  $t$ and $(a,x)\in \mathrm{Supp}(A_t,X_t)$, $\mathrm{Supp}(X_{t+1}|X_t=x,A_t=a)=\mathrm{Supp}(X_{t+1})=\mathrm{Supp}(X_1)$).

\begin{example}[Renewal model]\label{ex:renewal}
Consider a bivariate state variable $X_t\in\mathbb{R}^k$, where action $A_t=0$ `regenerates' the state variable to its baseline as in \textcite[p. 1006]{rust1987optimal}. As in \textcite[Section 6.1]{kristensen2019solving}, the transition kernel may be a mixture of a point mass and a continuous random variable:
\begin{align*}
F_x\left(x_{t+1} ; x_t, a_t \right)&= \pi1\left(x_{t+1}=a_t x_t\right)+(1-\pi)F_+(x_{t+1};x_t,a_t),
\end{align*}
for $\pi\in[0,1]$ and where $F_+(x';x,a)$ has support $\mathrm{Supp}(X_{t+1}|X_t=x_t,A_t=a_t)=\times_{k'=1}^{k}[a_t x_{tk'},K_{k'}]$. When $\pi<1$, $\mathrm{Supp}(X_{t+1}|X_t=x,A_t=0)=\times_{k'=1}^k[0,K_{k'}]$ so Assumptions \ref{As:inf_os}(i)is satisfied. It follows that  \ref{As:inf_os}(ii) is satisified with $\cap_{a_3\in{\mathrm{Supp}}(A_3)}\mathrm{Supp}(X_4\mid{X_3}\in{S_3},A_3=a_3) = \times_{k'=1}^k[0,K_{k'}]$.
\end{example}

The model parameters are ($F_{x},\gamma,f_{\beta|X_{1}}$): the state transition, the homogeneous payoff parameter, and the conditional distribution of permanent unobserved heterogeneity. As the state transition is identified by direct observation, the following result handles the remaining parameters:

\begin{theorem}[Identification]\label{thm:inf}
Assume the distribution of $(X_{t},A_{t})_{t=1}^{T}$ is observed for $T\ge4$, generated from agents solving the model of equation (\ref{eq:inf_prob}) satisfying assumptions \ref{As:inf_bl}-\ref{As:inf_os}. Then $(\gamma,f_{\beta|X_{1}})$ is point identified.
\end{theorem}

Theorem \ref{thm:inf} is established via a decomposition argument \parencite{hu2008instrumental,freyberger2018non}. The model structure imposed by Assumptions \ref{As:inf_bl} and \ref{As:inf_puh} implies the following `factorization equation' representation of the weighted distribution of $(X_{t},A_{t})_{t=1}^{T}$ \parencite{kasahara2009nonparametric}:
\begin{multline*}
\frac{f_{A_{4}A_{3}A_{2}A_{1}X_{4}X_{3}X_{2}|X_{1}}(a_{4},a_{3},a_2,a_{1},x_{4},x_{3},x_{2},x_{1})}{F_{x}(x_{4}|x_{3},a_{3})F_{x}(x_{3}|x_{2},a_2)F_{x}(x_{2}|x_{1},a_{1})}\\=\int{P}(a_{4},x_{4},b)P(a_{3},x_{3},b)P(a_2,x_{2},b)P(a_{1},x_{1},b)dF_{\beta|X_{1}}(b,x_{1}),
\end{multline*}
which is guaranteed to exist under the support condition in Assumption \ref{As:inf_os}(i). In the factorization equation we see the role of Assumption \ref{As:inf_puh}(iii): since the state transition does not depend on $\beta$, it can be passed through the integral.\footnote{Related homogeneity assumptions can also lead to weighting approaches in other models, such as \textcites[Chapter 21]{hernan2020chapman}[Section 6]{bonhomme2023identification}.} Then, by invoking the injectivity result in Theorem \ref{thm:inject}, the representation can be used to express the CCP function $P(a,x,b)$ as the eigenfunction of a particular eigendecomposition.\footnote{This reasoning also suggests that, by directly assuming the injectivity condition in Theorem 1, a related identificaton result may hold under weaker conditions on the model (i.e., weaker versions of Assumptions \ref{As:inf_bl} and \ref{As:inf_puh}). See \textcite{kasahara2009nonparametric}, Remark 2.} I then show that the eigendecomposition is unique, which delivers identification of $\gamma \in \mathbb{R}^k$: The argument is related to identification of dynamic discrete choice models without unobserved heterogeneity (e.g., \textcite{bajari2015identification}), with Assumption \ref{As:inf_os}(ii) playing a central role.  Knowledge of $\gamma$ is then used in combination with the factorization equation and injectivity to identify $f_{\beta|X_1}$. The formal proof is in Section \ref{pf:id}.

\begin{remark}[\textbf{Panel length}]
Theorem \ref{thm:inf} requires at least four observations per individual. In contrast \textcite{kasahara2009nonparametric} require only $T=3$. With three periods, identification of the model in Theorem \ref{thm:inf} is possible under a high-level assumption on the joint distribution of permanent unobserved heterogeneity and the first period state variable.\footnote{For example, \textcite[Proposition 1]{kasahara2009nonparametric} assumes that  for some $x\in\mathrm{Supp}(X_1)$, $\Pr(A_1=1,X_1=x,\beta=b)=\Pr(A_1=1|X_1=x,\beta=b)\Pr(\beta=b|X_1=x)\Pr(X_1=x)>0$ is injective in $b$.} However, the advantage of $T=4$ is to avoid this type of high level condition on the distribution of $(X_1,\beta)$, instead using low level conditions on the choice model.
\end{remark}

\section{Extensions}\label{sec:extensions}

In this section I provide identification results for a number of variations on the model in Section \ref{sec:model}. Sections \ref{sec:model-intercept} and \ref{sec:no_term} consider finite-horizon environments in which the agent's decision rule may vary across periods. Section \ref{sec:model-intercept} focuses on the case where the terminal period is observed, allowing identification of models with random intercepts. Section \ref{sec:no_term} addresses the case where the decision horizon extends beyond the observed sample. It provides two solutions: imposing out-of-sample restrictions or exploiting finite dependence. Section \ref{sec:inf_fe} returns to the infinite-horizon setting and allows for random intercepts under additional assumptions on the transition process. Finally, Section \ref{sec:kwon} shows that the number of agent types is identified in models with discrete unobserved heterogeneity.

\subsection{Non-stationary conditional choice probabilities}\label{sec:model-intercept}

In many contexts, the agent's decision rule may change between periods: for example, if the agent has a finite time-horizon, or if the state variables are subject to structural breaks. In these cases, it is natural to allow the per-period utility function and state transitions to be non-stationary, i.e., to be time-dependent. In this section I consider a finite horizon dynamic discrete choice model in which the terminal decision period is observed. For example, in a model of retirement from the labor force \parencite{rust1997social}, we may eventually observe all individuals retire. Similarly, in a model of educational attainment, we may observe all individuals reach a terminal state \parencite{heckman2018returns}. By definition, the decision-maker has no strategic influence over future utility flows to consider in the terminal period and thus a different proof strategy is adopted. This argument allows for identification of random intercepts, which was not the case in Section \ref{sec:model}.

I begin by adapting Assumptions \ref{As:inf_bl} and \ref{As:inf_puh} to the non-stationary context. In particular, by allowing the flow utility and state transition to be time-dependent.
\begin{manualassumption}{F1}\label{As:f_bl} (i) Assumptions \ref{As:inf_bl} (ii), (iv), (v) and (vi) hold.
(ii) $u_t(S_{t},\epsilon_{t},A_{t})=u_t(S_{t},A_{t})+\sum_{a\in A}\epsilon_{t,a}1[a=A_t]$. (iii) $d\Pr(S_{t+1}=s',\epsilon_{t+1}=e'\mid S_t=s,\epsilon_t=e,A_t=a)=dF_\epsilon(e')\times dF_{s_t}(s'\mid a,s)$.
\end{manualassumption}

\begin{manualassumption}{F2}\label{As:f_puh}
\begin{enumerate*}[label=(\roman*)]
    \item $S_t=(X_t^\intercal,\beta^\intercal)^\intercal\in\mathbb{R}^{k+(1+p)J}$, and $k=p+J\text{ for } p\ge0$. For each $x\in\mathrm{Supp}(X_1)$, $\beta\mid X_1=x$ admits a bounded density $f_{\beta|X_{1}}$. \item For $\gamma_{t,a}\in\mathbb{R}^{k-p}$, $u_t(s,a)=\beta_{a[1]}+x^\intercal\left(\beta_{a[-1]}^{\intercal},\gamma_{t,a}^{\intercal}\right)^{\intercal}$ where $\beta_{a}=(\beta_{a[1]},\beta_{a[-1]}^\intercal)^\intercal\in\mathbb{R}^{1+p}$. \item $d\Pr(X_{t+1}=x_{t+1} \mid A_t=a_t, X_t=x_t, \beta=b )= dF_{x_t}(x_{t+1}\mid x_t,a_t)$. \item\label{As:f_homog} $\Gamma_{T}\equiv(\gamma_{T,1}\gamma_{T,2}\cdots\gamma_{T,J})\in\mathbb{R}^{J\times J}$ is full rank.
\end{enumerate*} 
\end{manualassumption}

Assumption \ref{As:f_puh} states that permanent unobserved heterogeneity enters the model as a state variable. The restrictions are weaker than those in the infinite horizon model (Assumption \ref{As:inf_puh}). First, the permanent unobserved heterogeneity can include a random intercept. Second, there may be multiple random coefficients for each option, whereas in Section \ref{sec:model} the model was limited one action-specific random coefficient (i.e., $p=1$). This relaxation is possible due to the relatively simple structure of the terminal period CCP function. As was the case for the infinite horizon model, the support of permanent unobserved heterogeneity may be finite, but it need not be (see footnote \ref{fn:pmf}). Like Assumption I2\ref{As:inf_homog}, Assumption F2\ref{As:f_homog} imposes that the state variable cannot affect payoffs for each choice in a similar fashion. Since identification is attained from the terminal period, we place weaker restrictions on the transition $F_{x_t}$ relative to Assumption I2\ref{As:inf_tran}.

To describe the injectivity result for the finite horizon model, denote the CCP function $P_t(a,x,b)$ and let $T$ denote the decision horizon of the agent.
\begin{theorem}[Injectivity]\label{thm:inject_finite}
Assume \ref{As:f_bl} and \ref{As:f_puh}. Let $\mathcal{X}\subseteq\mathrm{Supp}(X_T)$ be a non-empty open set and let $\mu$ be a finite signed measure on $\mathrm{Supp}(\beta)$. If
    \begin{equation*}
        \int P_T(a,x,b)d\mu(b)=0 \quad \text{for almost every } (a,x)\in A\times \mathcal{X},
    \end{equation*}
    then $\mu=0$, the zero measure.
\end{theorem}

The proof of Theorem \ref{thm:inject_finite} is contained in Section \ref{sec:f_proofs}. The proof logic is rather different to Theorem \ref{thm:inject}: to show Theorem \ref{thm:inject_finite}, I show the implication directly by demonstrating that $\int P_T(a,x,b)d\mu(b)=0$ implies that the induced measure of $P_T(a,x,\beta)$ is zero. The linear utility function and distributional assumption on $F_\epsilon$ are particularly useful for this. The result then follows from \textcite{masten2018random}, Lemma 1.

As for the time stationary model, we require further restrictions on the state variable $X_t$ for identification of the DDC model. First, Assumption \ref{As:f_os} requires there be some continuous variation in $X_T$ after conditioning upon each history of actions and state variables.

\begin{manualassumption}{F3}\label{As:f_os}
For each $x_{1}\in\mathrm{Supp}(X_1)$ and ($a_{1},a_2,\ldots,a_{T-1})\in A^{T-1}$, there is $(x_{2},x_3,\ldots,x_{T-1})\in \times_{t=2}^{T-1} \mathrm{Supp}(X_t)$ such that
\begin{equation*}
\mathrm{Supp}\left(X_{T} \mid A_{T-1}=a_{t-1},X_{T-1}=x_{t-1},\dots,A_1=a_1,X_1=x_1\right)
\end{equation*}
contains a non-empty open set. Moreover, for each $t$, $\mathrm{Supp}((1,X_t))$ spans $\mathbb{R}^{k+1}$.
\end{manualassumption}

To introduce the final assumption, let  $\gamma_t=\left\{\gamma_{t,a}:a=1,\ldots,J\right\}$ and define $S_T\equiv\mathrm{Supp}\left(X_{T} \mid A_{T-1}=a_{t-1},X_{T-1}=x_{t-1},\dots,A_1=a_1,X_1=x_1\right)$ and let $E\subset S_{T}\times {A}$, $P_T(a;x,b,\gamma)$ be the model implied probability of $A_T=a$ conditional upon $X_T=x$ evaluated at $\beta=b$ and $\gamma_T=\gamma$, and $\mathcal{L}_\mathcal{A}$ be the set of bounded functions on $\mathcal{A}$. Then define the operator
\begin{equation*}
L^{E,\gamma}_{T,\beta}:\mathcal{L}_{\mathrm{Supp}(\beta)}\rightarrow\mathcal{L}_{E}\qquad[L^{E,\gamma}_{T,\beta}m](x,a)=\int P_T(a;x,b,\gamma)m(b)db.
\end{equation*} 
Denote $(L^{E,\gamma}_{T,\beta})^{-1}$ as the left inverse of $L^{E,\gamma}_{T,\beta}$.

\begin{manualassumption}{F4}\label{As:f_rank}
For every $\gamma\neq\tilde{\gamma}$, there exists $E,\tilde{E}\subseteq{S}_{T}\times{A}$ containing non-empty open sets such that the operator defined in equation \eqref{eq:fpfh_op} is injective.
\begin{equation}\label{eq:fpfh_op}
L^{E,\gamma,\tilde{E},\tilde{\gamma}}_{T,\beta}:\mathcal{L}_{\mathrm{Supp}(\beta)}\rightarrow\mathcal{L}_{\mathrm{Supp}(\beta)}\qquad[L^{E,\gamma,\tilde{E},\tilde{\gamma}}_{T,\beta}m](b)=\left[\left((L^{E,\gamma}_{T,\beta})^{-1}L^{E,\tilde\gamma}_{T,\beta}-(L^{\tilde{E},\gamma}_{T,\beta})^{-1}L^{\tilde{E},\tilde\gamma}_{T,\beta}\right)m\right](b).
\end{equation}
\end{manualassumption}

This high-level condition ensures that the parameter $\gamma_{T}$ can be identified without knowledge of the distribution of unobserved heterogeneity. A few comments on Assumption \ref{As:f_rank} are in order. First, given Theorem \ref{thm:inject_finite}, Assumptions \ref{As:f_bl}-\ref{As:f_os} imply that, for any $E$ containing a non-empty open set, $L^{E,\gamma}_{T,\beta}$ is injective so that $L^{E,\gamma,\tilde{E},\tilde{\gamma}}_{T,\beta}$ exists.  Second, the condition is stated in terms of observed objects, and thus the operator defined in Assumption \ref{As:f_rank} is identified by direct observation. Third, should Assumption \ref{As:f_rank} not hold, I show in an appendix (Lemma \ref{thm:f_norank}) that under Assumptions \ref{As:f_bl}-\ref{As:f_os} and a scale restriction on $\gamma_{T}$, that $\gamma_{T}$ and the distribution of unobserved heterogeneity are identified. 

Finally, the condition can be related to the high-level necessary conditions for identification of a common parameter in discrete choice panel data given in \textcite{johnson2004identification,chamberlain2010binary}. To describe their result, fix $x\equiv(x_{1},x_{2},\dots,x_{T})$ and for convenience let $A=\{0,1\}$ and $\gamma$ be time-invariant. Let $p(b;x,\gamma)$ be the length $2^{T}$ vector of choice probabilities $\left\{\prod_{t=1}^{T}P_{t}(a_{t},x_{t},b;\gamma):(a_{t})_{t=1}^{T}\in \{0,1\}^{T}\setminus\{0_{T}\}\right\}$ in the $(2^{T}-1)$-dimensional hypercube.  \textcite[Theorem 2.2]{johnson2004identification} states that the common parameter $\gamma$ will not be identified if the set $\{p(b;x,\gamma):b\in{S}_{\beta}\}$ does not lie in a hyperplane for some $x$. For the \textit{static} binary choice model with $T=2$, \textcite{chamberlain2010binary} shows that the hyperplane restriction is satisfied if and only if the unobserved state variables are i.i.d. extreme-value type I. Given the remarkable result of \textcite{chamberlain2010binary}, one may conjecture that the $T=2$ \textit{dynamic} binary choice model does not satisfy \textcite{johnson2004identification}'s condition and therefore $\gamma$ is not identified. If this is the case, then $\forall{x_{2}}\in\mathrm{Supp}{(X_{2})}$ and $\gamma\neq\tilde{\gamma}$, there exist some $f_{\beta|X_{1}X_{2}}\neq\tilde{f}_{\beta|X_{1}X_{2}}$ such that
\begin{equation*}
\left[L^{\mathrm{Supp}{(X_{2})},\gamma}_{2,\beta}f_{\beta|X_{1}X_{2}}(\cdot,x_{1},x_{2})\right](x_{2})=\left[L^{\mathrm{Supp}{(X_{2})},\tilde\gamma}_{2,\beta}\tilde{f}_{\beta|X_{1}X_{2}}(\cdot,x_{1},x_{2})\right](x_{2}),
\end{equation*}
where the distribution of unobserved heterogeneity $f_{\beta|X_{1}X_{2}}$ is allowed to depend on $x_{2}$ as in \textcite{johnson2004identification,chamberlain2010binary}. If the distribution is restricted to be the same for all $x_{2}\in\mathrm{Supp}{(X_{2})}$, the above condition implies that for each $\gamma\neq\tilde{\gamma}$, ${x_{2}}\in\mathrm{Supp}{(X_{2})}$, then there are some $f_{\beta|X_{1}},\tilde{f}_{\beta|X_{1}}$ that satisfy 
\begin{equation*}
\left[L^{\mathrm{Supp}{(X_{2})},\gamma}_{2,\beta}f_{\beta|X_{1}}(\cdot,x_{1})\right](x_{2})=\left[L^{\mathrm{Supp}{(X_{2})},\tilde\gamma}_{2,\beta}\tilde{f}_{\beta|X_{1}}(\cdot,x_{1})\right](x_{2}).
\end{equation*}
However, since the distribution of unobserved heterogeneity is required to be the same for all $x_{2}$, there may be some other $\tilde{x}_{2}\in\mathrm{Supp}{(X_{2})}$ such that 
\begin{equation*}
\left[L^{\mathrm{Supp}{(X_{2})},\gamma}_{2,\beta}f_{\beta|X_{1}}(\cdot,x_{1})\right](\tilde{x}_{2})\neq\left[L^{\mathrm{Supp}{(X_{2})},\tilde\gamma}_{2,\beta}\tilde{f}_{\beta|X_{1}}(\cdot,x_{1})\right](\tilde{x}_{2}).
\end{equation*}
Let $E,\tilde{E}$ be neighborhoods of $(x_{2},\tilde{x}_{2})$, respectively. In the proof to Theorem \ref{thm:f} it is shown that, without knowledge of $f_{\beta|X_{1}}$ or $\tilde{f}_{\beta|X_{1}}$, there does exist such an $\tilde{x}_{2}$ if the operator defined in equation (\ref{eq:fpfh_op}) is injective. This can be viewed as a partial converse to \textcite{johnson2004identification}'s high-level condition: in that case, without knowledge of $f_{\beta|X_{1}}$ or $\tilde{f}_{\beta|X_{1}}$, one can show there does \textit{not} exist such an $\tilde{x}_{2}$ if their `rank' condition does not apply. In principle, the logic of Assumption \ref{As:f_rank} can be extended to the general discrete choice panel model of \textcite{johnson2004identification}, if the distribution of unobserved heterogeneity is required to be independent of covariates. To state the theorem denote $\gamma=\{\gamma_t:t=1,\ldots,T\}$.

\begin{theorem}[Identification]\label{thm:f}
Assume the distribution of $(X_{t},A_{t})_{t=1}^{T}$ is observed for $T\ge2$, generated from agents solving the model of equation (\ref{eq:inf_prob}) satisfying assumptions \ref{As:f_bl}-\ref{As:f_rank}. Then $(\gamma,f_{\beta|X_{1}})$ is point identified.
\end{theorem}
Section \ref{sec:f_proofs} contains the proof of Theorem \ref{thm:f}.

\subsection{Non-stationary conditional choice probabilities without the terminal period}\label{sec:no_term}

In many empirical settings, the decision horizon of the agent extends beyond the period of observation. For example, a worker's labor force participation decisions may not be observed for their entire working life. This poses an issue for identification since in-sample decisions reflect payoff parameters for both in- and out-of-sample time periods. This section provides two solutions for this issue. The first approach is to impose restrictions on out-of-sample payoffs. Section \ref{sec:f_nt} adopts this approach and shows that the model without random intercepts is identified. 

The second approach is to use a property of the state transition known as `finite dependence', which occurs if multiple sequences of actions leads to the same distribution of the state variable \parencite{arcidiacono2011practical}. Finite dependence limits the number of out-of-sample time periods that affect in-sample decisions. Section \ref{sec:fh_fd} considers a model that exhibits finite dependence, and shows a binary choice model with random coefficients is identified.

For both approaches, I consider a model that satisfies the following condition:
\begin{manualassumption}{F2${}^\prime$}\label{As:f_nt_puh}
(i) Assumptions \ref{As:inf_puh}(i) and (v) hold. (ii) For each $t$, ${u_t(s,a)=x^\intercal\left(\beta_{a},\;\gamma_{t,a}^\intercal\right)^\intercal},$ for $\gamma_{a,t}\in\mathbb{R}^{J}$. 
(iii) $d\Pr(X_{t+1}=x' \mid A_t=a, X_t=x, \beta=b )= dF_{x_t}(x_{t+1}\mid x_t,a_t)$.
(iv)
$\Gamma_{t}\equiv(\gamma_{t,1}\gamma_{t,2}\cdots\gamma_{t,J})\in\mathbb{R}^{J\times J}$ is full rank.
\end{manualassumption}
Analagously to the Sections \ref{sec:model} and \ref{sec:model-intercept}, Assumptions \ref{As:f_bl} and \ref{As:f_nt_puh} are sufficient for injectivity of the integral operator with kernel function $P_t(a,x,b)$.

\subsubsection{Out of sample restrictions}\label{sec:f_nt}

Let $T$ denote the final observed period and $T_1>T$ denote the final decision period of the agent. Since we do not observe behavior
in periods $(T+1,\dots,T_{1})$, the following restriction is placed on out-of-sample behavior:

\begin{manualassumption}{F5}\label{As:f_nt_oos}
For all $t\in(T+1,\dots,T_{1})$, $\gamma_{t}=\gamma_{T}$ and $dF_{x_{t-1}}(x'|x,a)=dF_{x_{T-1}}(x'|x,a)$.
\end{manualassumption}

With these assumptions and a support condition on $X_{t}$ related to Assumption \ref{As:inf_os}, identification results follows as a Corollary of Theorem
\ref{thm:inf}. The proof is found in Section
\ref{ssec:f_proofs_nt}.

\begin{corollary}\label{thm:f_nt}
Assume the distribution of $(X_{t},A_{t})_{t=1}^{T}$ is observed for $T=4$, generated from agents solving the model of equation \eqref{eq:inf_prob} satisfying Assumptions \ref{As:f_bl}, \ref{As:f_nt_puh}, \ref{As:f_nt_os} and \ref{As:f_nt_oos}. Then $(\gamma,f_{\beta|X_{1}})$ is point identified.
\end{corollary}

\subsubsection{Finite dependence}\label{sec:fh_fd}

A DDC model exhibits finite dependence if there are multiple sequences of actions that yield the same distribution over the state variable. Finite dependence is useful for estimation as it allows the continuation value to be expressed in terms of CCPs \parencite{arcidiacono2011practical}. This fact also makes finite dependence useful for identification in models without permanent unobserved heterogeneity, as it reduces the number of periods of out-of-sample behavior that must be assumed known \parencite[Section 3.3]{arcidiacono2020identifying}. 

In this section I show a similar feature is present for models with continuous permanent unobserved heterogeneity. In particular, I assume the transition function exhibits a special case of finite-dependence: the renewal action. The canonical example of renewal is machine replacement, but models of turnover and job matching also display this pattern \parencite{arcidiacono2020identifying}. This idea is formalized in the next assumption, which, in addition to a support condition, is sufficent for identification.

\begin{manualassumption}{F6}\label{ass:fh_fd}
For each $t$, $\exists\; a\in \mathrm{Supp}(A_{t})$ such that $dF_{x_t}(x'|x,a)=dF_{x_t}(x'|\tilde{x},a)$ for all $x'$ and $x,\tilde{x}\in\mathrm{Supp}(X_t)$.
\end{manualassumption}

\begin{corollary}\label{thm:fh_fd}
Assume the distribution of $(X_{t},A_{t})_{t=1}^{4}$ is observed, generated from agents solving the model of equation (\ref{eq:inf_prob}) with $J=1$ and satisfying assumptions \ref{As:f_bl}, \ref{As:f_nt_puh}, \ref{As:f_nt_os_fd}, and \ref{ass:fh_fd}. Then $(\gamma,f_{\beta|X_{1}})$ is point identified.
\end{corollary}
Section \ref{sec:pf_fd} contains the proof to Corollary \ref{thm:fh_fd}, whose substance is adapted from the proof of Theorem \ref{thm:inf}.

\subsection{Random intercepts in a stationary model}\label{sec:inf_fe} 

This section considers identification of an infinite-horizon DDC model with random intercepts. It shows point identification can be attained under an additional restriction on the state transition. Specifically, there must be some point in the support of $X_{t}$ for which the state transition is not choice dependent. For instance, the machine replacement model of \textcite[Example 9]{kasahara2009nonparametric} displays this property. Before introducing the restriction on the state transition, the next assumption states that the permanent unobserved heterogeneity enters the model as a random intercept:

\begin{manualassumption}{I2${}^\prime$}\label{As:inf_puh_fe}
\begin{enumerate*}[label=(\roman*)]
\item Assumptions \ref{As:inf_puh} (i), (iii) and (iv) hold.
\item $u(s,a)=\beta_a + x^\intercal\gamma_{a}.$
\end{enumerate*} 
\end{manualassumption}

The next assumption strengthens Assumption \ref{As:inf_os} by requiring the state transition to be constant across choices:

\begin{manualassumption}{I3${}^\prime$}\label{As:inf_os_fe} For all $x_1\in\mathrm{Supp}(X_1)$, $\exists ~a_1\in\mathrm{Supp}(A_1)$ such that: (i) $\mathrm{Supp}(X_{2}\mid{X}_1=x_1,A_1=a_1)$ and $\mathrm{Supp}(X_{3}\mid{X_2}\in{\mathrm{Supp}(X_2\mid X_{1}=x_1,A_1 =a_1)},A_2=0)$ contain non-empty open sets for which all elements $x$ satisfy $dF_x(x'\mid \tilde{a},x)=dF_x(x'\mid a, x)$ for all $x'$, $a$ and $\tilde{a}$; (ii) $S_3\equiv{\mathrm{Supp}}\left((1,X_3)\mid{X_2}\in{\mathrm{Supp}(X_2\mid X_{1}=x_1,A_1 =a_1)},A_2=0\right)$ and $\cap_{a_3\in{\mathrm{Supp}}(A_3)}\mathrm{Supp}\left((1,X_4)\mid{X_3}\in{S_3},A_3=a_3\right)$ span $\mathbb{R}^{k+1}$.
\end{manualassumption}

\begin{corollary}\label{thm:inf_fe}
Assume the distribution of $(X_{t},A_{t})_{t=1}^{T}$ is observed for $T\ge4$, generated from agents solving the model of equation (\ref{eq:inf_prob}) satisfying assumptions \ref{As:inf_bl}, \ref{As:inf_puh_fe} and \ref{As:inf_os_fe}. Then $(\gamma,f_{\beta|X_{1}})$ is point identified.
\end{corollary}

The proof to Corollary \ref{thm:inf_fe} is contained in Section \ref{pf:inf_fe}. It follows from the proofs of Theorems \ref{thm:inf} and \ref{thm:inject_finite}.

\subsection{Identifying the number of mixture components}\label{sec:kwon}

In the existing DDC literature, it is common to assume permanent unobserved heterogeneity is discrete. When this assumption is made, a key parameter is the number of support points of permanent unobserved heterogeneity. In practice, it is common to assume the number of support points is known, although there are methods to identify a lower bound on the number of support points \parencite{kasahara2009nonparametric,kasahara2014non,kwon2019estimation} which have been applied in economics \parencite{igami2016unobserved}. However, in general, these methods can only identify the number of support points if an upper bound is known. This is because there is no guarantee \textit{a priori} that there is enough variation in the data and structure on the model to to identify any arbitrarily large number of types. Intuitively, the population likelihood may be flat as a mixture component is added, but this may be because the initial likelihood had the true number of mixture components \textit{or} because the models with and without an additional mixture component are observationally equivalent. Technically, this issue can be resolved by imposing an injectivity condition, i.e., a rank assumption on an unobserved matrix \parencites[Proposition 3]{kasahara2009nonparametric}[Assumption 2.1]{kwon2019estimation}.

The purpose of this section is to show the models of Theorem \ref{thm:inf} and Corollary \ref{thm:f_nt} satisfy a condition equivalent to \textcite[Assumption 2.1]{kwon2019estimation} when the distribution of unobserved heterogeneity is discrete. This means the number of types is identified, without knowledge of an upper bound on the number of types.

\begin{corollary}\label{thm:kwon}
Assume the distribution of $Y=(X_{t},A_{t})_{t=1}^{T}$ is observed for $T\ge3$, generated from the DDC model satisfying either Assumptions \ref{As:inf_bl}-\ref{As:inf_os} or Assumptions \ref{As:f_bl}, \ref{As:f_nt_puh}, \ref{As:f_nt_os} and \ref{As:f_nt_oos}. In addition, suppose that the support of $\beta|X_1$ has $R<\infty$ points of support. Then, for any fixed $x_1\in \mathrm{Supp}(X_1)$, $R$ is identified as the rank (defined as the dimension of the range) of the operator
\begin{equation*}
[Lu](x_{3})= \int{u}(x_{2})\frac{f_{A_{3}A_{2}A_{1}X_{3}X_{2}|X_{1}}(0,0,0,x_{3},x_{2},x_{1})}{F_{x_{3}}(x_{3}|x_{2},0)F_{x_{2}}(x_{2}|x_{1},0)}dx_{2}.
\end{equation*}
\end{corollary}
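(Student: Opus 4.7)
The plan is to decompose the observed kernel of $L$ using the model's conditional independence structure into a finite-rank sum, and then to invoke the injectivity arguments already established in the proofs of Theorems \ref{thm:inf} and \ref{thm:f_nt} to show that the dimension of this decomposition is exactly $R$. The rank is directly observable from the data (since the kernel is), so rank equals $R$ gives identification.

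First, I would apply the law of total probability, conditioning on $\beta = b_r$, together with the Markov/conditional independence assumptions \ref{As:inf_bl}(iii) (and analogously \ref{As:f_bl}(iii)), to write
\[
f_{A_3A_2A_1X_3X_2|X_1}(a_3,a_2,a_1,x_3,x_2;x_1) = \sum_{r=1}^R \pi_r(x_1)\, P(a_1;x_1,b_r)\, F_{x_2}(x_2;x_1,a_1)\, P(a_2;x_2,b_r)\, F_{x_3}(x_3;x_2,a_2)\, P(a_3;x_3,b_r),
\]
where $\pi_r(x_1) = \Pr(\beta = b_r \mid X_1 = x_1)$, $\{b_r\}_{r=1}^R$ is the support of $\beta \mid x_1$, and I have used that the state transitions are common across types. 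Dividing by the two transition terms in the denominator of the kernel of $L$ makes them cancel, leaving
\[
[Lu](x_3) = \sum_{r=1}^R c_r\, P(a_3;x_3,b_r) \int u(x_2)\, P(a_2;x_2,b_r)\,dx_2, \qquad c_r := \pi_r(x_1)\, P(a_1;x_1,b_r).
\]
Each $c_r>0$: positivity of $\pi_r(x_1)$ follows because $b_r$ lies in the conditional support of $\beta \mid x_1$, and positivity of $P(a_1;x_1,b_r)$ follows from the EV1 assumption. This representation shows $L = A \circ B$ where $B\colon u\mapsto \bigl(\int u(x_2)P(a_2;x_2,b_r)\,dx_2\bigr)_{r=1}^R$ and $A\colon (v_r)\mapsto \sum_r c_r v_r P(a_3;\cdot,b_r)$, so immediately $\mathrm{rank}(L) \le R$.

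To upgrade to equality I need linear independence of the two families $\{P(a_3;\cdot,b_r)\}_{r=1}^R$ and $\{P(a_2;\cdot,b_r)\}_{r=1}^R$ on the relevant support sets. Suppose $\sum_r \alpha_r P(a;\cdot,b_r) \equiv 0$; this is equivalent to the discrete signed measure $\mu = \sum_r \alpha_r \delta_{b_r}$ being annihilated by the integral operator with kernel $P(a;x,b)$ acting on the $b$ variable. The proofs of Theorems \ref{thm:inf} and \ref{thm:f_nt} establish precisely that this operator is injective, via a universal approximation property that the CCP family is dense in $L^2(S_\beta)$, combined with real analyticity. Applied separately to $P(a_2;\cdot,b_r)$ and $P(a_3;\cdot,b_r)$, both families are linearly independent, so $A$ and $B$ each have rank $R$; together with $c_r>0$ for all $r$, this yields $\mathrm{rank}(L)=R$.

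The main obstacle will be the passage from the $L^2$-density / injectivity statement already established in the main theorems to the required injectivity on discrete signed measures $\mu = \sum_r \alpha_r \delta_{b_r}$. This is where the real-analyticity of $P(a;\cdot,b)$ in $x$ (established en route to proving Theorems \ref{thm:inf} and \ref{thm:f_nt}) does the real work: a linear dependence on a set with non-empty interior would, by analytic continuation, extend globally and contradict the density of the CCP span in $L^2(S_\beta)$. Once this is in hand, verifying strict positivity of each $c_r$ and checking that the argument goes through for the specific fixed $(a_1,a_2,a_3,x_1)$ chosen by the econometrician should be routine.
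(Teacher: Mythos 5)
Your proposal is correct and takes essentially the same route as the paper, whose proof simply observes that $L=L_{3,\beta}D_{\beta}L_{\beta,2}$ with each factor a rank-$R$ matrix in the finite-support case; your $A\circ B$ factorization with $c_{r}=\pi_{r}(x_{1})P(a_{1};x_{1},b_{r})>0$ is exactly that decomposition. The ``main obstacle'' you flag --- passing from $L^{2}$-injectivity to linear independence of the discrete family $\{P(a;\cdot,b_{r})\}_{r=1}^{R}$ --- is already resolved in the finite-support paragraph at the end of Part I of the proof of Theorem \ref{thm:inf}, which constructs points $z_{n,s,t}$ making the matrix $(P(0;(z,w_{t}),b_{s}))$ full rank.
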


The proof to Corollary \ref{thm:kwon} is found in Section \ref{pf:kwon}. The result means that the techniques of \textcite{kasahara2014non,kwon2019estimation} can be used to consistently estimate the number of types should the applied econometrician wish to maintain the standard assumption that permanent unobserved heterogeneity is discrete.\footnote{The model in Corollary \ref{thm:kwon} can be directly adapted to the general frameworks of \textcite{kasahara2014non,kwon2019estimation}. See, in particular, \textcite{kwon2019estimation} Equation 2.1 and \textcite{kasahara2014non} Equation 2.} These techniques also give rise to valid hypothesis tests regarding the number of types, including testing the null of type degeneracy (that is, $R=1$). Broadly speaking, these estimators consist of forming a matrix of observed choice probabilities with values of $X_{3}$ varying over the rows, and $X_{2}$ over the columns. Corollary \ref{thm:kwon} means that, at the population level, the rank of the matrix equals the true number of types.

\section{Estimation}\label{sec:estimation}

This section considers consistent estimation of the model parameters in a short panel. The distribution of $Y\equiv(A_{t},X_{t})_{t=1}^{T}$ can be written as
\begin{equation*}
\int\prod_{t=2}^{T}\left(P_{t}(a_{t},x_{t},b;\gamma,F_{x})F_{x_{t}}(x_{t}|x_{t-1},a_{t-1})\right)P_{1}(a_{1},x_{1},b;\gamma,F_{x})F_{x_{1}}(x_{1})dF_{\beta|X_{1}}(b,x_{1}),
\end{equation*}
where $F_{\beta|X_{1}}(b,x_{1})$ is the cumulative distribution function of $\beta$ conditional upon $X_{1}=x_{1}$, $F_{x_{1}}$ is the marginal distribution of $X_{1}$ and the dependence of the CCPs on $(\gamma,F_x)$ is made explicit. I propose two-step sieve M-estimation based on the above expression. The first step consists of estimating the state transitions and marginal distribution of the initial state, $F_{x}=\{F_{x_t}:t=1,\ldots,T\}$. The second step consists of forming the pseudo-likelihood function using the fact that the CCPs $P_{t}$ are known up to the state transition and payoff parameter $(F_{x},\gamma)$, and using sieve M-estimation methods to estimate ($\gamma,F_{\beta|X_{1}}$).

It is of course possible to estimate the model in a single step as a sieve maximum likelihood problem. The advantage of the proposed two-step approach is computational: by treating $F_{x}$ as fixed in the second step, computationally advantageous methods for approximating the value function may be used, such as \textcite{kristensen2019solving}.

Although I show consistency for a general sieve space (Section \ref{sssec:gss}), this may be  computationally burdensome to implement, since estimation requires computing the CCPs for every point in the support of the sieve. To circumvent this issue, I suggest a `fixed grid' estimator \parencite{heckman1984method} which reduces the computational burden by having a finite number of support points (Section \ref{sssec:foms}). Given these results, the practioner's decision to approximate $F_{\beta|X_1}$ by a continuous function or by the `fixed grid' can be viewed as a choice of tuning parameter, rather than an identifying assumption.

In this section, I focus on estimating the cumulative distribution function of $\beta$. While it would be possible to present conditions for consistent estimation of the density function, smoothness restrictions would rule out the possibility that the type distribution has discrete support, which is the standard assumption in the literature. Moreover, focusing on the distribution function of $\beta$ enables the choice of the piecewise constant sieve space described in Section \ref{sssec:foms}, which has particular computational advantages.

As a final comment, in practice there will be an approximation error in the evaluation of the CCPs. This problem is inherent to dynamic discrete processes with large state spaces, and has received significant attention in the recent literature \parencite{rust2008dynamic,kristensen2019solving}. I assume away the effect of these errors on estimation --- that is, that the approximation error is negligible relative to sampling error. In principle, the results of \textcite{kristensen2019solving} could be used to explicitly consider the effect of value function approximation error on estimation, though I do not pursue this here. Of course, the approximation error can be made arbitrarily small at increased computational cost.

\subsection{A general two-step seminonparametric estimator}\label{sssec:gss}

In this section, I briefly outline the two-step sieve M-estimator and present the general consistency result. Denote the true parameters as $\theta_{0}=(F_{x},\gamma,F_{\beta|X_{1}})\in\Theta=\mathcal{F}\times\Gamma\times\mathcal{M}$, where $\mathcal{F}$ is the space of state transitions, $\Gamma\subseteq\mathbb{R}^{\dim\gamma}$, and $\mathcal{M}$ is the space of distribution functions on $\mathrm{Supp}(\beta)$ conditional upon $x\in\mathrm{Supp}(X_1)$. The first step consists of forming a consistent estimator $\hat{F}_{x}$ for the state transition $F_{x}$. Since the state transition is directly observed, standard non-parametric methods are available. For the second step, the log-likelihood contribution of the $i$th observation is
\begin{equation*}
\psi(y_{i},\hat{F}_{x},\gamma,F_{\beta|X_{1}})\equiv\log\int\prod_{t=1}^{T}P_{t}(a_{i,t},x_{i,t},b;\hat{F}_{x},\gamma)dF_{\beta|X_{1}}(b,x_{i1}),
\end{equation*}
where $P_{t}(a,x,b;\hat{F}_{x},\gamma)$ is the model implied probability of observing choice $a$ in period $t$ conditional upon state $x$ and permanent unobserved heterogeneity $b$, evaluated at the first-step estimate $\hat{F}_{x}$ and candidate parameter $\gamma$. Given a sieve space $\mathcal{M}_{n}$, which approximates $\mathcal{M}$ arbitrarily well for large $n$, the second step estimator is defined as
\begin{equation}\label{eq:smle}
\frac{1}{n}\sum_{i=1}^{n}\psi(y_{i},\hat{F}_{x},\hat{\gamma},\hat{F}_{\beta|X_{1}})\ge\sup_{(\gamma,F)\in\Gamma\times\mathcal{M}_{n}}\frac{1}{n}\sum_{i=1}^{n}\psi(y_{i},\hat{F}_{x},\gamma,F)-o_{p}(1/n)
\end{equation}
The following result states that under standard regularity conditions, the estimator is consistent.
\begin{theorem}\label{thm:est}
Let $(A_{i,t},X_{i,t}:{t}=1,\dots,T)_{i=1}^{n}$ be i.i.d. data generated from the DDC model satisfying either Assumptions \ref{As:inf_bl}-\ref{As:inf_os} or Assumptions \ref{As:f_bl}-\ref{As:f_rank}. If Assumptions \ref{As:est_fs}-\ref{As:est_uc} hold, then the estimator $(\hat{\gamma},\hat{F}_{\beta|X_{1}})$ defined in equation \eqref{eq:smle} is consistent for $(\gamma,F_{\beta|X_{1}})$.
\end{theorem}
The full statement of Theorem \ref{thm:est} and its proof are contained in Appendix \ref{ssec:est}.

\subsection{Fixed grid estimation}\label{sssec:foms}

In this section I propose a particular choice of sieve which has the advantage of being simple to implement: the first-order monotone spline sieve. This is a popular choice of sieve for seminonparametric models, see for example \textcite{heckman1984method,chen2007large,fox2016simple}. To define the sieve, let $\mathcal{B}_{n}=\{b_j:j=1,\ldots,B(n)\}$ be a set of knots that partition $\mathrm{Supp}(\beta)$ and $\mathcal{X}_{n}=\{\mathcal{X}_{k,n} :k=1,\ldots X(n)\}$ be a partition of $\mathrm{Supp}(X_1)$. The sieve space $\mathcal{M}_{n}$ is defined as follows:
{\small\begin{equation}\label{eq:fomss}
\left\{F\colon\mathrm{Supp}((\beta,X_1))\rightarrow[0,1]:F(b,x_{1})=\sum_{j=1}^{B(n)}\sum_{k=1}^{X(n)}P_{j,k}1(b_{j}\le{b})1(x_{1}\in\mathcal{X}_{k,n}),~P_{j,k}\ge0,\sum_{j=1}^{B(n)}P_{j,k}=1\right\},
\end{equation}}
where the sets $(\mathcal{B}_{n},\mathcal{X}_{n})$ are tuning parameters. For a given choice of tuning parameters, an element of $\mathcal{M}_{n}$ consists of $X(n)$ piecewise constant (step) functions in $b$, indexed by the partition cells $\mathcal{X}_n$, each such function having jumps of size $P_{j,k}$ at point $b_{j}$. The computational advantages of this sieve are clear: to find the supremum in (\ref{eq:smle}), for each $x_{1}$, the CCP functions need only be evaluated for the values $b_{j}\in\mathcal{B}_{n}$. This would not be the case if the sieve space consisted of functions that were continuous in $b$.

A theoretical advantage of this sieve space is that many of the high-level conditions for consistency are attained as long as the number of knots does not grow too fast. See Appendix \ref{ssec:fge} for details.

\begin{theorem}\label{thm:est_fomss}
Let $(A_{i,t},X_{i,t}:{t}=1,\dots,T)_{i=1}^{n}$ be  i.i.d. data generated from the DDC model satisfying either Assumptions \ref{As:inf_bl}-\ref{As:inf_os} or Assumptions \ref{As:f_bl}-\ref{As:f_rank}. If Assumptions \ref{As:est_fs}, \ref{As:est_fomss} and \ref{As:est_compl} hold, then the estimator $(\hat{\gamma},\hat{F}_{\beta|X_{1}})$ defined in equation \eqref{eq:smle} is consistent for $(\gamma,F_{\beta|X_{1}})$.
\end{theorem}

To implement the estimator, the number and location of grid points must be chosen. For consistency, it is enough that $B(n)X(n)\log(B(n)X(n))=o(n)$ and that the grid points become dense in the support of $(\beta,X_1)$. In principle, convergence rates for this estimator could be derived to determine optimal growth rates for $B(n),X(n)$.

For computation, it may be attractive to use profiling. In particular, to form  ($\hat{\gamma},\hat{F}_{\beta|X_{1}}$), fix $\gamma$ and let
\begin{equation*}
\hat{F}_{\beta|X_{1}}(\gamma)=\arg\sup_{F\in\mathcal{M}_{n}}\frac{1}{n}\sum_{i=1}^{n}\psi(y_{i},\hat{F}_{x},\gamma,F).
\end{equation*}
For $\mathcal{M}_n$ as in equation  \eqref{eq:fomss}, this is a convex optimization problem, with a unique global optimum that can be computed efficiently (e.g., \textcite{koenker2014convex}). The profile estimator is formed as
\begin{equation*}
\frac{1}{n}\sum_{i=1}^{n}\psi(y_{i},\hat{F}_{x},\hat{\gamma},\hat{F}_{\beta|X_{1}}(\gamma))\ge\sup_{\gamma\in\Gamma}\frac{1}{n}\sum_{i=1}^{n}\psi(y_{i},\hat{F}_{x},\gamma,\hat{F}_{\beta|X_{1}}(\gamma))-o_{p}(1/n).
\end{equation*}

\section{Simulations}\label{sec:sims}

This section investigates the estimator of Section \ref{sssec:foms} in a Monte Carlo simulation. The main goals of this section are twofold: first, to explore the finite sample performance of the estimator; and, second, to provide empirical support for the asymptotic results of Section \ref{sec:estimation}. I simulate data using a simple labor force participation model based on \textcite[Section 6]{altuug1998effect}, which also acts as a basis for the empirical illustration in Section \ref{sec:applic}.

In each period, each individual decides whether or not to enter the labor force, upon observation of the state variable. Thus $A=\{0,1\}$, with $a_{t}=1$ representing an individual decision to enter the labor force at time $t$. The period payoff from entering the labor market depends on the observed state variable $x_t=(x_{t,1},x_{t,2})^\intercal\in\mathbb{R}^2$, the entry-specific shock $\epsilon_{t,1}$, and  individual-specific labor productivity $\beta$ as follows:
\begin{equation*}
\beta x_{t,1}+\gamma{x}_{t,2}+\epsilon_{t,1}
\end{equation*}
Following the model of \textcite{altuug1998effect}, $x_{t,1}$ can be interpreted as an average consumption value (see Section \ref{sec:applic} for details) and $x_{t,2}$ is equal to the income of the primary earner in the household. The period payoff from not entering is $\epsilon_{t,0}$. The random preference shock $\epsilon_{t,a}$ is assumed to be distributed  extreme value type I and independent across time, choices and agents. Further, the agents' time horizon is assumed to be infinite with exponential discount factor 0.9. In addition, I assume that $\beta$ is independent of $X_1$ and consider three different choices for its distribution. In DGP 1, $\beta$ follows a mixture of three truncated normal distributions:
\begin{equation*}
\beta\sim
\begin{cases}
\mathcal{N}_{tr}(1.5,1)&\text{with prob. }1/3\\\mathcal{N}_{tr}(2.5,0.25)&\text{with prob. }1/3\\\mathcal{N}_{tr}(3.5,1)&\text{with prob. }1/3
\end{cases},
\end{equation*}
where $\mathcal{N}_{tr}(\mu,\sigma)$ is the truncated normal distribution with parameters ($\mu,\sigma$), minimum value $0$ and maximum value $50$. In DGPs 2 and 3, I assume $\beta$ follows a uniform distribution on $[0,5]$ and $\{1,2.5,4\}$, respectively. I assume that the first period observed state variable is drawn independently from the uniform distribution on $[0,4]\times[0,4]$, and that $F_x(x'|x,a)=F_1(x'_1|x,a)F_2(x'_2|x,a)$,
where $F_1$ and $F_2$ are truncated normal distributions with means $x_1/(a+2)$ and $(x_1+x_2)/(a+2)$ respectively, unit standard deviations and truncated to the interval $[0,4]$. I set $\gamma=2$.


The simulation results are the average of $1{,}000$ i.i.d. datasets $(a_{i,t},x_{i,t}:t=1,\dots,8)_{i=1}^{n}$ drawn from this model.\footnote{In practice,
the state space $[0,4]\times[0,4]$ and support of $\beta$ are discretized to solve the model. The discrete state space and support of $\beta$ have 400 and 1{,}000 points of support respectively.} Results are presented for four sample sizes: $n=100,500,1{,}000$, and $10{,}000$. For estimation I choose the number of grid points equal to $4n^{1/4}$ (i.e., $13, 19, 23$ and $40$), which satisfies the rate conditions required for Theorem \ref{thm:est_fomss}, and consider a grid of equally spaced points between 0 and 6. For estimation, I assume knowledge of the discount factor, the state transition $F_x$, and impose that the initial state is independent of $\beta$, leaving the unknown parameters as $(\gamma,F_\beta)$, the homogeneous effect of spousal income and the distribution of labor productivity.

\begin{table}[ht]
\centering
\begin{tabular}{ll|ccc|c|cc|ccc}
  \hline \hline& & \multicolumn{3}{c|}{$\gamma$} &  &  &  &  \multicolumn{3}{c}{No. types} \\
  & $n$ & Bias & Std & RMSE & Time & MISE & MIAE & Mean & Min & Max \\ \hline \hline  \multirow{4}{*}{DGP 1} & $100$ & -0.323 &  1.645 &  1.677 &  20 & 0.075 & 0.458 &  5.2 &  2 &  9 \\ 
   & $500$ & -0.222 &  1.694 &  1.708 &  22 & 0.039 & 0.333 &  6.8 &  4 & 10 \\ 
   & $1{,}000$ & -0.096 &  1.688 &  1.691 &  25 & 0.032 & 0.301 &  7.6 &  5 & 11 \\ 
   & $10{,}000$ &  0.070 &  1.646 &  1.647 & 143 & 0.020 & 0.240 & 10.1 &  6 & 21 \\ 
   \hline
\multirow{4}{*}{DGP 2} & $100$ & -0.347 &  1.679 &  1.715 &  21 & 0.070 & 0.479 &  5.5 &  3 &  8 \\ 
   & $500$ & -0.191 &  1.779 &  1.789 &  22 & 0.036 & 0.350 &  7.1 &  4 & 10 \\ 
   & $1{,}000$ & -0.121 &  1.751 &  1.755 &  26 & 0.029 & 0.312 &  7.8 &  4 & 11 \\ 
   & $10{,}000$ &  0.027 &  1.663 &  1.663 & 168 & 0.018 & 0.246 & 10.3 &  7 & 23 \\ 
   \hline
\multirow{4}{*}{DGP 3} & $100$ & -0.408 &  1.811 &  1.857 &  22 & 0.110 & 0.534 &  5.1 &  2 &  9 \\ 
   & $500$ & -0.332 &  1.822 &  1.852 &  23 & 0.062 & 0.361 &  6.1 &  3 & 10 \\ 
   & $1{,}000$ & -0.183 &  1.802 &  1.811 &  28 & 0.046 & 0.291 &  6.5 &  3 & 10 \\ 
   & $10{,}000$ & -0.206 &  1.639 &  1.652 & 145 & 0.018 & 0.136 &  7.3 &  4 & 14 \\ 
   \hline
\end{tabular}
\caption{\small {Simulation results for estimation of $\gamma$ and $F_{\beta}$ for each DGP and sample size. ``$\gamma$'' denotes results for estimation of $\gamma$, which includes $\sqrt{n}$ scaled average empirical bias (``Bias''), standard deviation (``Std'') and root mean-squared error (``RMSE''). ``Time'' denotes median computation time in seconds. ``MISE'' denotes empirical mean integrated squared error, ``MIAE'' denotes empirical mean integrated absolute error, and ``No. types'' denotes the number of support points.}} 
\label{table:sim2}
\end{table}
Table \ref{table:sim2} presents results for the estimator of $(\gamma,F_\beta)$, in addition to computation times. First consider results for $\gamma$. Here, empirical variance is significantly larger than empirical bias, which diminishes with sample size. Scaled empirical mean squared error is largely flat across sample sizes. In terms of computational burden, the fixed grid estimator takes around 30 seconds to run for the smaller sample sizes, though it takes around 2 minutes for $n=10{,}000$.

Turning to results for the estimation of $F_{\beta}$, both measures of integrated error diminish with sample size.\footnote{Integrated absolute and squared error for simulation run $m$ with estimate $\hat{F}_{\beta,m}$ is  $\int|\hat{F}_{\beta,m}(b)-F_{\beta}(b)|db$ and $\int\left(\hat{F}_{\beta,m}(b)-F_{\beta}(b)\right)^{2}db$, respectively.} The number of grid points increases slowly with sample size --- indeed slower than the growth of the number of support points selected by the estimator. For example, in DGP 1 for $n=100$, on average 5.2 points are selected. This increases to 10.1 for the large sample size. This pattern is broadly similar to previous simulation results for a parametric variant of this estimator \parencite{fox2011simple}. The number of support points chosen is similar between DGP 1 and DGP 2, but fewer points are chosen in the DGP with discrete types (DGP 3). Additional simulation results are presented in Appendix \ref{sec:sims_addl}.

\section{Empirical illustration}\label{sec:applic}

This section revisits the female labor supply model of \textcite{altuug1998effect}. I combine the life-cycle model of \textcite{altuug1998effect} with the identification results of Section \ref{sec:model} to estimate the distribution of labor productivity from data on labor force participation and perform a counterfactual exercise to measure how the response to a wage increase varies across the productivity distribution.

\subsection{Framework}

\textcite{altuug1998effect} introduces a framework to understand female labor supply that takes into account aggregate shocks and time non-separable preferences. In their model, agents gain utility from consumption and leisure. Under their specification of consumption and Pareto optimality, individual $i$ at time $t$ generates utility from consumption as:
\begin{equation}\label{amut:cons}
\eta_{i}\lambda_{t}\beta_{i}\omega_{t}\exp(\gamma_{3}^\intercal x_{Wi,t})l_{i,t}.
\end{equation}
The term $(\eta_{i}\lambda_{t})$ is the shadow value of consumption, which is estimated from data on consumption. The term ($\beta_{i}\omega_{t}\exp(\gamma_{3}'x_{Wit})l_{i,t}$) represents an individual's predicted earnings,\footnote{For clarity, in this section I will denote permanent unobserved heterogeneity as $\beta_i$.} which is equal to the amount of time they spend working conditional on participating, $l_{i,t}$, multiplied by their marginal product. The individual-specific marginal product of labor consists of unobserved aggregate and individual productivity effects ($\omega_{t},\beta_{i}$) in addition to a component that depends on covariates $x_{Wi,t}$. These terms are estimated from the wage equation, which is as follows:
\begin{equation*}
\tilde{w}_{i,t}=\omega_{t}\beta_{i}\exp(\gamma_{3}^\intercal x_{Wi,t})\exp(\tilde{\epsilon}_{i,t}).
\end{equation*}

\textcite{altuug1998effect} consider two estimators for the individual-specific productivity $\beta_{i}$. First, they use the fixed effects estimator from the wage equation above. Of course, in the asymptotic framework considered in this paper where $n$ is large but $T$ is fixed, this estimator is subject to the incidental parameters problem and is not consistent in general. For the second estimator, the authors assume that the fixed effect is an unknown function of observables, and then estimate that function non-parametrically. The observed variables consists of demographic data such as race, marital status and education levels. This estimator will be inconsistent if the set of observed variables is misspecified---that is, if individual productivity cannot be written as a function of observed data. The identification results of Section \ref{sec:model} obviate the need to estimate individual-specific productivity from the wage equation. Instead, $\beta_{i}$ can be interpreted as a random coefficient in the discrete choice model of labor force participation elaborated below.

\subsection{Model}

Suppose the per-period payoff from entering the labor market for individual of type $\beta_i$ is:
\begin{equation}\label{eq:myaltuug}
x_{i,t}^\intercal\left(\beta_i,\gamma^{\intercal}\right)^{\intercal}+\epsilon_{i,t,1}
\end{equation}
with $x_{i,t}=({z}_{i,t},1,\text{hinc}_{i,t},\text{age}_{i,t},\text{kids}_{i,t},\text{educ}_{i,t})$. Here ${z}_{i,t}$ is constructed following the approach of \textcite{altuug1998effect}, that is ${z}_{i,t}=\eta_{i}\lambda_{t}\omega_{t}\exp(\gamma_{3}^\intercal x_{Wi,t}){l}_{i,t}$ where each component is estimated from the consumption/wage regressions described above (see Appendix \ref{app:data_construct} for details). The remaining components of $x_{i,t}$ are, respectively, a constant term, annual head-of-household income, an age variable, whether there is a child in the household, and an education variable.\footnote{For simplicity, the age and education variable are dummies indicating whether the individual is over 35 year old and whether they have completed a college degree, respectively. In the DDC model, I assume that college degree status is constant over time (which is true for 97.5\% of individuals).}

Relative to the DDC model of participation in \textcite[Equation 6.7]{altuug1998effect}, $\beta_{i}$ is treated as an unobserved random variable. In their model $\beta_{i}$ is replaced by fixed effect estimates and treated as a known constant in their DDC model. Like \textcite{altuug1998effect}, I make the outside good assumption and assume that $\epsilon_{i,t,a}$ is distributed extreme value type I and independent across agents, time and actions. For simplicity, I assume that the agents' time horizon is infinite and that the exponential discount factor is 0.9 and known to the econometrician.

\subsection{Data and estimator}

As in \textcite{altuug1998effect}, the labor force participation model is estimated using a subset of data from the PSID. The data construction is described in Appendix \ref{app:data_construct}, and closely follows the details in \textcite[Appendix B]{altuug1998effect}.  The final data set contains 3084 individuals, each of whom have between four and ten panel observations, with an average close to eight.

I estimate the model using the two-step estimator described in Section \ref{sssec:foms}. The first step consists of estimating the state transition $F_x(x'|x,a)$. To simplify this step, I 
assume that, conditional upon $A=a$, (i) $X'-X$ is independent of $X$ and (ii) the components of $X'-X$ are mutually independent. Then, I estimate the densities $Z'-Z \mid A=a$ and $Hinc'-Hinc \mid A=a$ for each $a=0,1$ via the kernel density estimator with the Gaussian kernel and rule-of-thumb bandwidth.\footnote{The bandwidth is $1.06 \mathrm{std}\left[\sum_{i=1}^{n}\sum_{t=1}^{T_i-1}1\{A_{i,t}=a\}(y_{i,t+1}-y_{i,t})\right] (\sum_{i=1}^{n}\sum_{t=1}^{T_i-1}1\{A_{i,t}=a\})^{-1/5}$, for ${y=Z,Hinc}$ where $\mathrm{std}$ denotes standard deviation and $T_i$ is the panel length of observation $i$.} I note that these restrictions satisfy the real analyticity requirement Assumption \ref{As:inf_puh}(iv) (more precisely, its generalization in Section \ref{sec:discrete_state}).\footnote{This model has additional state variables with homogeneous effects (i.e., $k>\dim(\beta)+1$ where $k=\dim(X_t)=6$); as discussed in Remark \ref{rem:discrete}, the conditions of Section \ref{sec:model} must be adapted accordingly. A formal statement of these conditions is provided in Section \ref{sec:discrete_state}.} To see this, observe that, under the above specifications, the state transition cumulative distribution function is
$$
\Pr(X'\le x \mid X=x,A=a)=\Phi\left(\frac{z'-z-\mu_{1,a}}{\sigma_{1,a}}\right)\Phi\left(\frac{hinc'-hinc-\mu_{2,a}}{\sigma_{2,a}}\right)h(d',d,a),
$$
where $\Phi$ is the standard normal cumulative distribution function, $d=(educ,age,kids)$ are the discrete variables, and $h,\mu_{1,a},\sigma_{1,a},\mu_{2,a},\sigma_{2,a}$ are unknown parameters to be estimated. Thus, for each fixed $(x',d,a)$, the state transition is a bounded real analytic function of $(z,hinc)$ that is supported on $\mathbb{R}^2$. Given this discussion and model assumptions described above, the two sufficient conditions for injectivity are satisfied; then, for identification, I impose the required support condition, which appears plausible given both $Z_{i,t}$ and $Hinc_{i,t}$ are continuous random variables.

The second step requires specifying a sieve space for $\beta_i$. The step-wise constant sieve space of Section \ref{sssec:foms} is adopted, with the number and location of the knots as tuning parameters. For simplicity, $\beta_i$ is assumed independent of $X_{i,1}$. Consistent with the simulation design, the number of knots is set to $4n^{1/4} \approx 30$, placed uniformly between 0 and 15. The lower bound of 0 reflects a natural restriction on labor productivity, while the upper bound of 15 is sufficiently large that, for reasonable parameter values, the conditional choice probability is close to 1.

I implement the estimator using the profiling approach described in Section \ref{sssec:foms}.\footnote{The remaining tuning parameter is the starting value of $\gamma$, which is set as the estimates from the same estimator with five knots, equally spaced between 0 and 15. That estimator is itself initialized with the estimates from the parametric model (i.e., under the assumption that $\beta_i$ is degenerate with unknown support).} The model solutions required in the second step are obtained following \textcite{kristensen2019solving}. Inference is conducted using the standard bootstrap, see Appendix \ref{sec:sims_addl} for evidence on its performance in a simulation exercise. Additional results on the fit of the estimated model are provided in Appendix \ref{sec:model_fit}.

\subsection{Results}

Table \ref{table:altug} presents point estimates of the finite dimensional parameter $\gamma$ alongside bootstrapped standard errors. Estimates indicate that utility from working increases with education, but decreases with head-of-household income and age. Having children in the household is estimated to have a negligible effect on utility from working.

\begin{table}[ht]
\centering
\begin{tabular}{ccccc}
  \hline\hline Intercept & $hinc_{i,t}$ & $kids_{i,t}$ & $age_{i,t}$ & $educ_{i,t}$   \\  \hline
-2.527 & -0.312 &  0.054 & -0.610 &  0.331 \\ 
  {\small(0.1279)} & {\small(0.0276)} & {\small(0.0779)} & {\small(0.0758)} & {\small(0.0874)} \\ 
   \hline
\hline
\end{tabular}
\caption{Point estimates of $\gamma$ for the participation model of Section \ref{sec:applic}. Standard errors are in parentheses, calculated as the standard deviation of the estimator over 1{,}000 bootstrap samples.} 
\label{table:altug}
\end{table}

Figure \ref{fig:altug} presents the estimated distribution of $\beta_i$ from the fixed grid estimator. The estimated distribution has 21 points of support, with mean 3.11, median 3.11, standard deviation 1.35, skewness 2.39 and kurtosis 15.83, indicating substantial heterogeneity in labor productivity.\footnote{For comparison, in a model where $\beta_i$ is assumed to have three unknown points of support and estimated using the method of \textcite{arcidiacono2003finite}, the estimated distribution has mean 2.93, median 2.56, standard deviation 0.91, skewness -0.57 and kurtosis  2.29. See Appendix \ref{app:appl_AJ}.}

\begin{figure}[H] \centering \includegraphics[width=\textwidth]{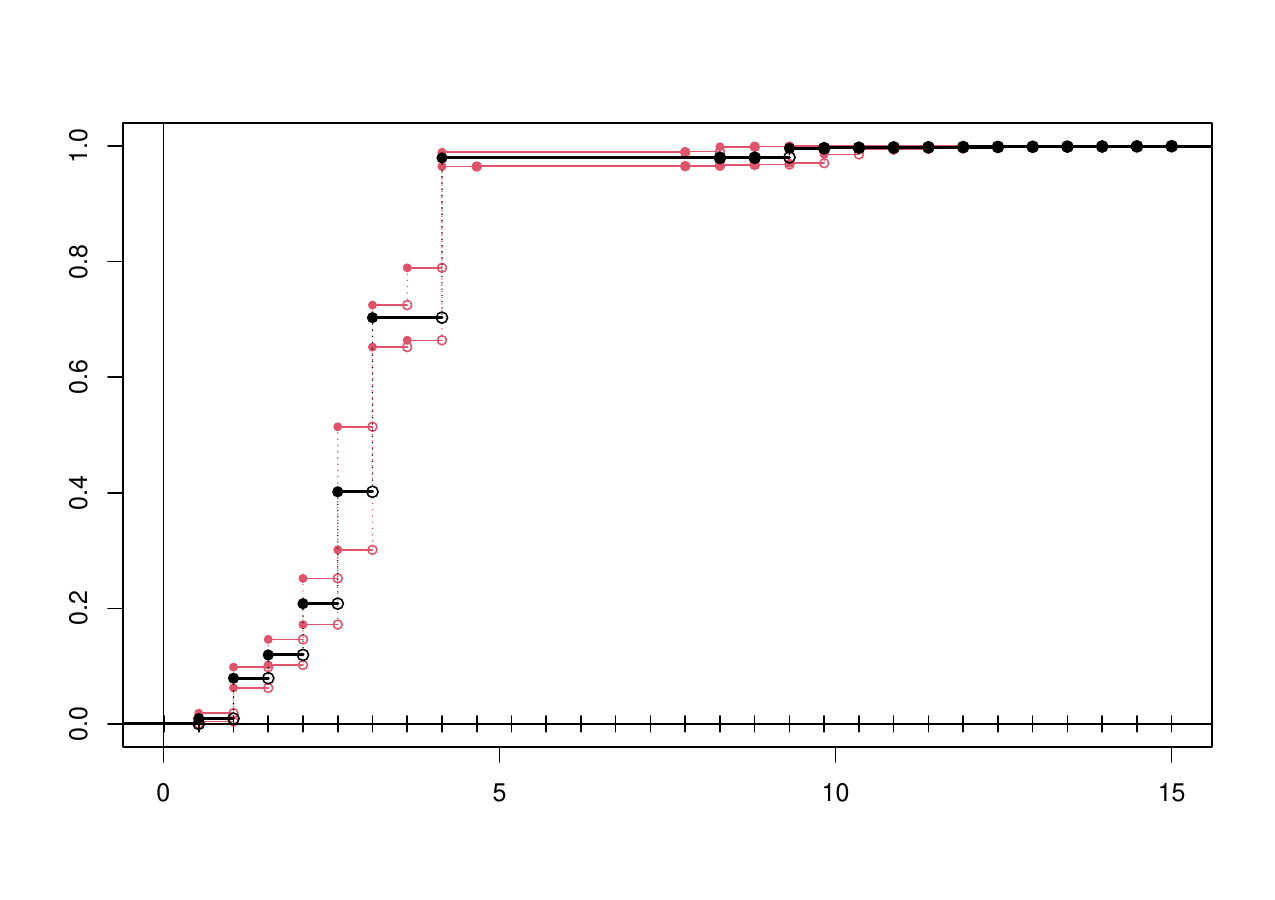}
\vspace*{-20mm} 	\caption{\small{Estimated distribution of
  $\beta_{i}$ for the participation model of Section \ref{sec:applic}. The black curve represents the point estimate, the red curves represent
  bootstrapped 95\% pointwise confidence intervals. The ticks on
          the x-axis represent the grid points.}}\label{fig:altug}
\end{figure}

\subsection{Counterfactual analysis}\label{sec:cf}

In this section, I conduct a counterfactual exercise to measure how wages affect labor market participation across the skill distribution. The counterfactual considered is where the agent's expected wage received from working (i.e., under $A_{i,t}=1$) is increased by $x\%$ over its status quo value, for $x=5,10,15,20,25$, holding all else fixed.\footnote{In the model described above, agent $i$'s expected wage from working in period $t$ is $\omega_t\beta_i\exp(\gamma_3^\intercal x_{Wi,t})$.} For each counterfactual wage change of $x\%$, I draw $(\beta^{(x)}_m,X^{(x)}_{m,t},A^{(x)}_{m,t}:t=1,\ldots,T)_{m=1}^{M}$ for $M=1{,}000{,}000$ and $T=5$ from the estimated model,\footnote{\label{cf_desc}Each simulated panel $m=1,2,\ldots,M$  is drawn independently as follows. First, $\beta_m$ is drawn from the estimated distribution $\hat{F}_\beta$ and $X_{m,1}$ is drawn from the empirical distribution of $X_{i,1}$. Then the conditional choice probability $P(1,X_{m,1},\beta_m;\hat{F}_x,\hat{\gamma})$ is computed and used to draw $A_{m,1}$. Next, $X_{m,2}$ is set as $X_{m,1}+\xi_{A_{m,1}}$ where $\xi_a$ is drawn uniformly from the empirical distribution of $X'-X \mid A=a$, with the draw truncated to respect the empirical supports. $A_{m,2}$ and $(X_{m,t},A_{m,t})$ for $t=3,\ldots,T$ are drawn analogously.} and report the average labor market participation rate for six different quantiles of $\beta$.

Table \ref{table:cf_data} displays the results of this counterfactual exercise. Each cell displays the average labor market participation for the counterfactual wage increase conditional upon a particular quantile of $\beta$. Specifically, for a $x\%$ wage increase and quantile $q_\alpha \equiv \inf \{c:\hat{F}_\beta(c)\ge \alpha\}$, the table reports
$$
\frac{\sum_{m=1}^M\sum_{t=1}^T1\{A^{(x)}_{m,t}=1,\beta^{(x)}_m=q_\alpha\}}{T\sum_{m=1}^M1\{\beta^{(x)}_m=q_\alpha\}}.
$$
The table also displays the implied elasticity of quantile-specific labor force participation with respect to wages, based upon the 25\% wage increase.\footnote{Specifically, the elasticity is calculated as $\frac{\left(\frac{\sum_{m=1}^M\sum_{t=1}^T1\{A^{(25)}_{m,t}=1,\beta^{(25)}_m=q_\alpha\}}{T\sum_{m=1}^M1\{\beta^{(25)}_m=q_\alpha\}}\right)-\left(\frac{\sum_{m=1}^M\sum_{t=1}^T1\{A^{(0)}_{m,t}=1,\beta^{(0)}_m=q_\alpha\}}{T\sum_{m=1}^M1\{\beta^{(0)}_m=q_\alpha\}}\right)}{\frac{\sum_{m=1}^M\sum_{t=1}^T1\{A^{(0)}_{m,t}=1,\beta^{(0)}_m=q_\alpha\}}{T\sum_{m=1}^M1\{\beta^{(0)}_m=q_\alpha\}}.}.$} For comparison, total (i.e., unconditional) labor force participation is 0.6496, and its elasticity with respect to wages is estimated to be approximately 0.11. Standard errors for the counterfactual estimates are in Table \ref{table:cf_std_errors}.

\begin{table}[H]
\centering
\begin{tabular}{l|cccccc}
  \hline\hline & \multicolumn{6}{c}{Quantile of labor productivity $\beta$}  \\ Wage increase & $q_{0.01}$ & $q_{0.2}$ & $q_{0.4}$ & $q_{0.6}$ & $q_{0.8}$ & $q_{0.99}$ \\ 
   \hline
0\% & 0.1312 & 0.4127 & 0.5597 & 0.7168 & 0.8992 & 0.9998 \\ 
  5\% & 0.1360 & 0.4192 & 0.5649 & 0.7207 & 0.9010 & 0.9999 \\ 
  10\%  & 0.1408 & 0.4257 & 0.5699 & 0.7245 & 0.9027 & 0.9999 \\ 
  15\%  & 0.1457 & 0.4319 & 0.5748 & 0.7282 & 0.9043 & 0.9999 \\ 
  20\%  & 0.1502 & 0.4378 & 0.5796 & 0.7317 & 0.9058 & 0.9999 \\ 
  25\%  & 0.1546 & 0.4439 & 0.5840 & 0.7350 & 0.9073 & 0.9999 \\ 
   \hline
Elasticity: & 0.7129 & 0.3022 & 0.1737 & 0.1015 & 0.0357 & 0.0001 \\ 
   \hline
\hline
\end{tabular}
\caption{Counterfactual labor force participation rates. Each cell represents estimated labor force participation rates under a counterfactual $x\%$ increase in wages (for $x=0,5\ldots,25$) among those with labor productivity $q_{\alpha}$, which denotes  the $\alpha$'th percentile (for $\alpha=0.01,0.2,\ldots,0.99$) of the estimated distribution of $\beta$. The estimates are based on $1{,}000{,}000$ draws from the model evaluated at the estimated parameter values and counterfactual wages. ``Elasticity'' is the implied percent change in labor force participation from a 1\% increase in counterfactual wages (calculated using the 25\% counterfactual wage increase).} 
\label{table:cf_data}
\end{table}

Several observations can be made from this counterfactual exercise. First, average labor force participation varies greatly across the distribution of productivity. For instance, it increases from 13\% at the first percentile to almost 100\% at the 99th percentile.\footnote{In the data, around 14.6\% of individuals never work. This percentage is 10.6\% in the simulated data with no wage change.} Second, the supply response to a wage increase is much larger at lower skill quantiles: the implied elasticity is 0.30 at the 20th percentile, but only 0.036 at the 80th percentile.

\section{Conclusion}\label{sec:conclusion}

In this paper I show point identification of a broad class of multinomial dynamic discrete choice models with multivariate continuous permanent unobserved heterogeneity. Relative to the existing literature, I allow for permanent unobserved heterogeneity that is both multivariate and continuous, and provide low-level conditions for point identification.  My results encompass both finite and infinite horizon models, and do not rely on a full support condition, nor parametric assumptions on the distribution on permanent unobserved heterogeneity.

I propose a seminonparametric estimator for the distribution of continuous permanent unobserved heterogeneity in the style of \textcite{heckman1984method}. The estimator is computationally simple, and coincides with the estimator for a semiparametric model. As a result, the applied econometrician can proceed as they would for discrete permanent unobserved heterogeneity, providing they commit to increasing the number of support points as the sample size grows.

\addcontentsline{toc}{section}{\refname}
\printbibliography

\appendix

\section{Proofs}

Throughout this appendix I use the following notations: $S_\beta=\mathrm{Supp}(\beta)$; for $\lambda$ the Lebesgue measure, $\mathcal{L}^2_{A}$ is the usual $L^2$ space $\mathcal{L}^2(A,\lambda)$  and $\mathcal{L}_{A}$ is the usual $L^\infty$ space $\mathcal{L}^\infty(A,\lambda)$; $\text{sp}{A}$ indicates the linear span of set $A$, and $\overline{\text{sp}}{A}$ indicate its closure in the $L^2$ norm.

\subsection{Proof of results in Section \ref{sec:model}}\label{pf:inj}

\subsubsection{Proof of Theorem \ref{thm:inject}}

\begin{proof}\label{pf:thm_inj}
For $\mathcal{V}\subset\mathbb{R}^k$, define
\begin{align*}
L^*_{\beta,\mathcal{V}}:\mathcal{L}_{S_{\beta}}\rightarrow\mathcal{L}_{\mathcal{V}}&\qquad[L^{*}_{\beta,\mathcal{V}}m](x)=\int{P}(0,x,b)m(b)db.
\end{align*}
Note that any absolutely continuous measure $\mu$ on $S_\beta$ with bounded density $m$ satisfies $m \in \mathcal{L}_{S_\beta}$. Let $\mathcal{X}\subseteq\mathrm{Supp}(X_t)$ be a non-empty open set. By Lemma \ref{lem:inf_ccp}, for each fixed $b\in S_\beta$, the map $x\mapsto P(0,x,b)$ is real analytic on $\mathbb{R}^{k}$. Now, by Lemma \ref{lem:SW3.8}, if $[L^*_{\beta,\mathcal{X}} m](x)=0$ almost everywhere on $\mathcal{X}$, it follows that $[L^*_{\beta,\mathcal{X}} m](x) = 0$ for all $x \in \mathcal{X}$. Therefore, to prove the theorem it suffices to show injectivity of $\mathcal{L}_{\beta,\mathcal{X}}$, which in turn follows from injectivity of  $L^*_{\beta,\mathbb{R}^k}$ by Lemma \ref{lem:SW3.8}. To show this, define 
\begin{equation}\label{functions0}
  \tilde{\mathcal{H}}=\left\{b\mapsto P(0,x,b)\colon \;x\in\mathbb{R}^{k}\right\}.
\end{equation}
By Lemma \ref{lem:inf_ccp} and Theorem 3.1 in \textcite{stinchcombe1998consistent}, $L^*_{\beta,\mathbb{R}^{k}}$ is injective if ${\mathrm{sp}}\tilde{\mathcal{H}}$ is dense in $\mathcal{L}^{2}_{S_\beta}$. The result follows from Lemma \ref{lemma_cosapprox}.
\end{proof}

\subsubsection{Proof of Theorem \ref{thm:inf}}\label{pf:id}

\begin{proof}[Proof of Theorem \ref{thm:inf}]\label{pf:inf}
By Assumptions \ref{As:inf_bl} and \ref{As:inf_puh},
\begin{multline*}
f_{A_{4}A_{3}A_{2}A_{1}X_{4}X_{3}X_{2}|X_{1}}(a_{4},a_{3},0,a_{1},x_{4},x_{3},x_{2},x_{1})=\int{P}(a_{4},x_{4},b)F_{x}(x_{4}|x_{3},a_{3})P(a_{3},x_{3},b)\\\times{F_{x}(x_{3}|x_{2},0)}P(0,x_{2},b)F_{x}(x_{2}|x_{1},a_{1})P(a_{1},x_{2},b)f_{\beta|X_{1}}(b,x_{1})db.
\end{multline*}
Where the transition kernel has positive measure, we can write
\begin{multline*}
\frac{f_{A_{4}A_{3}A_{2}A_{1}X_{4}X_{3}X_{2}|X_{1}}(a_{4},a_{3},0,a_{1},x_{4},x_{3},x_{2},x_{1})}{F_{x}(x_{4}|x_{3},a_{3})F_{x}(x_{3}|x_{2},0)F_{x}(x_{2}|x_{1},a_{1})}\\=\int{P}(a_{4},x_{4},b)P(a_{3},x_{3},b)P(0,x_{2},b)P(a_{1},x_{1},b)f_{\beta|X_{1}}(b,x_{1})db.
\end{multline*}

Fix $x_1\in\mathrm{Supp}(X_1)$ and let $a_1\in\mathrm{Supp}(A_1)$ satisfy Assumption \ref{As:inf_os}. Let $S_2=\mathrm{Supp}(X_2|X_1=x_1,A_1=a_1)$ and $S_4=\cap_{a_3\in A}\mathrm{Supp}(X_4 \mid X_3 \in S_3,A_3 = a_3)$ and define the operators $L_{3,4,2}:\mathcal{L}_{S_{2}}\rightarrow{A}\times\mathcal{L}_{S_{3}}$
and
$L_{3,2}:\mathcal{L}_{S_{{2}}}\rightarrow{A}\times\mathcal{L}_{S_{3}}$
as follows:
\begin{align*}&[L_{3,4,2}m](a_{3},x_{3})=\int\frac{f_{A_{4}A_{3}A_{2}A_{1}X_{4}X_{3}X_{2}|X_{1}}(a_{4},a_{3},0,a_{1},x_{4},x_{3},x_{2},x_{1})}{F_{x}(x_{4}|x_{3},a_{3})F_{x}(x_{3}|x_{2},0)F_{x}(x_{2}|x_{1},a_{1})}m(x_{2})dx_{2},\\
&[L_{3,2}m](a_{3},x_{3})=\int\frac{f_{A_{3}A_{2}A_{1}X_{3}X_{2}|X_{1}}(a_{3},0,a_{1},x_{3},x_{2},x_{1})}{F_{x}(x_{3}|x_{2},0)F_{x}(x_{2}|x_{1},a_{1})}m(x_{2})dx_{2}.
\end{align*}
Under Assumption \ref{As:inf_os} the above operators are observed and well-defined for some fixed $(x_4,a_4)$. The operators can be decomposed into constituent parts. For this purpose define
\begin{align*}
L_{3,\beta}:\mathcal{L}_{S_{\beta}}\rightarrow{A}\times\mathcal{L}_{S_3}\qquad&[L_{3,\beta}m](a_{3},x_{3})=\int{P}(a_{3},x_{3},b)m(b)db,\\
D^{4}_{\beta}:\mathcal{L}_{S_{\beta}}\rightarrow\mathcal{L}_{S_{\beta}}\qquad&[D^{4}_{\beta}m](b)=P(a_{4},x_{4},b)m(b),\\
D_{\beta}:\mathcal{L}_{S_{\beta}}\rightarrow\mathcal{L}_{S_{\beta}}\qquad&[D_{\beta}m](b)=P(a_{1},x_{1},b)f_{\beta|X_{1}}(b,x_{1})m(b),\\
L_{\beta,2}:\mathcal{L}_{S_2}\rightarrow\mathcal{L}_{S_{\beta}}\qquad&[L_{\beta,2}m](b)=\int{P}(0,x_{2},b)m(x_{2})dx_{2}.
\end{align*}
It is straightforward to derive that
$L_{3,4,2}=L_{3,\beta}D_{\beta}^{4}D_{\beta}L_{\beta,2}$ and  $L_{3,2}=L_{3,\beta}D_{\beta}L_{\beta,2}$.

By Theorem $\ref{thm:inject}$, $L_{3,\beta}$ and $L_{\beta,2}^*$ are injective where $L_{\beta,2}^{*}$ is the adjoint\footnote{The adjoint of a linear operator between Hilbert Spaces $L:U\rightarrow{V}$ is the operator $L^{*}:V\rightarrow{U}$ that satisfies $\langle Lu,v\rangle _{V}=\langle u,L^{*}v\rangle _{U}$ where $\langle \cdot,\cdot\rangle _{W}$ is the inner product on $W$. See \textcite{carrasco2007linear} for further discussion.} of $L_{\beta,2}$. Then, since $D_{\beta}$ is invertible (as $P(a_{1},x_{1},b)f_{\beta|X_{1}}(b,x_{1})>0$ almost surely-$\mathrm{Supp}(\beta|X_1=x_1)$) and $L_{3,\beta}$ and $L_{\beta,2}^{*}$ are injective, $L_{3,2}$ has a right inverse,\footnote{Following \textcite{hu2008identification}, by `right inverse' we mean the existence of an operator $L_{3,2}^{-1}$ such that $L_{3,2}L_{3,2}^{-1}:\mathcal{L}_{S_2}\rightarrow\mathcal{L}_{S_2}$ is the identity operator.} the equivalence
\begin{equation}\label{eq:decomp}
L_{4,3,2}L_{3,2}^{-1}=L_{3,\beta}D_{\beta}^{4}L_{3,\beta}^{-1}
\end{equation}
holds and $L_{3,\beta}D_{\beta}^{4}L_{3,\beta}^{-1}$ is the eigendecomposition of the known operator $L_{4,3,2}L_{3,2}^{-1}$ \parencite[Lemma A.1]{williams2019nonparametric}. Each $b$ indexes an eigenvalue $P(a_4,x_4,b)$ of $L_{4,3,2}L_{3,2}^{-1}$, with corresponding eigenfunction $(a_{3},x_{3})\mapsto{P(a_{3},x_{3},b)}$. As in \textcite{hu2008instrumental}, the decomposition is unique up to (1) scaling of the eigenfunctions, (2) uniqueness of the eigenvalues, and (3) reindexing of the eigenvalues (``ordering'').

First, the scale of the eigenfunctions $(a_{3},x_{3})\mapsto{P(a_{3},x_{3},b)}$ is fixed since they are probabilities that must satisfy $\sum_{a_3\in A}P(a_{3},x_{3},b)=1$. Second, for eigenvalue uniqueness, as shown in \textcite[][p. 213]{hu2008instrumental}, it is sufficient that for each $b\neq\tilde{b}\in{S}_{\beta}$, there exist some $(a_{4},x_{4})\in{A}\times S_4$ such that $P(a_{4},x_{4},b)\neq{P}(a_{4},x_{4},\tilde{b})$. To show this, suppose for all $(a_{4},x_{4})\in{A}\times S_4$, $P(a_{4},x_{4},b)={P}(a_{4},x_{4},\tilde{b})$. Then, by standard arguments for identification of homogenous parameters in DDC models \parencite[e.g.,][Section 3.5]{bajari2015identification}, it follows that for each $a\in A$
\begin{equation*}
\begin{pmatrix}
\tilde{b}_{a}&\tilde\gamma_{a}^\intercal
\end{pmatrix}^{\intercal}x_{4}
=\begin{pmatrix}
b_{a}&\gamma_{a}^\intercal
\end{pmatrix}^{\intercal}x_{4}.
\end{equation*}
Then, since $S_4$ contains $k$ linearly independent elements, $\tilde{b}_{a}=b_{a}$ and thus $\tilde{b}=b$ as required.

Finally, the problem of ordering arises because any injective function $R$ may be used to redefine the latent variable $\beta=R(\tilde{\beta})$ while satisfying $L_{3,\beta}D_{\beta}^{4}L_{3,\beta}^{-1}=L_{3,\tilde\beta}D_{\tilde\beta}^{4}L_{3,\tilde\beta}^{-1}$\footnote{This equality is shown explicitly in \textcite[Supplement S.3]{hu2008instrumental}.} where
\begin{align*}
L_{3,\tilde{\beta}}:\mathcal{L}_{S_{\tilde\beta}}\rightarrow{A}\times\mathcal{L}_{S_{{3}}}\qquad&[L_{3,\tilde{\beta}}m](a,x)=\int\Pr(A_{3}=a\mid{X}_{3}=x,\tilde\beta=b)m(b)db,\\
D^{4}_{\tilde\beta}:\mathcal{L}_{S_{\tilde\beta}}\rightarrow\mathcal{L}_{S_{\tilde\beta}}\qquad&[D^{4}_{\tilde{\beta}}m](b)=\Pr(A_{4}=a_{4}\mid{X}_{4}=x_{4},\tilde\beta=b)m(b).
\end{align*}
Notice that $\Pr(A_{3}=a\mid{X}_{3}=x,\tilde\beta=b)=\Pr(A_{3}=a\mid{X}_{3}=x,\beta=R(b))=P(a,x,R(b))$. I show the only admissible reordering function is identity. For this purpose, suppose that for all $(a_{3},x_{3})\in{A}\times{S}_{{3}}$, $P(a_{3},x_{3},R(b))=P(a_{3},x_{3},b)$. By standard arguments for identification of homogenous parameters in DDC models \parencite[e.g.,][Section 3.5]{bajari2015identification}, it follows that for each $a\in{A}$,
\begin{equation*}
\begin{pmatrix}
R(b_{a})&\tilde\gamma_{a}^\intercal
\end{pmatrix}^{\intercal}x_{3}
=\begin{pmatrix}
b_{a}&\gamma_{a}^\intercal
\end{pmatrix}^{\intercal}x_{3}.
\end{equation*}
Under Assumption \ref{As:inf_os}(ii) $S_{{3}}$ contains $k$ linearly independent vectors, so it follows that $\left(R(b_{a}),\tilde{\gamma}^{\intercal}_{a}\right)^{\intercal}=\left(b_{a},\gamma^{\intercal}_{a}\right)^{\intercal}$ and thus $R(b)=b$. Thus $P(a,x,b)$ is identified as the unique eigenfunction of $L_{4,3,2}L_{3,2}^{-1}$, yielding identification of $\gamma$ under Assumption \ref{As:inf_os}(ii).

To identify $f_{\beta|X_{1}}$, notice that
\begin{equation*}
\frac{f_{a_{2}a_{1}x_{2}|x_{1}}(0,a_{1},x_{2},x_{1})}{F_{x}(x_{2}|x_{1},a_{1})}=\left[L_{\beta,2}^{*}(P(a_{1},x_{1},\cdot)f_{\beta|X_{1}}(\cdot\,x_{1}))\right](x_{2}).
\end{equation*}
$L_{\beta,2}^{*}$ is injective and identified, since its kernel is identified. Applying the left inverse of $L_{\beta,2}^{*}$, $P(a_{1},x_{1},b)f_{\beta|X_{1}}(b,x_{1})$ and thus $f_{\beta|X_{1}}(b,x_{1})$ is identified.
\end{proof}

\subsubsection{Proof of Lemma \ref{lemma_cosapprox}}\label{ssec:lemma_cosapprox}

\begin{proof}
Under Assumptions \ref{As:inf_bl} and \ref{As:inf_puh}, 
\begin{equation}\label{eq:ccp2}
P(a,x,b)=\frac{\exp\left(x^\intercal(b_{a},\;\gamma_{a}^{\intercal})^{\intercal}+\rho\int{v}(x';b)dF_{x}(x'|x,a)\right)}{\sum_{\tilde{a}\in A}\exp\left(x^\intercal(b_{\tilde{a}},\;\gamma_{\tilde{a}}^{\intercal})^{\intercal}+\rho\int{v}(x';b)dF_{x}(x'|x,\tilde{a})\right)},
\end{equation}
and define $\tilde{\mathcal{H}}\equiv\left\{b\mapsto P(0,x,b)\colon\;x\in  \mathbb{R}^{k}\right\}$. First, I show that for any $l=(l_1,l_2,\ldots,l_J)^\intercal\in\mathbb{R}^{J}$ there is a sequence in $\tilde{\mathcal{H}}$ whose limit is $1\{b\in\times_{a=1}^{J}(l_a,\infty)\}$. Given $l\in\mathbb{R}^{J}$ and $n\in\mathbb{N}$, let $\tilde{x}_n=n\Gamma^{-1}l$, which exists due to Assumption I2\ref{As:inf_homog}. Denote $x_n=(-n,\tilde{x}_n^\intercal)^\intercal$. If
\begin{equation}\label{eq:value_limit}
\lim_{n\rightarrow\infty}\frac{x_n^\intercal(b_{a},\;\gamma_{a}^{\intercal})^{\intercal}+\rho\int{v}(x';b)dF_{x}(x'|x,a)}{x_n^\intercal(b_{a},\;\gamma_{a}^{\intercal})^{\intercal}}=1
\end{equation}
then, for any $b\in S_\beta$, $P(0,x_n,b)\rightarrow {1}\{b \in \times_{j=1}^J (l_j,\infty)\}$ as $n\rightarrow\infty$. Since $x_n^\intercal(b_{a},\;\gamma_{a}^{\intercal})^{\intercal}=-n(b_a-l_a)$ diverges when $b_a\neq l_a$, for equation \eqref{eq:value_limit} it is sufficient that $\int{v}(x';b)dF_{x}(x'|x,a)$ is uniformly bounded in $(a,x,b)\in A\times \mathbb{R}^k\times S_\beta$. Denote $S_{X'}$ as the support of the state transition kernel and consider that
\begin{align*}
  \left|\int{v}(x';b)dF_{x}(x'|x,a)\right|
  \le&\int\left|{v}(x';b)\right|\left|dF_{x}(x'|x,a)\right|\\
=&\int_{x'\in{S}_{X'}}\left|{v}(x';b)\right|\left|dF_{x}(x'|x,a)\right|+\int_{x'\not\in{S}_{X'}}\left|{v}(x';b)\right|\left|dF_{x}(x'|x,a)\right|\\
=&\int_{x'\in{S}_{X'}}\left|{v}(x';b)\right|\left|dF_{x}(x'|x,a)\right|\\
<&\;M
\end{align*}
for some $M<\infty$. The second equality is because $dF_x(x'|x,a)=0$ for any $x'\not\in S_{X'}$, the final inequality follows since (i) $v(x;b)$ is bounded on the compact set $S_{X'}\times S_\beta$ \parencite{kristensen2019solving}, and (ii) $dF_x(x'|x,a)$ is a bounded function of $x$ (Assumption I2\ref{As:inf_tran}) and $S_{X'}\times A$ is compact.

Next, it follows that, for any $u=(u_1,u_2,\ldots,u_J)^\intercal\in\mathbb{R}^J$ there is a sequence $(h_n)_{n\in\mathbb{N}}\subset\mathrm{sp}\tilde{\mathcal{H}}$, each element formed by adding and subtracting $2^{J}$ elements of $\tilde{\mathcal{H}}$, such that, as $n\rightarrow\infty$, $h_n(b)\rightarrow1\{b\in\times_{a=1}^{J} (l_a,u_a]\},$ which implies 
\begin{equation*}
 \overline{\mathrm{sp}}\tilde{\mathcal{H}}\supset\left\{b\mapsto1\{b\in\times_{a=1}^{J} (l_a,u_a]\}\colon\; l,u\in\mathbb{R}^J\right\}.
\end{equation*}
 
To conclude we show ${\mathrm{sp}}\tilde{\mathcal{H}}$ is dense in simple functions on $S_\beta$. Let $E\subset S_\beta$ be Lebesgue measurable and let $\epsilon>0$, and denote $\chi_E(b)=1\{b\in E\}$. From \textcite{rudin1987real} Theorem 2.17(a), there is a set $\mathcal{O}=\cup_{i=1}^n\times_{j=1}^J(l_{j,i},u_{j,i}]\subset S_\beta$ such that the Lebesgue measure of $E\Delta\mathcal{O}\equiv(E\setminus\mathcal{O})\cup(\mathcal{O}\setminus E)$ is at most $\epsilon$. Note that $\chi_{\mathcal{O}}(b)\in\overline{\mathrm{sp}}\tilde{\mathcal{H}}$ and that $\chi_{\mathcal{O}}$ and $\chi_{E}$ agree on $S_\beta\setminus(E\Delta\mathcal{O})$. Then since $|\chi_E(b)-\chi_{\mathcal{O}}(b)|\le 1$,
\begin{align*}
    \int_{S_\beta}|\chi_E(b)-\chi_{\mathcal{O}}(b)|^2db&=\int_{E\Delta\mathcal{O}}|\chi_E(b) -\chi_{\mathcal{O}}(b)|^2db+\int_{S_\beta \setminus (E\Delta\mathcal{O})}|\chi_E(b)-\chi_{\mathcal{O}}(b)|^2 db\\
    &<\epsilon + 0. \qedhere
\end{align*}
\end{proof}

\subsubsection{Supporting lemmas}\label{ssec:lemmas}

\begin{lemma}[{Properties of the CCP function}]\label{lem:inf_ccp}
Assume \ref{As:inf_bl} and \ref{As:inf_puh}. If $\mathrm{Supp}(X_t)$ contains a non-empty open set, then $\tilde{\mathcal{H}}=\left\{b\mapsto P(0,x,b)\colon\;x\in\mathbb{R}^{k}\right\}$ is a norm bounded subset of $\mathcal{L}^2_{{S}_{\beta}}$. Moreover, $x\mapsto P(a,x,b)$ are real analytic functions on $\mathbb{R}^{k}$ for any fixed $(a,b)$.
\end{lemma}
\begin{proof}[Proof of Lemma \ref{lem:inf_ccp}]

Under \ref{As:inf_bl} and \ref{As:inf_puh}, for any $(a,x,b)\in A\times\mathrm{Supp}(X_t)\times S_\beta$,\footnote{Recall that the integrated value function was defined in equation \eqref{eq:fixed_point} as $v_t(s)$. I change the notation to $v(x;b)$ since Assumption \ref{As:inf_bl} implies time invariance and that $S_t=(X_t,\beta)$.}
 \begin{equation}\label{eq:ccp}
P(a,x,b)=\frac{\exp\left(x^\intercal(b_{a},\;\gamma_{a}^{\intercal})^{\intercal}+\rho\int{v}(x';b)dF_{x}(x'|x,a)\right)}{\sum_{\tilde{a}\in{A}}\exp\left(x^\intercal(b_{\tilde{a}},\;\gamma_{\tilde{a}}^{\intercal})^{\intercal}+\rho\int{v}(x';b)dF_{x}(x'|x,\tilde{a})\right)}.
\end{equation}
Since $\mathrm{Supp}(X_t)$ contains an open set and the analytic continuation of a vanishing function on an open set is vanishing everywhere, the analytic continuation of $x\mapsto F_x(x'|x,a)$ to $\mathbb{R}^{k}$ satisfies $\{x':\exists x\in\mathrm{Supp}(X_t),~dF_x(x'|x,a)>0\}=\{x':\exists x\in\mathbb{R}^{k},~dF_x(x'|x,a)>0\}$. Therefore $P$ in equation \eqref{eq:ccp} is well-defined on $A\times\mathbb{R}^{k}\times S_\beta$.

Since the set ${S}_{\beta}$ is a compact subset of $\mathbb{R}^{J}$ and $|P(a,x,b)|\le1$ for all $(a,x,b)\in A\times\mathbb{R}^{k}\times S_\beta$, 
\begin{equation*}
\|P(a,x,\cdot)\|^{2}_{2}=\int_{{S}_{\beta}}P(a,x,b)^{2}d\lambda(b)\le\int_{{S}_{\beta}}d\lambda(b)<\infty,
\end{equation*}
and thus $b\mapsto P(a,x,b)$ is an element of $\mathcal{L}^{2}_{{S}_{\beta}}$.

To show $x\mapsto P(a,x,b)$ is real analytic, consider that since the sum, composition and ratio of strictly positive real analytic functions are real analytic \parencite{krantz2002primer} it is sufficient to show $x\mapsto\int{v}(x';b)dF(x'|x,a)$ is real analytic. By Assumption I2\ref{As:inf_tran},
\begin{equation*}
\int{v}(x';b)dF(x'|x,a)=\int{v}(x';b)f_{c}(x'|x,a)dx'+\sum_{i=1}^{N}v(i;b)f_{d}(i|x,a)
\end{equation*}
where $f_c(\cdot|x,a)$ is a probability density function and $f_d(\cdot|x,a)$ is a probability mass function with $N$ points of support. Since $f_{d}$ is a real analytic function of $x$, it is sufficient to show $\int{v}(x';b)f_{c}(x'|x,a)dx'$ is real analytic. By assumption I2\ref{As:inf_tran}, $f_{c}(x'|x,a)$ is real analytic on $x\in\mathbb{R}^{k}$. That is, for each fixed $(a,x')$, there is a unique power series representation, such that for all $x\in\mathbb{R}^{k}$,
\begin{equation*}
f_{c}(x'|x,a)=\sum_{n\in\mathbb{N}^{J+1}}\alpha_{n}(a,x')x^{n}.
\end{equation*}
For any $x'$ outside its bounded support and any $a$, since $f_{c}(x'|x,a)=0$ for $x\in\mathrm{Supp}(X_t)$, $f_{c}(x'|x,a)=0$ for ${x}\in\mathbb{R}^{k}$ since $\mathrm{Supp}(X_t)$ contains an open set. We
are now in a position to show the result.
\begin{align*}
\int{v}(x';b)f_{c}(x'|x,a)dx'=&\int{v}(x';b)\sum_{n\in\mathbb{N}^{J+1}}\alpha_{n}(a,x')x^{n}dx'\\
=&\int\sum_{n\in\mathbb{N}^{J+1}}\tilde\alpha_{n}(a,x')x^{n}dx'\\
=&\sum_{n\in\mathbb{N}^{J+1}}\left(\int\tilde\alpha_{n}(a,x')dx'\right)x^{n}=\sum_{n\in\mathbb{N}^{J+1}}\breve\alpha_{n}x^{n}
\end{align*}

The first equality holds by definition. The second holds from defining $\tilde{\alpha}_{n}(a,x')={v}(x';b)\alpha_{n}(a,x')$. The third equality holds from the bounded convergence theorem because, the integral being supported on a bounded set, $\tilde{\alpha}_{n}(a,x')$ is dominated by its supremum taken over its bounded support. The final equality is by definition of $\breve\alpha_{n}=\int\tilde\alpha_{n}(a,x')dx'$, which exists since the defining integral is supported on a bounded set.
\end{proof}

Lemma \ref{lem:SW3.8} is a straightforward generalization of \textcite[Theorem 3.8]{stinchcombe1998consistent} that allows for non-linear kernel functions and the domain of the functions in the image of the integral transform to be a strict subset of the Euclidean space.

\begin{lemma}\label{lem:SW3.8}
Let $F$ be a signed measure with compact support $\mathcal{Y}\subseteq\mathbb{R}^{d_Y}$, and let $\mathcal{X}\subseteq\mathbb{R}^{d_X}$. Suppose $x\mapsto f(x,y)$ is real analytic on $\mathcal{X}$ for each $y\in\mathcal{Y}$, and that
\begin{equation}\label{eq:sw}
\int f(x,y)\,dF(y) = 0 \quad \text{for all } x\in \mathcal{X} \quad \implies \quad F = 0.
\end{equation}
Then for any non-empty open set $T\subseteq\mathcal{X}$, if
\begin{equation*}
\int f(x,y)\,dF(y) = 0 \quad \text{for almost every } x\in T,
\end{equation*}
it follows that (i) $\int f(x,y)\,dF(y) = 0$ for all $x\in T$ and (ii) $F = 0$ (the zero measure).
\end{lemma}

\begin{proof}[Proof of Lemma \ref{lem:SW3.8}]
Suppose that equation (\ref{eq:sw}) holds and that for almost every ${x}\in{T},\int{f}(x,y)dF(y)=0$, for some $T\subseteq\mathbb{R}^{d_X}$ open and non-empty. Since $f$ is real analytic for each $y$ and $\mathcal{Y}$ is bounded, $\int{f}(x,y)dF(y)$ is a real analytic function of $x$ \parencite{mattner1999complex}. A real analytic function that vanishes on a subset of an open set with positive Lebesgue measure must vanish identically on that open set. Then, since $\int{f}(x,y)dF(y)$ is zero on an open set, it is zero on the Euclidean space (e.g., \textcite{krantz2002primer}, Corollary 1.2.6) and by equation (\ref{eq:sw}), $F=0$.
\end{proof}

\subsection{Proof of results in Section \ref{sec:model-intercept}}\label{sec:f_proofs}

\textbf{Notation:} $\tilde{A}=\{1,2,\ldots,J\}$. For a vector $x$, let $x_{[k]}$ denote the $k$th element and $x_{[-k]}$ the vector excluding the $k$th element.

\subsubsection{Proof of Theorem \ref{thm:inject_finite}}
\begin{proof}
Under Assumptions \ref{As:f_bl} and \ref{As:f_puh},
\begin{equation*}
{P}_{T}(a,x,b)=\frac{\exp\left(b_{a[1]}+x^\intercal(b_{a[-1]}^{\intercal},\;\gamma_{T,a}^{\intercal})^{\intercal}\right)}{\sum_{\tilde{a}\in {A}}\exp\left(b_{\tilde{a}[1]}+x^\intercal(b_{\tilde{a}[-1]}^{\intercal},\;\gamma_{T,\tilde{a}}^{\intercal})\right)}.
\end{equation*}
Denote $x=(z^\intercal,w^\intercal)^\intercal$ for $z\in\mathbb{R}^p$ and $w\in\mathbb{R}^{J}$, and observe $(z,w)\mapsto P_T(a,x,b)$ is real analytic. Since $S_\beta$ is compact, Lemma \ref{lem:SW3.8} applies and the result holds if, for any signed measure $\mu$,
\begin{equation*}
\int P_T(a,x,b)d\mu(b)=0 \quad\text{for all }(a,x)\in A \times\mathbb{R}^{k}, ~ \Longrightarrow ~ \mu=0
\end{equation*}
I show this condition directly. Assume $\mu$ is a finite signed measure satisfying
\begin{equation}\label{eq:f_mc_hyp}
\forall(a,{z})\in{A}\times\mathbb{R}^{p},\;  \int{P}_{{T}}(a,x,b)d\mu(b)=0
\end{equation}
for any fixed $w$. Viewed as a function of a $w\in\mathbb{R}^{J}$ this object is infinitely differentiable and since it is identically zero, all of its derivatives are zero. Furthermore, since both $P_{T}$ and $\mu$ are bounded, we can exchange the order of differentiation and integration, so that for any $1\le i\le J$,
\begin{equation*}
\forall{n}\in\mathbb{N}_{+}\; ,\forall(a,z)\in{A}\times\mathbb{R}^{p},\;  \int \frac{\partial^{n}}{\partial{w_{[i]}^{n}}}{P}_{{T}}(a,x,b)d\mu(b)=0.
\end{equation*}
Fix $a$ and consider the first-order partial derivative ($n=1$) with respect to $w_{i}$:
\begin{equation*}
\forall{z}\in\mathbb{R}^{p},\;\gamma_{T,a[i]}\int{P}_{T}(a,x,b)d\mu(b)-\sum_{j\in\tilde{A}}\gamma_{T,j[i]}\int{P}_{T}(a,x,b)P_T(j,x,b)d\mu(b)=0.
\end{equation*}
From equation \eqref{eq:f_mc_hyp}, it follows that,
\begin{equation*}
\forall(a,{z})\in{A}\times\mathbb{R}^{p},\;\sum_{j\in\tilde{A}}\gamma_{T,j[i]}\int{P}_{T}(a,x,b){P}_{T}(j,x,b)d\mu(b)=0.
\end{equation*}
Repeating the argument for all $i\in\tilde{A}$ yields the system of linear equations
\begin{equation*}
\Gamma_{T}^\intercal \int{P}_{T}(a,x,b)\otimes\tilde{P}_{T}^\intercal(x,b)d\mu(b)=0^\intercal_{J}
\end{equation*}
where $\tilde{P}_{T}(x,b)$ is the vector $({P}_{T}(a,x,b)\colon{a}\in\tilde{A})$, $\otimes$ is the Kronecker product and $0_{J}\in\mathbb{R}^{J}$ is the zero vector. By Assumption F2\ref{As:f_homog}, $\Gamma_{T}$ is full rank and thus $\int{P}_{T}(a,x,b)\otimes\tilde{P}^\intercal_{T}(x,b)d\mu(b)=0^\intercal_{J}$. Repeating the argument for each $a$,
\begin{equation*}
\forall{{z}}\in\mathbb{R}^{p},\; \int\tilde{P}_{T}(x,b)^{\alpha}d\mu(b)=0
\end{equation*}
for multi-indices $\alpha\in\{1,2\}^{J}$. Repeating the argument for higher order derivatives,
\begin{equation}\label{eq:four}
\forall{z}\in\mathbb{R}^{p},\; \int\tilde{P}_{T}(x,b)^{\alpha}d\mu(b)=0
\end{equation}
for all $\alpha\in\mathbb{N}^{J}$. Let $\mu_{{z}}$ be the signed measure induced by $\beta\mapsto\tilde{P}_{T}(x,\beta)$, i.e.,
\begin{equation*}
\mu_{z}(B)=\int1\{\tilde{P}_{T}(x,b)\in{B}\}d\mu(b).
\end{equation*}
That is, $\mu_{z}$ is the measure of $\tilde{P}_{T}(x,\beta)$. Thus from equation (\ref{eq:four}),
\begin{equation*}
\forall{{z}}\in\mathbb{R}^{p},\int{x}^{\alpha}d\mu_{{z}}(x)=0
\end{equation*}
for all $\alpha\in\mathbb{N}^{J}$. It follows that the Fourier transform of $\tilde{P}_{T}(x,\beta)$ is identically zero, and thus the measure $\mu_{z}$ is zero for each ${z}\in\mathbb{R}^{p}$ \parencite[Theorem 1 Proof]{hornik1993some}. Since $\tilde{P}_{T}(x,\beta)=\tilde{P}_{T}(x,\tilde\beta)$ implies $\beta_{a[1]}+x^\intercal(\beta_{a[-1]}^{\intercal},\;\gamma_{T,a}^{\intercal})^{\intercal}=\tilde\beta_{a[1]}+x^\intercal(\tilde\beta_{a[-1]}^{\intercal},\;\gamma_{T,a}^{\intercal})^{\intercal}$ for all $a\in A$, $\mu_{{z}}=0$ implies for all $z\in\mathbb{R}^p$,
\begin{equation*}
\int1\{b:\{b_{a[1]}+x^\intercal(b_{a[-1]}^{\intercal},\;\gamma_{T,a}^{\intercal})^{\intercal}:a\in\tilde{A}\}\in{B}\}d\mu(b)=0.
\end{equation*}
From here standard arguments \parencite[Lemma 1]{masten2018random} give that the characteristic function of $\beta$ is zero and thus the signed measure $\mu=0$.
\end{proof}

\subsubsection{Proof of Theorem \ref{thm:f}}
\begin{proof}
Let $Y=((A_{t},X_{t})_{t=2}^{T},A_{1})$. By Assumption \ref{As:f_bl}, the
distribution of $Y$ conditional upon $X_{1}=x$ is
\begin{equation*}
f_{y|x_1}(y,x_{1})=\int\prod_{t=2}^{T}\left(P_{t}(a_{t},x_{t},b)F_{x_{t}}(x_{t}|x_{t-1},a_{t-1})\right)P_{1}(a_{1},x_{1},b)f_{\beta|X_{1}}(b,x_{1})db.
\end{equation*}
Fix $x_1\in\mathrm{Supp}(X_1)$ and $(a_{t})_{t=1}^{T-1}\in A^{T-1}$. By Assumption \ref{As:f_os},
\begin{equation*}
\frac{f_{y|x_1}(y,x_{1})}{\prod_{t=2}^{T}F_{x_{t}}(x_{t}|x_{t-1},a_{t-1})}=\int\prod_{t=1}^{T}P_{t}(a_{t},x_{t},b)f_{\beta|X_{1}}(b,x_{1})db.
\end{equation*}
Let $g(b;(a_{t})_{t=1}^{T-1})=\prod_{t=1}^{T-1}P_{t}(a_{t},x_{t},b)f_{\beta|X_{1}}(b,x_{1})$, and define the operator
\begin{equation*}
{L}_{T,\beta}:L_{S_{\beta}}\rightarrow{A}\times\mathcal{L}_{S_{T}}\qquad[{L}_{T,\beta}m](a_{T},{x}_{T})=\int{P}_{T}(a_{T},x_{T},b)m(b)db.
\end{equation*}
Under Assumption \ref{As:f_bl}-\ref{As:f_os}, Theorem \ref{thm:inject_finite} implies $L_{T,\beta}$ is injective and that the operator defined in \ref{As:f_rank} exists. Suppose $\gamma_{T},\tilde{\gamma}_{T}$ are observationally equivalent, i.e.,
\begin{equation*}
    (x_{T},a_T)\in S_T\times A, \; \int P_T(a_T,x_T,b;\gamma_T)g(b;(a_{t})_{t=1}^{T-1})db = \int P_T(a_T,x_T,b;\tilde\gamma_T)\tilde{g}(b;(a_{t})_{t=1}^{T-1})db.
\end{equation*}
In particular for $E$ as in Assumption \ref{As:f_rank}, $[L_{T,\beta}^{E,\gamma_{T}}g](a_T,x_{T})=[L_{T,\beta}^{E,\tilde\gamma_{T}}\tilde{g}](a_T,x_{T})$ for all $(x_{T},a_T)\in{E}$. Since $L_{T,\beta}^{E,\gamma_{T}}$ is injective, $g(b;(a_{t})_{t=1}^{T-1})=[(L_{T,\beta}^{E,{\gamma}_{T}})^{-1}L_{T,\beta}^{E,\tilde\gamma_{T}}\tilde{g}](b)$. Similarly, by Assumption \ref{As:f_rank}, for some $\tilde{E}$, $g(b;(a_{t})_{t=1}^{T-1})=[(L_{T,\beta}^{\tilde{E},{\gamma}_{T}})^{-1}L_{T,\beta}^{\tilde{E},\tilde\gamma_{T}}\tilde{g}](b)$. It
follows that
\begin{equation*}
0=\left[\left((L_{T,\beta}^{E,{\gamma}_{T}})^{-1}L_{T,\beta}^{E,\tilde\gamma_{T}}-(L_{T,\beta}^{\tilde{E},{\gamma}_{T}})^{-1}L_{T,\beta}^{\tilde{E},\tilde\gamma_{T}}\right)\tilde{g}\right](b),
\end{equation*}
but $\tilde{g}(b;(a_{t})_{t=1}^{T-1})\neq0$. Under Assumption \ref{As:f_rank}, if $\gamma_T\neq\tilde{\gamma}_T$ then $L_{T,\beta}^{E,\gamma_{T},\tilde{E},\tilde{\gamma}_{T}}\equiv(L_{T,\beta}^{E,{\gamma}_{T}})^{-1}L_{T,\beta}^{E,\tilde\gamma_{T}}-(L_{T,\beta}^{\tilde{E},{\gamma}_{T}})^{-1}L_{T,\beta}^{\tilde{E},\tilde\gamma_{T}}$ is injective, so we conclude $\gamma_T=\tilde{\gamma}_T$. Next, $ g(b;(a_{t})_{t=1}^{T-1})$ is identified as
\begin{equation*}
    g(b;(a_{t})_{t=1}^{T-1}) = \left[L_{T,\beta}^{-1}  \frac{f_{y|x_1}(y,x_{1})}{\prod_{t=2}^{T}F_{x_{t}}(x_{t}|x_{t-1},a_{t-1})}\right](b),
\end{equation*}
which is possible since $L_{T,\beta}$ is identified and injective. Repeating this argument for each choice sequence ($a_{t})_{t=1}^{T}$, $f_{\beta|X_{1}}(b,x_{1})$ is identified as $\sum_{{a}\in{A}^{(T-2)}}g(b;{a})$. Similarly, $P_t(a_t,x_t,b)$ is identified as the sum of $g(b,(a_{t})_{t=1}^{T-1}\div f_{\beta|X_{1}}(b,x_{1})$ over the support of $(a_{t})_{t=1}^{T-1}$ for all periods except the $t$th. Finally, given identification of $\gamma_{t+1}, \gamma_{t+2},\ldots \gamma_T$, under Assumption \ref{As:f_os}, $\gamma_t$ is identified.
\end{proof}

\subsubsection{Proof without rank condition}

We consider the more general case that $k \ge p+J$.

\begin{manualassumption}{F2$^{\bm{add}}$}\label{As:f_puh_add}
\begin{enumerate*}[label=(\roman*)]
    \item $S_t=(X_t^\intercal,\beta^\intercal)^\intercal\in\mathbb{R}^{k+(1+p)J}$, and $k \ge p+J\text{ for } p\ge0$. For each $x\in\mathrm{Supp}(X_1)$, $\beta\mid X_1=x$ admits a bounded density $f_{\beta|X_{1}}$. \item Let $\delta_{T,a}$ be the first $J$ elements of $\gamma_{T,a}$. Then $\Gamma_{T}\equiv(\delta_{T,1}\delta_{T,2}\cdots\delta_{T,J})\in\mathbb{R}^{J\times J}$ is full rank.
    \item Assumptions \ref{As:f_puh} (ii) and (iii).
\end{enumerate*} 
\end{manualassumption}

\begin{manualassumption}{F3$^{\bm{add}}$}\label{As:f_os_add}
Let $Z_t$ denote the first $p+J$ elements of $X_T$. For each $x_{1}\in\mathrm{Supp}(X_1)$ and ($a_{1},a_2,\ldots,a_{T-1})\in A^{T-1}$, there is $(x_{2},x_3,\ldots,x_{T-1})\in \times_{t=2}^{T-1} \mathrm{Supp}(X_t)$ such that
\begin{equation*}
\mathrm{Supp}\left(Z_{T} \mid A_{T-1}=a_{t-1},X_{T-1}=x_{t-1},\dots,A_1=a_1,X_1=x_1\right)
\end{equation*}
contains a non-empty open set. Moreover, for each $t$, $\mathrm{Supp}((1,X_t))$ spans $\mathbb{R}^{k+1}$.
\end{manualassumption}

\begin{lemma}[Result without rank condition]\label{thm:f_norank}
Assume the distribution of $(X_{t},A_{t})_{t=1}^{T}$ is observed for $T\ge2$, generated from agents solving the model of equation (\ref{eq:inf_prob}) with $J=1$ satisfying assumptions \ref{As:f_bl}, \ref{As:f_puh_add} and \ref{As:f_os_add}. Furthermore $S_T$ contains no isolated points. If $\gamma_{T[1]}=1$, then $f_{\beta|X_{1}}$ is point identified.
\end{lemma}

\begin{proof}
Proceed as in the proof to Theorem \ref{thm:f}. For identification of $\gamma_T$, suppose for all $x\in{S}_{T}$,  
\begin{equation*}
  \int \Lambda\left(  b_{[1]}+x^\intercal\left(b_{[-1]}^{\intercal},\;\gamma_{T}^{\intercal}\right)^{\intercal}\right){g}(b;a_1)db=  \int \Lambda\left(  b_{[1]}+x^\intercal\left(b_{[-1]}^{\intercal},\;\tilde\gamma_{T}^{\intercal}\right)^{\intercal}\right)\tilde{g}(b;a_1)db.
\end{equation*}
Since $S_T$ contains no isolated points, we can differentiate the above equation with respect to $x\in S_T$. Furthermore, as both $\Lambda$ and $g$ are bounded, the limits defining differentiation and integration may be exchanged, so that for all $x\in {S}_{T}$ and $p< k'\le k$,
\begin{equation*}
  \int \frac{\partial}{\partial{x_{k'}}}\Lambda\left(   b_{[1]}+x^\intercal\left(b_{[-1]}^\intercal,\gamma_{T}^\intercal\right)^\intercal  \right){g}(b;a_1)db= \int \frac{\partial}{\partial{x_{k'}}}\Lambda\left(   b_{[1]}+x^\intercal\left(b_{[-1]}^\intercal,\tilde\gamma_{T}^\intercal\right)^\intercal  \right)\tilde{g}(b;a_1)db.
\end{equation*}
Since the derivative of $\Lambda(x)$ is $\Lambda(x)(1-\Lambda(x))$, the above display is equivalent to
\begin{align*}
&\gamma_{T[k']}\int[\Lambda(1-\Lambda)]\left(   b_{[1]}+x^\intercal\left(b_{[-1]}^\intercal,\gamma_{T}^\intercal\right)^\intercal  \right){g}(b;a_1)db\\&=\tilde\gamma_{T[k']}\int[\Lambda(1-\Lambda)]\left(   b_{[1]}+x^\intercal\left(b_{[-1]}^\intercal,\tilde\gamma_{T}^\intercal\right)^\intercal  \right)\tilde{g}(b;a_1)db.
\end{align*}
By assumption $\gamma_{T[1]}=\tilde{\gamma}_{T[1]}=1$, so for all $x\in {S}_{T}$,
\begin{equation*}
\int[\Lambda(1-\Lambda)]\left(   b_{[1]}+x^\intercal\left(b_{[-1]}^\intercal,\gamma_{T}^\intercal\right)^\intercal  \right){g}(b;a_1)db=\int[\Lambda(1-\Lambda)]\left(   b_{[1]}+x^\intercal\left(b_{[-1]}^\intercal,\tilde\gamma_{T}^\intercal\right)^\intercal  \right)\tilde{g}(b;a_1)db.
\end{equation*}
Therefore, for any $k'$
\begin{equation*}\left(\gamma_{T[k']}-\tilde\gamma_{T[k']}\right)\int[\Lambda(1-\Lambda)]\left(   b_{[1]}+x^\intercal\left(b_{[-1]}^\intercal,\gamma_{T}^\intercal\right)^\intercal  \right){g}(b;a_1)db=0,
\end{equation*}
and since the logistic function takes values in $(0,1)$, $\gamma_{T[k']}=\tilde\gamma_{T[k']}$ and $\gamma_T$ is identified. Given identification of $\gamma_T$, $f_{\beta|X_1}$ is identified by the argument in the proof to Theorem \ref{thm:inject_finite}.
\end{proof}

\newpage
\begin{refsection}
\setcounter{table}{0}
\renewcommand{\thetable}{B\arabic{table}}
\setcounter{figure}{0}
\renewcommand{\thefigure}{B\arabic{figure}}
\section{Online appendix}
\setcounter{page}{1}

Throughout this appendix I use the following notations: $S_\beta=\mathrm{Supp}(\beta)$; $\mathcal{L}_{A}$ denotes the usual $L^\infty$ space $\mathcal{L}^\infty(A,\lambda)$ where $\lambda$ the Lebesgue measure.

\subsection{Additional proofs for Section \ref{sec:extensions}}

\subsubsection{Proof of Corollary \ref{thm:f_nt}}\label{ssec:f_proofs_nt}

Before stating the proof, we introduce the support assumption used in the statement of Corollary \ref{thm:f_nt}.
\begin{manualassumption}{F3${}^\prime$}\label{As:f_nt_os}
For each $x\in\mathrm{Supp}(X_1)$, $\exists a\in A$ such that $\forall a_2,a_3\in A$ (i) $\mathrm{Supp}(X_1)$, $S_2\equiv\mathrm{Supp}(X_2\mid X_{1}=x,A_1 =a)$, $S_3=\mathrm{Supp}(X_{3}\mid{X_2}\in{S_2},A_2=a_2)$ and $\cap_{a_3\in A}\mathrm{Supp}(X_4\mid X_3\in S_3, A_3=a_3)$  span $\mathbb{R}^k$. (ii) $\mathrm{Supp}(X_{3}\mid{X_2}\in{S_2},A_2=a_2)$ and $\mathrm{Supp}(X_{4}\mid X_3\in S_3, A_3=a_3)$ contain a non-empty open set.
\end{manualassumption}

\begin{proof}
Fix $x_1\in\mathrm{Supp}(X_1)$ and denote $S_4=\mathrm{Supp}(X_4\mid X_3\in S_3, A_3=a_3)$ which satisfies Assumption \ref{As:f_nt_os}. The operators $L_{4,2,3}:\mathcal{L}_{S_{{3}}}\rightarrow{A}\times\mathcal{L}_{S_{{4}}}$ and $L_{4,3}:\mathcal{L}_{S_{{3}}}\rightarrow{A}\times\mathcal{L}_{S_{{4}}}$ defined as
\begin{align*}&[L_{4,2,3}m](a_{4},x_{4})=\int\frac{f_{A_{4}A_{3}A_{2}A_{1}X_{4}X_{3}X_{2}|X_{1}}(a_{4},a_3,a_{2},a_{1},x_{4},x_{3},x_{2},x_{1})}{F_{x_{4}}(x_{4}|x_{3},a_3)F_{x_{3}}(x_{3}|x_{2},a_{2})F_{x_{2}}(x_{2}|x_{1},a_{1})}m(x_{3})dx_{3}\\
&[L_{4,3}m](a_{4},x_{4})=\int\sum_{a_{2}\in{A}}\frac{f_{A_{4}A_{3}A_{2}A_{1}X_{4}X_{3}X_{2}|X_{1}}(a_{4},a_3,a_{2},a_{1},x_{4},x_{3},x_{2},x_{1})}{{F_{x_{4}}(x_{4}|x_{3},a_3)F_{x_{3}}(x_{3}|x_{2},a_{2})F_{x_{2}}(x_{2}|x_{1},a_{1})}}m(x_{3})dx_{3}
\end{align*}
are well-defined and observed for $x_{2}\in{S}_{2}$. Define the following operators:
\begin{align*}
L_{4,\beta}:\mathcal{L}_{S_{\beta}}\rightarrow{A}\times\mathcal{L}_{S_{{4}}}\qquad&[L_{4,\beta}m](a_{4},x_{4})=\int{P}_{4}(a_{4},x_{4},b)m(b)db\\
D^{2}_{\beta}:\mathcal{L}_{S_{\beta}}\rightarrow\mathcal{L}_{S_{\beta}}\qquad&[D^{2}_{\beta}m](b)=P_{2}(a_{2},x_{2},b)m(b)\\
D_{\beta}:\mathcal{L}_{S_{\beta}}\rightarrow\mathcal{L}_{S_{\beta}}\qquad&[D_{\beta}m](b)=P_{1}(a_{1},x_{1},b)f_{\beta|X_{1}}(b,x_{1})m(b)\\
L_{\beta,3}:\mathcal{L}_{S_{{3}}}\rightarrow\mathcal{L}_{S_{\beta}}\qquad&[L_{\beta,3}m](b)=\int{P}_{3}(a_3,x_{3},b)m(x_{3})dx_{3}
\end{align*}
It is straightforward to show $L_{4,2,3}=L_{4,\beta}D^{2}_{\beta}D_{\beta}L_{\beta,3,}$ and $L_{4,3}=L_{4,\beta}D_{\beta}L_{\beta,3,}$. We begin by showing injectivity of $L_{4,\beta}$ and $L_{\beta,3}^*$. Notice 
\begin{equation*}
P_t(a_t,x_t,b)=\frac{\exp\left(x^\intercal(b_{a},\;\gamma_{t,a}^{\intercal})^{\intercal}+\rho\int{v_{t+1}}(x';b)dF_{x_{t}}(x'|x,a)\right)}{\sum_{\tilde{a}\in{A}}\exp\left(x^\intercal(b_{\tilde{a}},\;\gamma_{t\tilde{a}}^{\intercal})^{\intercal}+\rho\int{v_{t+1}}(x';b)dF_{x_{t}}(x'|x,\tilde{a})\right)}
\end{equation*}
differs from equation \eqref{eq:ccp} only by the time-dependence of $\gamma_t,v_t$ and $F_{x_t}$. Since Assumption \ref{As:f_nt_puh} places restrictions on $(\gamma_t,F_{x_t})$ that are analagous to restrictions placed by Assumption \ref{As:inf_puh} on $(\gamma,F_{x})$ in the stationary model, injectivity will result from the arguments of Lemmas \ref{lem:inf_ccp} and \ref{lemma_cosapprox}. The arguments of Lemma \ref{lem:inf_ccp} apply directly. The arguments of Lemma \ref{lemma_cosapprox} do not directly apply since in the non-stationary model the value function $v_t$ is defined recursively, so we cannot use the uniform bound on $v_t$ from Lemma \ref{lemma_cosapprox}. To develop the uniform bound on $v_t$, I proceed recursively.

First define $e(a,x)=E[\epsilon_{t,a}|x,a\text{ is optimal strategy}]$. Under Assumption \ref{As:f_bl}, the function $e(a,x)$ is known and bounded \parencite{aguirregabiria2007sequential}. Now consider the terminal value function (i.e., $t=T_1$),
\begin{equation*}
{v}_{T_1}(x;b)=\sum_{a\in{A}}P_{T_1}(a,x,b)\left(x^\intercal\left(\beta_{a},\;\gamma_{T,a}^{\intercal}\right)^{\intercal}+e(a,x)\right),
\end{equation*}
which is bounded because the CCP functions are. For $t<T_1$, suppose that $v_{t+1}$ is finite. Since
\begin{equation*}
{v}_{t}(x;b)=\sum_{a\in{A}}P_{t}(a,x,b)\left(x^\intercal\left(\beta_{a},\;\gamma_{t,a}^{\intercal}\right)^{\intercal}+\rho\int{v}_{t+1}(x';b)dF_{x_t}(x'|x,a)\right),
\end{equation*}
${v}_{t}(x;b)$ is finite also. So for any $t$, ${v}_{t}(x;b)$ is finite for any $(x,b)$ and a uniform bound is given by the supremum over the support. Therefore the remaining steps in Lemma \ref{lemma_cosapprox} go through directly.

The arguments in the proof to Theorem \ref{thm:inf} imply that $L_{4,3,2}=L_{4,\beta}D_{\beta}^{2}D_{\beta}L_{\beta,3}$ and $L_{4,3}=L_{4,\beta}D_{\beta}L_{\beta,3}$, and that the spectral decomposition
\begin{equation*}
L_{4,2,3}L_{4,3}^{-1}=L_{4,\beta}D_{\beta}^{2}L_{4,\beta}^{-1}
\end{equation*}
identifies $P_4(a,x,b)$ and thus $\gamma_{4}$. Exchanging the role of $L_{4,\beta}$ and $L_{3,\beta}$ yields identification of $P_3(a,x,b)$ and thus $\gamma_3$. Given identification of $D_\beta^2$, $\gamma_4$ and $\gamma_3$, $\gamma_2$ is identified under Assumption \ref{As:f_nt_os}. Finally, given $D_\beta=L_{4,\beta}^{-1}L_{4,3}L_{\beta,3}^{-1}$, $f_{\beta|X_1}$ and $P_1(a,x,b)$ (and thus $\gamma_1)$ are identified.
\end{proof}

\subsubsection{Proof of Corollary \ref{thm:fh_fd}}\label{sec:pf_fd}

The result uses the following support condition:
\begin{manualassumption}{F3${}^{\prime\prime}$}\label{As:f_nt_os_fd}
For each $x\in\mathrm{Supp}(X_1)$, $\exists a\in A$ such that (i) $\mathrm{Supp}(X_1)$, $S_2\equiv\mathrm{Supp}(X_2\mid X_{1}=x,A_1 =a)$, $S_3=\cap_{a_2\in A}\mathrm{Supp}(X_{3}\mid{X_2}\in{S_2},A_2=a_2)$ and $\mathrm{Supp}(X_4\mid X_3\in S_3, A_3=1)$  span $\mathbb{R}^k$. (ii) $\cap_{a_2\in A}\mathrm{Supp}(X_{3}\mid{X_2}\in{S_2},A_2=a_2)$ and $\mathrm{Supp}(X_{4}\mid X_3\in S_3, A_3=1)$ contain a non-empty open set.
\end{manualassumption}

\begin{proof}
Define the following operators:
\begin{align*}
L_{3,4,2}:\mathcal{L}_{S_{{2}}}\rightarrow\mathcal{L}_{S_{{3}}}\qquad&[L_{4,3,2}m](x_{3})=\int\frac{f_{A_{4}A_{3}A_{2}A_{1}X_{4}X_{3}X_{2}|X_{1}}(1,1,1,a_{1},x_{4},x_{3},x_{2},x_{1})}{F_{x_{4}}(x_{4}|x_{3},1)F_{x_{3}}(x_{3}|x_{2},1)F_{x_{1}}(x_{2}|x_{1},a_{1} ) }m(x_{2})dx_{2}\\
L_{3,2}:\mathcal{L}_{S_{{2}}}\rightarrow\mathcal{L}_{S_{{3}}}\qquad&[L_{4,3}m](x_{3})=\int\sum_{a_{2}\in{A}}\frac{f_{A_{4}A_{3}A_{2}A_{1}X_{4}X_{3}X_{2}|X_{1}}(1,1,a_{2},a_{1},x_{4},x_{3},x_{2},x_{1})}{F_{x_{4}}(x_{4}|x_{3},1)F_{x_{3}}(x_{3}|x_{2},1)F_{x_{1}}(x_{2}|x_{1},a_{1} ) }m(x_{2})dx_{2}\\
L_{3,\beta}:\mathcal{L}_{S_{\beta}}\rightarrow\mathcal{L}_{S_{{3}}}\qquad&[L_{3,\beta}m](x_{3})=\int{P}_{3}(1,x_{3},b)m(b)db\\
D^{4}_{\beta}:\mathcal{L}_{S_{\beta}}\rightarrow\mathcal{L}_{S_{\beta}}\qquad&[D^{4}_{\beta}m](b)=P_{4}(1,x_{4},b)m(b)\\
D_{\beta}:\mathcal{L}_{S_{\beta}}\rightarrow\mathcal{L}_{S_{\beta}}\qquad&[D_{\beta}m](b)=P_{1}(a_{1},x_{1},b)f_{\beta|X_{1}}(b,x_{1})m(b)\\
L_{\beta,2}:\mathcal{L}_{S_{{2}}}\rightarrow\mathcal{L}_{S_{\beta}}\qquad&[L_{\beta,2}m](b)=\int{P}_{2}(1,x_{2},b)m(x_{2})dx_{2}
\end{align*}

Under Assumptions \ref{As:f_bl} and \ref{As:f_nt_os_fd} these operators are well-defined and observed. One can show $L_{4,3,2}=L_{3,\beta}D_{\beta}^{4}D_{\beta}L_{\beta,2}$ and $L_{3,2}=L_{3,\beta}D_{\beta}L_{\beta,2}$. Under Assumptions \ref{As:f_bl}, \ref{As:f_nt_puh} and \ref{As:f_nt_os_fd}, $L_{3,\beta}$ and $L_{\beta,2}^*$ are injective and thus the observed operator $L_{4,3,2}L_{3,2}^{-1}$ has the eigendecomposition $L_{3,\beta}D_{\beta}^{4}L_{3,\beta}^{-1}$. I now show the eigenvalue-eigenfunction representation is unique. Since the model is binary choice with real valued $\beta$, the function $P_{4}(1,x_{4},b)$ is injective in $b$. It follows that the eigenvalues are unique, and, up to the ordering function $R$, $P_{4}(1,x_{4},R(b))$ is identified. The eigenfunctions of the decomposition identify $P_{3}(1,x_{3},R(b))$, which equal
\begin{equation*}
\Lambda\left(x_{3}^\intercal(R(b),\gamma_{3}^{\intercal})^{\intercal}+\int{v_4(x';R(b))}\left(dF_{x_{3}}(x'|x_{3},1)-dF_{x_{3}}(x'|x_{3},0)\right)\right).
\end{equation*}
Under Assumption \ref{ass:fh_fd}, $v_4(x';R(b))$ can be expressed in terms of $P_{4}(1,x_{4},R(b))$, and is therefore identified. Therefore identification consists of showing that $(R(b),\gamma_{3})$ can be identified from $x_{3}^\intercal(R(b),\gamma_{3})$, which follows from the support assumption.  Given identification of $P_4(a,x,b)$, identification of $\gamma$ and $f_{\beta|X_1}$ are attained under Assumption \ref{As:f_nt_os_fd} by a sequential argument as in Corollary \ref{thm:f_nt}.
\end{proof}

\subsubsection{Proof of Corollary \ref{thm:inf_fe}}\label{pf:inf_fe}

\begin{proof}
The proof follows closely the structure of the proof to Theorem \ref{thm:inf}. As in that proof, Assumptions \ref{As:inf_bl} and \ref{As:inf_os_fe} enable the decompositions $L_{3,4,2}=L_{3,\beta}D_{\beta}^{4}D_{\beta}L_{\beta,2}$ and $L_{3,2}=L_{3,\beta}D_{\beta}L_{\beta,2}$ where the operators are defined in proof to Theorem \ref{thm:inf}. I first show injectivity of $L_{3,\beta}$ and $L_{\beta,2}^{*}$. By Assumption \ref{As:inf_os_fe}, for $t=2,3$, the conditional supports of $X_{t}$ contains a non-empty open set for which 
\begin{equation*}
P(a,x,b)=\frac{\exp\left(\beta_{a}+x^{\intercal}\gamma_{a}\right)}{\sum_{\tilde{a}\in{A}}\exp\left(\beta_{\tilde{a}}+x^{\intercal}\gamma_{\tilde{a}}\right)}.
\end{equation*}
Given this functional form, the arguments of Theorem \ref{thm:inject_finite} give that
\begin{equation*}
\int\tilde{P}(x;b)^{\alpha}d\mu(b)=0
\end{equation*}
for all multi-indices $\alpha\in\mathbb{N}^{J}$ where $\tilde{P}(x;b)=\{P(a,x,b)\colon{a}=1,2\ldots,J\}$. It follows that the measure induced by the mapping $\beta\rightarrow\tilde{P}(x;\beta)$ is identically zero. Because this mapping is injective, the measure $\mu(b)$ is identically zero and thus $L_{3,\beta}$ and $L_{2,\beta}^{*}$ are injective. Then, under Assumption \ref{As:inf_os_fe}, identification follows from the proof to Theorem \ref{thm:inf}.
\end{proof}

\subsubsection{Proof of Corollary \ref{thm:kwon}}\label{pf:kwon}
\begin{proof}
From the definitions in the proof to Theorem \ref{thm:inf} and Corollary \ref{thm:f_nt}, it is immediate that $L=L_{3,\beta}D_{\beta}L_{\beta,2}$. By assumption, $D_\beta$ has rank $R$. We now argue that $L_{3,\beta}$ and $L_{\beta,2}^*$ are injective and therefore have rank $R$. Given that $\beta$ has $R<\infty$ points of support, $L^*_{\beta,2}:\mathbb{R}^{R}\rightarrow\mathcal{L}_{S_2}$. From the approximation result in Theorem \ref{thm:inject}, for each $r$, a sequence with elements ${x}_{r,n}\in\mathbb{R}^k$ can be found such that $\lim{P}(0,{x}_{r,n},b_{r_{+}})=1$ for $r_{+}\ge{r}$ and $\lim{P}(0,{x}_{r,n},b_{r_{-}})=0$ for $r_{-}<r$. Define a sequence of $R\times R$ matrices whose $r$th row is $\tilde{P}(x_{r,n})\equiv(P(0,x_{r,n},b_{\tilde{r}})\colon{{\tilde{r}}=1,\ldots,R})$. Since the limit of the sequence of matrices is full rank, for any $m\in\mathbb{R}^R$, for $n$ large enough $\tilde{P}(x_{r,n})^\intercal m=0$ for all $r=1,\ldots,R$ implies $m=0$. We conclude $L_{3,\beta}$ and $L_{\beta,2}^*$ are injective. The result then follows from \textcite{kwon2019estimation}, p. 32.
\end{proof}

\subsection{Appendix to Section \ref{sec:estimation}}

\subsubsection{ Theorem \ref{thm:est}}\label{ssec:est}

This section details the assumptions of Theorem \ref{thm:est} that provide for consistent estimation of $\theta_{0}=(F_{x},\gamma,F_{\beta|X_{1}})\in\Theta=\mathcal{F}\times\Gamma\times\mathcal{M}$ where $\mathcal{F}$ is the space of state transitions, $\Gamma\subseteq\mathbb{R}^{\dim\gamma}$, and $\mathcal{M}=\{F:S_\beta\times\mathrm{Supp}(X_1)\rightarrow [0,1]~\colon~ b\mapsto F(b,x)\text{ is c\`{a}dl\`{a}g}\}$. The first assumption supposes the existence of a consistent estimator for the state transition $F_{x}$\footnote{With some abuse of notation, we allow $F_{x}$ to be either the time-invariant state transition, or the set of time-varying state transitions $F_{x_t}:t=2,\ldots$, and the marginal distribution of the initial observed state $X_{1}$.}:
\begin{manualassumption}{E1}\label{As:est_fs}
There exists an estimator $\hat{F}_{x,n}$ that satisfies $\left|\left|\hat{F}_{x,n}-F_{x}\right|\right|_{\mathcal{F}}=o_{p}(1)$, where $\|\cdot\|_{\mathcal{F}}$ is a norm on $\mathcal{F}$.
\end{manualassumption}

One such estimator that satisfies Assumption \ref{As:est_fs} is the kernel estimator of the conditional density, for any $t>1$
\begin{equation}\label{Eq:cdke}
\hat{F}_{x_{t},n}(x'|x,a)=\frac{\sum_{i=1}^{N}K_{X',h_{X'}}(x'-x_{i,t})K_{X,h_{X}}(x-x_{i,t-1})1\{a_{i,t-1}=a\}}{\sum_{i=1}^{N}K_{X,h_{X}}(x-x_{i,t-1})1\{a_{i,t-1}=a\}}
\end{equation}
where $K_{Z,h_{Z}}$ are multivariate kernel functions with bandwidth $h_{Z}$. Let $\mathcal{M}_{n}$ be a sieve space that approximates $\mathcal{M}$, and denote $d_{\mathcal{M}}(\cdot,\cdot)$ as the Prokhorov metric. The Prokhorov distance between two measures $f,\tilde{f}$ on $S_{\beta}$ is
\begin{equation*}
\inf\left\{\delta>0\colon\forall{B}\in\mathcal{B}(S_{\beta}),\;({f(B)}\le\tilde{f}(B_{\delta})+\delta)\vee({\tilde{f}(B)}\le{f}(B_{\delta})+\delta)\right\},
\end{equation*}
where $B_{\delta}$ is the $\delta$ neighborhood of $B\subseteq{S}_{\beta}$ and $\mathcal{B}(S_{\beta})$ is the Borel sigma field. Let $Y=(A_{t},X_{t})_{t=1}^{T}$. The next assumption requires that the true parameter is a well-separated maximum.
\begin{manualassumption}{E2}\label{As:est_ws}
For all $\epsilon>0$ there exists some decreasing sequence of positive numbers $c_{n}(\epsilon)$ satisfying $\lim\inf{c}_{n}(\epsilon)>0$ such that
\begin{equation*}
E[\psi(Y,F_{x},\gamma,F_{\beta|X_{1}})]-\sup_{\{(\tilde{\gamma},\tilde{f})\in\Gamma\times\mathcal{M}_{n}:\|\tilde{\gamma}-\gamma\|+d_{\mathcal{M}}(\tilde{F},{F}_{\beta|X_{1}})\ge\epsilon\}}E[\psi(Y,F_{x},\tilde{\gamma},\tilde{F})]\ge{c_{n}(\epsilon)}.
\end{equation*}
\end{manualassumption}
Assumption  \ref{As:est_ws} is the condition of Remark 3.1(2) in \textcite{chen2007large} that strengthens their Condition 3.1. If the strict inequality restriction on $c_{n}$ were replaced by a weak inequality, then the assumption would be implied by the identification result.

\begin{manualassumption}{E3}\label{As:est_ss}
The sieve space (i) satisfies
$\mathcal{M}_{n}\subseteq\mathcal{M}_{n+1}\subseteq\mathcal{M}$ and (ii) is such
that there exists a sequence $F_{n}\in\mathcal{M}_{n}$ that converges to
$F_{\beta|X_{1}}$ and satisfies
\begin{equation*}
\left|E[\psi(Y,F_{x},\gamma,F_{n})]-E[\psi(Y,F_{x},\gamma,F_{\beta|X_{1}})]\right|=o(1).
\end{equation*}
\end{manualassumption}
These are standard restrictions on the sieve space and the population criterion function \parencite[Condition 3.2, 3.3(ii)]{chen2007large}. The second condition is a local continuity assumption. As per \textcite[Remark 2.1]{chen2007large}, it is implied by compactness of the sieve space and continuity of the population criterion function on $\mathcal{M}_{n}$.

Define $\mathcal{F}_{n}$ to be the set of possible values that the estimator $\hat{F}_{x,n}$ can take. For example, if the conditional density kernel estimator is chosen, then an element of the set $\mathcal{F}_{n}$ takes the form in equation \eqref{Eq:cdke} and the set $\mathcal{F}_{n}$ is defined by ranging $(X_{t+1},X_{t},A_{t})$ over its support. Define the neighborhood $\mathcal{N}_{F_{x},n}=\{\tilde{F}_{x}\in\mathcal{F}_{n}\colon\|\tilde{F}_{x}-F_{x}\|_{\cal{F}}\le\epsilon_{1,n}\}$ where $\|\cdot\|_{\mathcal{F}}$ is the norm in Assumption \ref{As:est_fs}.
\begin{manualassumption}{E4}\label{As:est_uc}
The following two conditions hold
\begin{equation*}
\sup_{(\tilde{F}_{x},\tilde{\gamma},\tilde{F})\in\mathcal{N}_{F_{x},n}\times\Gamma\times\mathcal{M}_{n}}\left|\frac{1}{n}\sum_{i=1}^{n}\psi(y_{i},\tilde{F}_{x},\tilde{\gamma},\tilde{F})-E[\psi(Y,\tilde{F}_{x},\tilde{\gamma},\tilde{F})]\right|=o_{p}(1),
\end{equation*}
\begin{equation*}
\sup_{(\tilde{F}_{x},\tilde{\gamma},\tilde{F})\in\mathcal{N}_{F_{x},n}\times\Gamma\times\mathcal{M}_{n}}\left|E[\psi(Y,\tilde{F}_{x},\tilde{\gamma},\tilde{F})]-E[\psi(Y,{F}_{x},\tilde{\gamma},\tilde{F})]\right|=o(1).
\end{equation*}
\end{manualassumption}
This is similar to \textcite[Assumption 5.3]{hahn2018nonparametric}, which is based on \textcite[Condition 3.5]{chen2007large} but includes an additional condition to account for the presence of a first-step estimator.

Theorem \ref{thm:est} is a direct consequence of \textcite[Theorem 5.1]{hahn2018nonparametric}, so the proof is omitted. In the proof, by consistency it is meant that $\|\hat{\gamma}-\gamma\|+d_{\mathcal{M}}(\hat{F}_{\beta|X_{1}},F_{\beta|X_{1}})=o_{p}(1)$.


\subsubsection{Theorem \ref{thm:est_fomss}}\label{ssec:fge}

The choice of tuning parameters must satisfy the following condition:

\begin{manualassumption}{E3${}^\prime$}\label{As:est_fomss}
$\mathcal{M}_n$ defined in equation \eqref{eq:fomss} is such that (i) $\mathcal{M}_{n}\subseteq\mathcal{M}_{n+1}$ and as $n\rightarrow\infty$, (ii) $\mathcal{B}_{n}\times\mathcal{X}_{n}$ becomes dense in $S_{\beta}\times\mathrm{Supp}(X_1)$ and (iii) $I(n)\log{I}(n)=o(n)$ for $I(n)=B(n)X(n)$.
\end{manualassumption}
We also place some restrictions on the complexity of $\mathcal{N}_{F_{x},n}$, the neighborhood to which the estimator $\hat{F}_{x,n}$ belongs with probability approaching one. For this purpose define ${N}(w,\mathcal{G},\|\cdot\|_{\mathcal{G}})$ as the covering number of set $\mathcal{G}$ with balls of radius $w$ under the norm $\|\cdot\|_{\mathcal{G}}$.
\begin{manualassumption}{E4${}^\prime$}\label{As:est_compl}
(i)  $(\mathcal{N}_{F_{x},n},\|\cdot\|_{\mathcal{F}})$ and $\Gamma$ are compact. (ii) $P_{t}$ is Lipschitz continuous in $\gamma\in\Gamma$ and continuous in $F_{x}\in\mathcal{N}_{F_{x},n}$. (iii) $\log{N}(w/\sqrt{I(n)},\mathcal{N}_{F_x,n},\|\cdot\|_{\mathcal{F}})=o(n)$ with $I(n)$ as in Assumption \ref{As:est_fomss}.
\end{manualassumption}

\begin{proof}[Proof of Theorem \ref{thm:est_fomss}]
The proof consists of verifying the assumptions of Theorem \ref{thm:est_fomss} imply those of Theorem \ref{thm:est}. Assumption \ref{As:est_fs} is assumed. To verify assumption \ref{As:est_ws}, suppose that (i) $\mathcal{M}_{n}$ and $\mathcal{M}$ are compact in the weak topology and (ii) that $E[(Y,{F}_{x},\gamma,F_{\beta|X_{1}})]$ is continuous in $F_{\beta|X_{1}}\in\mathcal{M}\supset\mathcal{M}_{n}$ in the weak topology and $\gamma\in\Gamma$. Then, since $\theta_{0}=(\gamma,F_{\beta|X_1},F_x)$ is identified, for any $(\tilde\gamma,\tilde{F}_{\beta|X_{1}})\neq(\gamma,F_{\beta|X_{1}})$,
\begin{equation*}
E[\psi(Y,{F}_{x},\gamma,F_{\beta|X_{1}})]-E[\psi(Y,{F}_{x},\tilde\gamma,\tilde{F}_{\beta|X_{1}})]>0
\end{equation*}
Because $\{(\tilde{\gamma},\tilde{F})\in\Gamma\times\mathcal{M}_{n}:\|\tilde{\gamma}-\gamma\|+d_{\mathcal{M}}(\tilde{F},{F}_{\beta|X_{1}})\ge\epsilon\}$ is closed in the compact set $\mathcal{M}_{n}\times\Gamma$, it is compact and the infinum
\begin{equation*}
E[\psi(Y,{F}_{x},\gamma,F_{\beta|X_{1}})]-\sup_{\{(\tilde{\gamma},\tilde{F})\in\Gamma\times\mathcal{M}_{n}:\|\tilde{\gamma}-\gamma\|+d_{\mathcal{M}}(\tilde{F},{F}_{\beta|X_{1}})\ge\epsilon\}}E[\psi(Y,{F}_{x},\tilde\gamma,\tilde{F})]
\end{equation*}
is attained for each $(\epsilon,n)$. Set this difference to $c_{n}(\epsilon)$. It remains to show that $\lim\inf{c}_{n}(\epsilon)>0$. Consider that
\begin{align*}
  c_{n}(\epsilon)=&E[\psi(Y,{F}_{x},\gamma,F_{\beta|X_{1}})]-\sup_{\{(\tilde{\gamma},\tilde{F})\in\Gamma\times\mathcal{M}_{n}:\|\tilde{\gamma}-\gamma\|+d_{\mathcal{M}}(\tilde{F},{F}_{\beta|X_{1}})\ge\epsilon\}}E[\psi(Y,{F}_{x},\tilde\gamma,\tilde{F}_{\beta|X_{1}})]\\\ge&
E[\psi(Y,{F}_{x},\gamma,F_{\beta|X_{1}})]-\sup_{\{(\tilde{\gamma},\tilde{F})\in\Gamma\times\mathcal{M}:\|\tilde{\gamma}-\gamma\|+d_{\mathcal{M}}(\tilde{F},{F}_{\beta|X_{1}})\ge\epsilon\}}E[\psi(Y,{F}_{x},\tilde\gamma,\tilde{F}_{\beta|X_{1}})]\\>&0
\end{align*}
The weak inequality is because $\mathcal{M}_{n}\subseteq\mathcal{M}$. The strict inequality is because the set ${\{(\tilde{\gamma},\tilde{F})\in\Gamma\times\mathcal{M}_{n}:\|\tilde{\gamma}-\gamma\|+d_{\mathcal{M}}(\tilde{F},{F}_{\beta|X_{1}})\ge\epsilon\}}$ is compact and $E[(Y,{F}_{x},\gamma,F_{\beta|X_{1}})]$ is continuous. Since $c_{n}(\epsilon)$ is bounded away from zero uniformly in $n$, its limit inferior is strictly positive.

To complete the argument, it must be shown that (i) $\mathcal{M}_{n}$ and $\mathcal{M}$ are compact in the weak topology and (ii) that $E[\psi(Y,{F}_{x},\gamma,F_{\beta|X_{1}})]$ is continuous on $\mathcal{M}\supset\mathcal{M}_{n}$ in the weak topology and $\gamma\in\Gamma$. Compactness of $\mathcal{M}$ and $\mathcal{M}_{n}$ in the weak topology is shown in \textcite[pp. 240, 247]{fox2016simple}. Since the CCP functions $P_{t}$ are continuous in $(b,\gamma)$ \parencite{norets2010continuity}, the argument of \textcite[Remark 2]{fox2016simple} implies the function $F_{\beta|X_{1}}\mapsto\int\prod_{t=1}^{t_{1}}P_{t}(a_{t},x_{},b;F_{x},\gamma)dF_{\beta|X_{1}}(b,x_{1})$ is continuous. Since it is bounded away from zero, $F_{\beta|X_{1}}\mapsto\log\int\prod_{t=1}^{t_{1}}P_{t}(a_{t},x_{t},b;F_{x},\gamma)dF_{\beta|X_{1}}(b,x_{1})$ is also continuous. And since this function is bounded away from negative infinity, $F_{\beta|X_{1}}\mapsto{E}[\log\int\prod_{t=1}^{t_{1}}P_{t}(A_{t},X_{t},b;F_{x},\gamma)dF_{\beta|X_{1}}(b,X_{1})]$
is continuous by the bounded convergence theorem.

Assumption \ref{As:est_ss}(i) is guaranteed by Assumption \ref{As:est_fomss}(i). For Assumption \ref{As:est_ss}(ii), \textcite[p. 247]{fox2016simple} show the existence of a sequence $(F_n)_{n\in\mathbb{N}}\subseteq\mathcal{M}$ that converges to $F_{\beta|X_{1}}\in\mathcal{M}$.  Since the sequence $(F_n)_{n\in\mathbb{N}}$ takes values in $\mathcal{M}$ and $E[\psi(Y,{F}_{x},\gamma,F_{\beta|X_{1}})]$ is continuous on $\mathcal{M}$, we have that
\begin{equation*}
\left|E[\psi(Y,{F}_{x},\gamma,F_{n})]-E[\psi(Y,{F}_{x},\gamma,F_{\beta|X_{1}})]\right|=o(1).
\end{equation*}

For the first part of Assumption \ref{As:est_uc}, note that 
\begin{align*}
\left|E[\psi(Y,{F}_{x},{\gamma},F_{\beta|X_{1}})]\right|&\le{E}[\left|\psi(Y,{F}_{x},{\gamma},{F}_{\beta|X_{1}})\right|]\\&=E\left[\left|\log\int\prod_{t=1}^{T}P_{t}(A_{t},X_{t},b;F_{x},\gamma)dF_{\beta|X_{1}}(b,x_{1})\right|\right]<\infty,
\end{align*}
because $P_{t}$ is uniformly bounded away from zero since $\mathcal{N}_{F_x,n}\times\Gamma\times{S_{\beta}}$ is compact and $P_{t}$ is strictly positive for each $(b,F_{x},\gamma)$. Then by \parencite[p. 5592]{chen2007large}, $\log{N}(w,\{\psi(\cdot,{F}_{x},{\gamma},F_{\beta|X_{1}})\colon({F}_{x},{\gamma},F_{\beta|X_{1}})\in\mathcal{N}_{F_x,n}\times\Gamma\times\mathcal{M}_{n}\},\|\cdot\|_{1})=o_{p}(n)$ implies the first part of Assumption \ref{As:est_uc}. This entropy is bounded above by the sum of the entropies associated with $\mathcal{N}_{F_{x},n}$, $\Gamma$ and $\mathcal{M}_{n}$. \textcite[p. 248]{fox2016simple} show the entropies associated with $\Gamma$ and $\mathcal{M}_{n}$ are $o_{p}(n)$ under Assumption
\ref{As:est_fomss}(iii). By Assumption \ref{As:est_compl}(iii), the entropy
associated with $\mathcal{N}_{F_{x},n}$ is $o_p(n)$.
The second part of Assumption \ref{As:est_uc} follows easily from the continuity of the population
criterion function on the compact set $\mathcal{N}_{F_{x},n}\times\Gamma\times\mathcal{M}_{n}$.
\end{proof}

\subsection{Appendix to Section \ref{sec:sims}}\label{sec:sims_addl}

This subsection contains several additional simulation results. First, Figures \ref{fig:sim1}-\ref{fig:sim3}, contain the empirical quantiles for the estimator of $F_{\beta}$ for each of DGP 1, DGP 2 and DGP 3. For each sample size the median estimate (the black curve) falls close to the true distribution (the blue curve). The empirical pointwise confidence bands are substantially narrower for the larger sample sizes.

\begin{figure}[H] \centering \includegraphics[width=\textwidth]{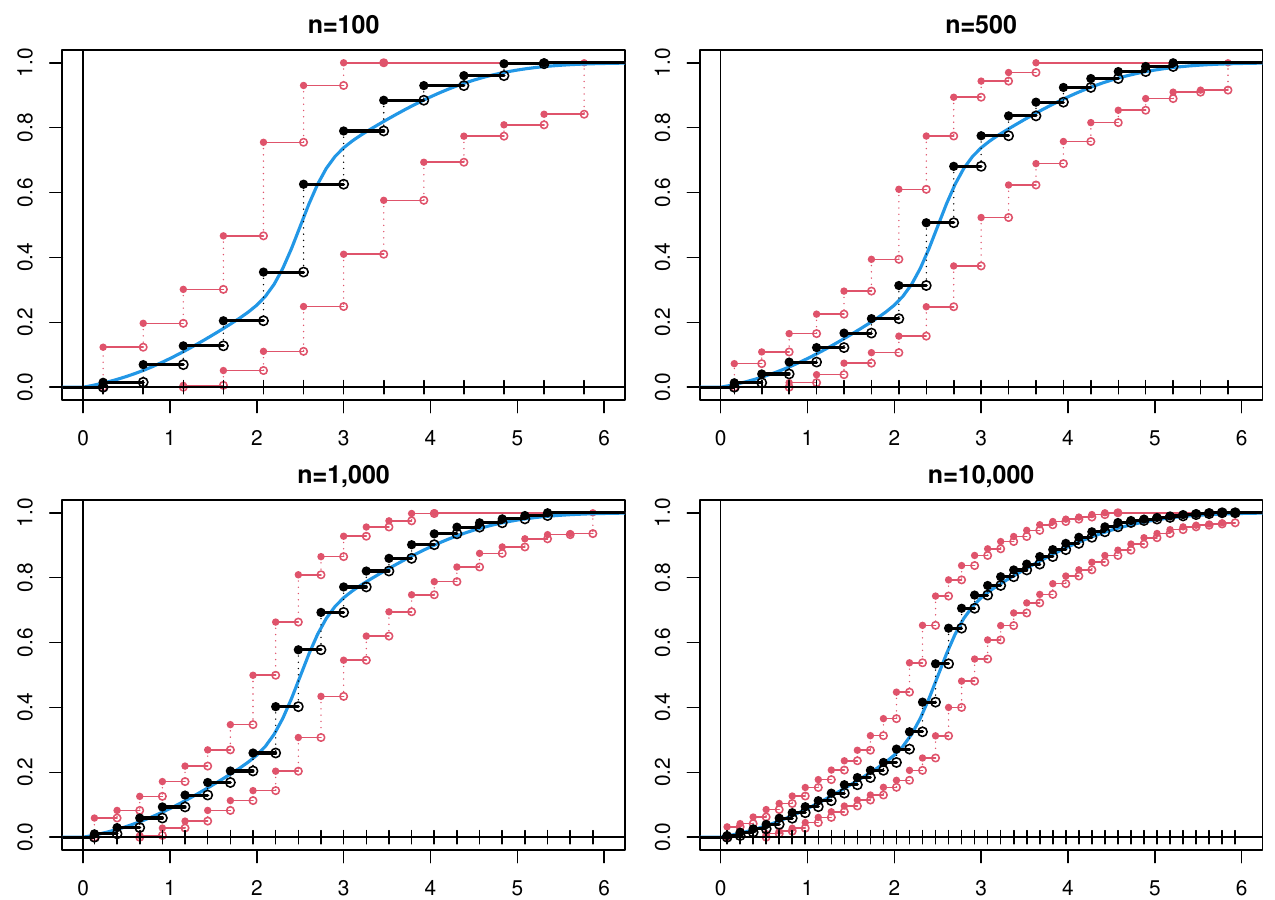}
\vspace*{-10mm} \caption{\small{Simulation results for estimation of $F_{\beta}$ for each sample size in DGP 1. The black curve represents the median estimate, the red curves pointwise  97.5\%, 2.5\% quantiles, and the blue curve the true distribution. The ticks on the x-axis represent the grid points.}}\label{fig:sim1}\vspace*{-0mm}
\end{figure}

\begin{figure}[H] \centering \includegraphics[width=\textwidth]{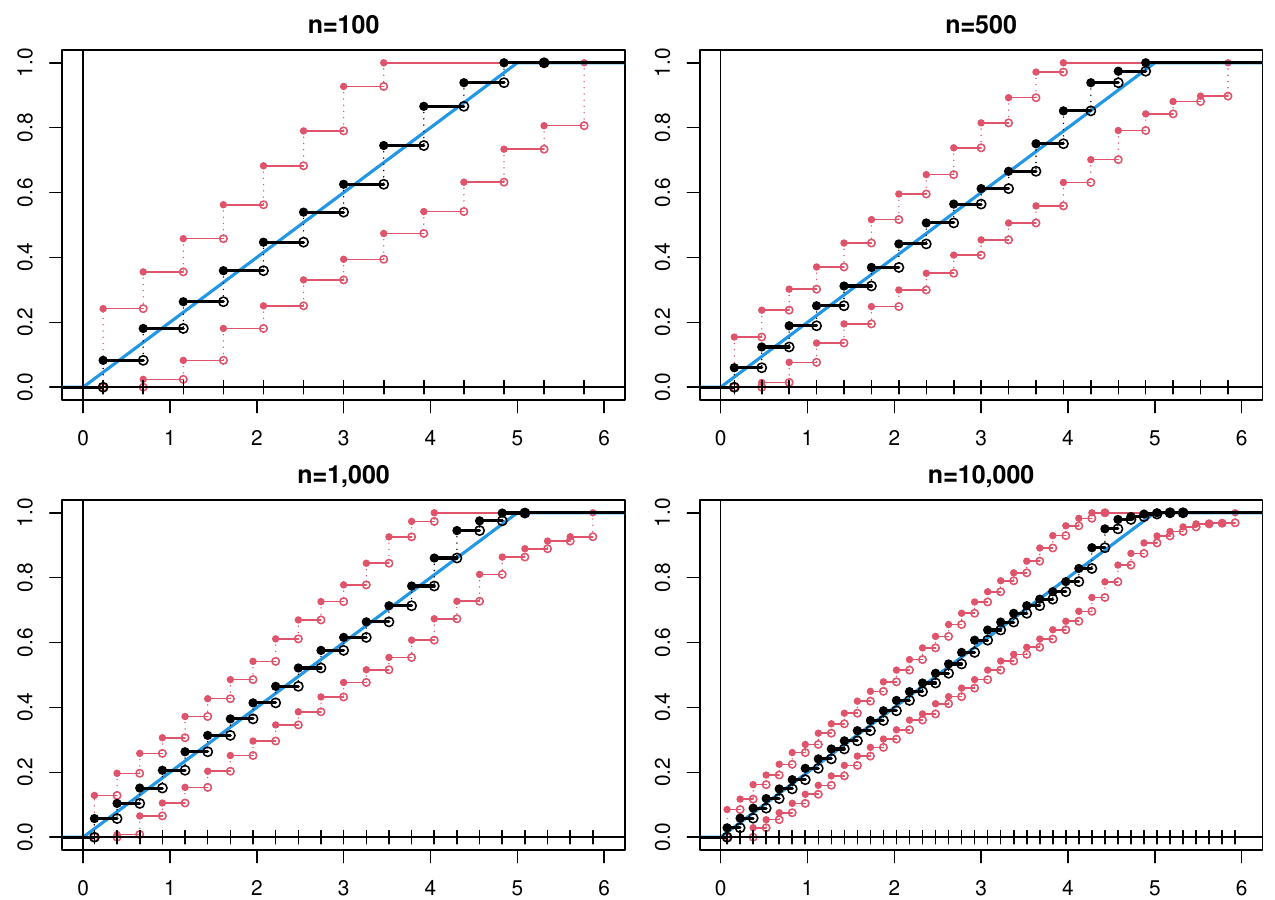}
\vspace*{-10mm} \caption{\small{Simulation results for estimation of $F_{\beta}$ for each sample size in DGP 2. The black curve represents the median estimate, the red curves pointwise  97.5\%, 2.5\% quantiles, and the blue curve the true distribution. The ticks on the x-axis represent the grid points.}}\label{fig:sim2}\vspace*{-0mm}
\end{figure}

\begin{figure}[H] \centering \includegraphics[width=\textwidth]{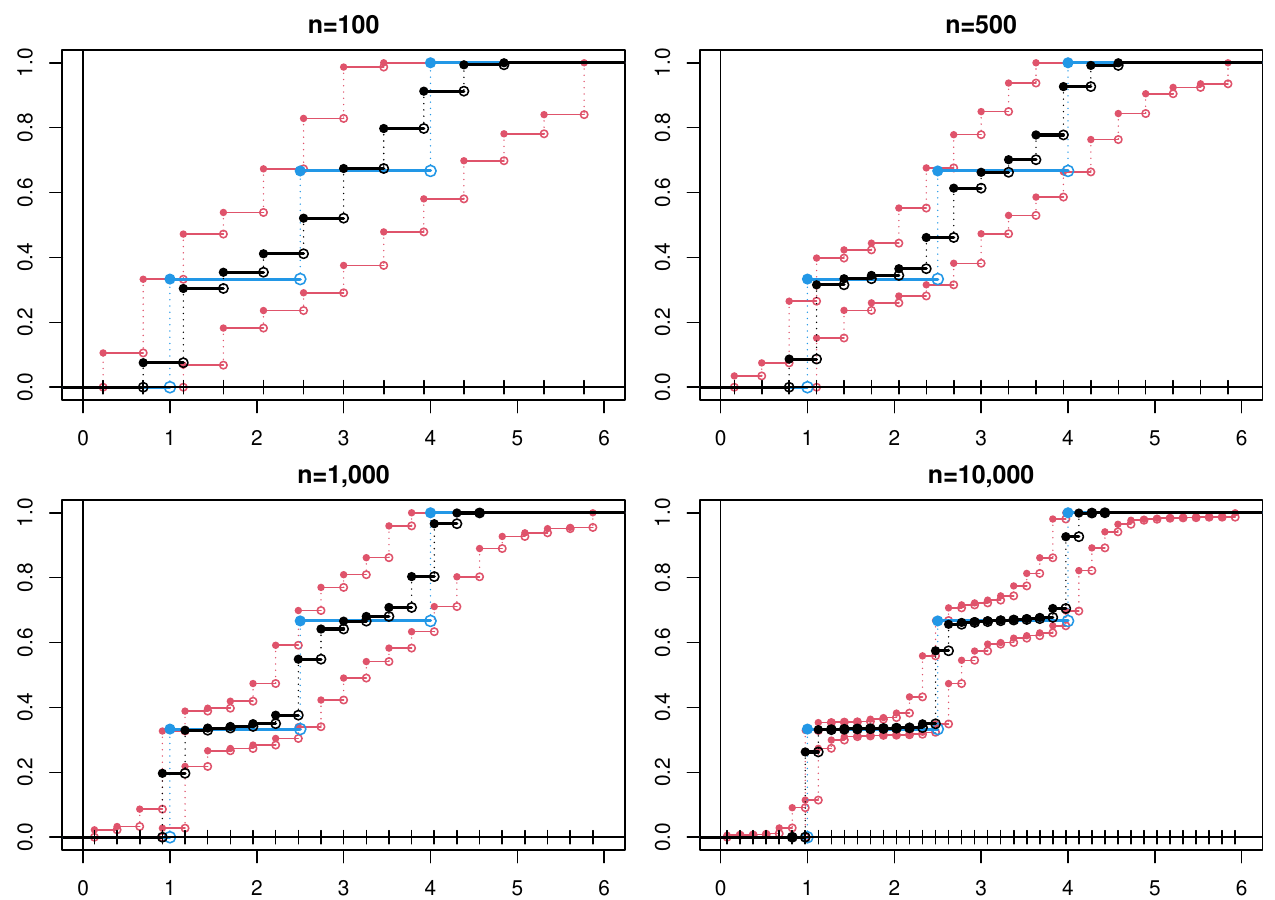}
\vspace*{-10mm} \caption{\small{Simulation results for estimation of $F_{\beta}$ for each sample size in DGP 3. The black curve represents the median estimate, the red curves pointwise  97.5\%, 2.5\% quantiles, and the blue curve the true distribution. The ticks on the x-axis represent the grid points.}}\label{fig:sim3}\vspace*{-0mm}
\end{figure}

Finally, Table \ref{table:res_bs_coverage} contains empirical coverage probabilities of the pointwise bootstrap confidence intervals for $\gamma$ and the c.d.f. of $\beta$ at each decile ($q_{d}\equiv F_\beta^{-1}(d)$ for $d=0.1,\ldots,0.9$), evaluated for the sample size $n=100,500, 1{,}000$. For the largest sample size ($n=1{,}000$), the minimum, median and maximum coverage probabilities of $F_\beta$ over the 9 evaluation points are 0.85, 0.89 and 0.93 respectively, and the standard deviation is 0.024. The empirical coverage probabilities are computed as follows. First, for each draw $m=1,\ldots,100$ of size $n$ from DGP 1, the estimator is computed on 100 bootstrap samples of size $n$, generated by sampling $i=1,2,\ldots,n$ uniformly with replacement. Then, a 90\% confidence interval $CI_{m,n}(\xi)$ for $\xi=(F_\beta(q_{0.1}),\ldots,F_\beta(q_{0.9}),\gamma)$ is computed as the interval between the 0.05 and 0.95 percentiles of the 100 bootstrap estimates of $\xi$. Finally, the empirical coverage probabilities are computed as the average number of times that the bootstrapped confidence interval $CI_{m,n}(\xi)$ contains the true parameter $\xi$.

\begin{table}[ht]
\centering
\begin{tabular}{c|ccc}
  \hline   \hline   \multicolumn{1}{l|}{Parameter}   & $n=100$ & $n=500$ & $n=1{,}000$  \\\hline  \hline  $F_\beta(q_{0.1})$ & 0.78 & 0.83 & 0.89 \\ 
  $F_\beta(q_{0.2})$ & 0.92 & 0.76 & 0.91 \\ 
  $F_\beta(q_{0.3})$ & 0.92 & 0.87 & 0.88 \\ 
  $F_\beta(q_{0.4})$ & 0.87 & 0.81 & 0.85 \\ 
  $F_\beta(q_{0.5})$ & 0.80 & 0.90 & 0.86 \\ 
  $F_\beta(q_{0.6})$ & 0.95 & 0.86 & 0.90 \\ 
  $F_\beta(q_{0.7})$ & 0.95 & 0.93 & 0.93 \\ 
  $F_\beta(q_{0.8})$ & 0.92 & 0.87 & 0.89 \\ 
  $F_\beta(q_{0.9})$ & 0.94 & 0.86 & 0.90 \\ 
   \hline
$\gamma$ & 0.89 & 0.90 & 0.89 \\ 
   \hline
\hline
\end{tabular}
\caption{Empirical coverage probabilities of the bootstrap $90\%$ confidence interval for different model parameters and sample sizes. For each $n=100,500, 1{,}000$ and parameter $\xi=F_\beta(q_{0.1}),\ldots,F_\beta(q_{0.9}),\gamma$ (where $q_{d}\equiv F_\beta^{-1}(d)$), the coverage probability is computed as $\sum_{m=1}^{100} 1\{\xi \in CI_{m,n}(\xi)\}/100$,  where $CI_{m,n}(\xi)$ is the 90\% bootstrap confidence interval for $\xi$ evaluated on the $m$th draw of sample size $n$ from DGP 1.}

\label{table:res_bs_coverage}
\end{table}

\subsection{Appendix to Sections \ref{sec:applic}}\label{sec:emp_app}

\subsubsection{Data construction}\label{app:data_construct}

The model is estimated using a subset of data from the Panel Study of Income Dynamics \parencite{PSID_Public_Use} from survey years 1973 to 1986. Our subset of wives with working husbands is constructed following the description in \textcite{altuug1998effect}, Appendix B. Wives are identified using the `Relationship to Head’ variable, with an additional check to ensure consistency between the `Age of Individual’ and `Age of Wife’ variables. The demographic variables are extracted directly from the raw data as the `Age of Individual', `\# Children in Family Unit', and `Highest Grade'/`Completed Education' variables. Similarly, head-of-household and wife income is extracted as the `Head labor income' and `Wife labor income', respectively. We also extract wife's hours worked variable, and household size.  Following \textcite{altuug1998effect}, the consumption variable is defined as a measure of food consumption. I construct this variable in line with their approach, which they describe as follows: the consumption variable ``for a given year is obtained by summing the values of annual food expenditures for meals at home, annual food expenditures for eating out, and the value of food stamps received for that year. We then measured consumption expenditures for year $t$ by taking $0.25$ of the value of this variable for year $t-1$ and $0.75$ of its value for year $t$.'' Each of the monetary variables are adjusted for inflation using FRED's Personal Consumption Expenditures implicit price deflator \parencite{BEA_PCE_Deflator}. Wife wages are constructed as labor income divided by hours worked, and is thus undefined when hours worked is zero.

Filtering is applied as follows. I keep only wives that are observed for (at least) four consecutive periods and aged between 17 and 64 years, require positive head-of-household labor income, and drop any records with missing fields (wife/husband labor income, age, children, household size, hours, education).

Construction of the $Z_{i,t}$ variable follows the description of \textcite{altuug1998effect}, and thus requires log consumption ($c_{i,t})$ and log wage ($y_{i,t}$) regressions. Specifically, in the identity $Z_{i,t}=\eta_i\lambda_t\omega_t\exp(\gamma_3^\intercal x_{Wi,t})l_{i,t}$, I set $\eta_i$ and $\lambda_t$ as the coefficients from the log consumption regression
$$
\log c_{i,t} = \log \eta_i + \log \lambda_t + (hhn_{i,t}, age_{i,t}, educ_{i,t}, age_{i,t}^2 ) \gamma_C + \tilde{\eta}_{i,t},
$$
where $hhn_{i,t}, age_{i,t}, educ_{i,t}$ are the household size, age and education variables, respectively. Next, I set $\omega_t$, $\gamma_3^\intercal x_{Wi,t}$ based upon the log wage regression
$$
\log y_{i,t} = \log \zeta_i + \log \omega_t +  x_{Wi,t}^\intercal\gamma_3 + \tilde{\epsilon}_{i,t},
$$
for $x_{Wi,t}=(age^2_{i,t}, age_{i,t} \cdot educ_{i,t}, hours_{i,t-1}, hours_{i,t-2}, 1\{hours_{i,t-1}{>}0\}, 1\{hours_{i,t-2}>0\} )$, where $hours_{i,t}$ indicates the hours worked by wife $i$ in period $t$. Finally, I set $l_{i,t}$, the number of hours a woman chooses to spend at work conditional on participating, as the fitted values from the regression
$$
hours_{i,t} =    \tilde{\omega}_t + \tilde{\xi}_t \cdot \{hours_{i,t-1}>0\} +  x^\intercal_{Li,t}\gamma_L + \varepsilon_{it},
$$
for $x_{Li,t}=(age_{i,t},educ_{i,t},hhn_{i,t},kidsn_{i,t},hours_{i,t-1})$.

For use in estimating the DDC model, I normalize the continuous variables $Z_{i,t}$ and $hinc_{i,t}$ to have unit standard deviation, and remove the 2.5\% of observations that have very large values of $Z_{i,t}$ or $hinc_{i,t}$ (larger than 6.5 and 7.3, respectively).

\subsubsection{Model fit}\label{sec:model_fit}

Table \ref{table:model_fit} compares model-implied and empirical summary statistics for some key variables in the empirical model of Section \ref{sec:applic}. Specifically, the table presents first and second moments for the variables $(A_{t},Z_t,Hinc_t)$, which I refer to as the choice variable, the wage variable, and spouse earnings, respectively. The empirical moments are calculated directly from the data, whereas the model-implied moments are averages computed over 1{,}000{,}000 draws from the estimated model as described in footnote \ref{cf_desc}.

\begin{table}[H]
\centering
\begin{tabular}{lcl|ccccc}
   \hline
\hline
 &  &  & $A_{1}$ & $A_{2}$ & $A_{3}$ & $A_{4}$ & $A_{5}$ \\ 
   \hline
\multirow{2}{*}{Mean} & \multirow{2}{*}{} & Est. & 0.6555 & 0.6531 & 0.6498 & 0.6473 & 0.6431 \\ 
   &  & Data & 0.6575 & 0.6578 & 0.6788 & 0.6900 & 0.6847 \\ 
   \hline
\multirow{8}{*}{Corr} & \multirow{2}{*}{$A_{1}$} & Est. &  & 0.4440 & 0.4408 & 0.4378 & 0.4347 \\ 
   &  & Data &  & 0.6389 & 0.5177 & 0.4437 & 0.3904 \\ 
   & \multirow{2}{*}{$A_{2}$} & Est. &  &  & 0.4584 & 0.4545 & 0.4519 \\ 
    &  & Data &  &  & 0.6488 & 0.5422 & 0.4846 \\ 
   & \multirow{2}{*}{$A_{3}$} & Est. &  &  &  & 0.4722 & 0.4689 \\ 
   &  & Data &  &  &  & 0.6800 & 0.5765 \\ 
   & \multirow{2}{*}{$A_{4}$} & Est. &  &  &  &  & 0.4833 \\ 
   &  & Data &  &  &  &  & 0.6444 \\ 
   \hline
\hline
 &  &  & $Z_{1}$ & $Z_{2}$ & $Z_{3}$ & $Z_{4}$ & $Z_{5}$ \\ 
   \hline
\multirow{2}{*}{Mean} & \multirow{2}{*}{} & Est. & 0.5094 & 0.5575 & 0.6077 & 0.6601 & 0.7132 \\ 
   &  & Data & 0.5106 & 0.5715 & 0.6101 & 0.6495 & 0.6929 \\ 
   \hline
\multirow{2}{*}{Std} & \multirow{2}{*}{} & Est. & 0.7554 & 0.8766 & 0.9896 & 1.0985 & 1.2002 \\ 
   &  & Data & 0.7579 & 0.8509 & 0.8781 & 0.9117 & 0.9512 \\ 
   \hline
\multirow{8}{*}{Corr} & \multirow{2}{*}{$Z_{1}$} & Est. &  & 0.9411 & 0.8921 & 0.8483 & 0.8080 \\ 
   &  & Data &  & 0.8625 & 0.8080 & 0.7339 & 0.6903 \\ 
   & \multirow{2}{*}{$Z_{2}$} & Est. &  &  & 0.9457 & 0.8973 & 0.8528 \\ 
    &  & Data &  &  & 0.8920 & 0.7815 & 0.7151 \\ 
   & \multirow{2}{*}{$Z_{3}$} & Est. &  &  &  & 0.9493 & 0.9019 \\ 
   &  & Data &  &  &  & 0.8499 & 0.7779 \\ 
   & \multirow{2}{*}{$Z_{4}$} & Est. &  &  &  &  & 0.9509 \\ 
   &  & Data &  &  &  &  & 0.9051 \\ 
   \hline
\hline
 &  &  & $Hinc_{1}$ & $Hinc_{2}$ & $Hinc_{3}$ & $Hinc_{4}$ & $Hinc_{5}$ \\ 
   \hline
\multirow{2}{*}{Mean} & \multirow{2}{*}{} & Est. & 1.5548 & 1.5861 & 1.6206 & 1.6573 & 1.6967 \\ 
   &  & Data & 1.5563 & 1.6159 & 1.6529 & 1.6309 & 1.6644 \\ 
   \hline
\multirow{2}{*}{Std} & \multirow{2}{*}{} & Est. & 0.8985 & 0.9838 & 1.0568 & 1.1203 & 1.1762 \\ 
   &  & Data & 0.9005 & 0.9175 & 0.9489 & 0.9400 & 1.0056 \\ 
   \hline
\multirow{8}{*}{Corr} & \multirow{2}{*}{$Hinc_{1}$} & Est. &  & 0.8951 & 0.8178 & 0.7558 & 0.7047 \\ 
   &  & Data &  & 0.8372 & 0.7674 & 0.7274 & 0.7174 \\ 
   & \multirow{2}{*}{$Hinc_{2}$} & Est. &  &  & 0.9116 & 0.8413 & 0.7836 \\ 
    &  & Data &  &  & 0.8272 & 0.7686 & 0.7342 \\ 
   & \multirow{2}{*}{$Hinc_{3}$} & Est. &  &  &  & 0.9222 & 0.8582 \\ 
   &  & Data &  &  &  & 0.8518 & 0.7949 \\ 
   & \multirow{2}{*}{$Hinc_{4}$} & Est. &  &  &  &  & 0.9300 \\ 
   &  & Data &  &  &  &  & 0.8353 \\ 
   \hline
\hline
\end{tabular}
\caption{Mean, standard deviation (``Std'') and correlation matrix for each of the labor force participation variable $A_t$, wage variable variable $Z_t$, and head-of-household earnings variable $Hinc_t$. ``Data'' refers to the sample moments, ``Est.'' refers to the model-implied moments based on 1{,}000{,}000 draws from the estimated model.} 
\label{table:model_fit}
\end{table}

\subsubsection{Estimation with finite types}\label{app:appl_AJ}

For comparison, I estimate the model under the assumption that $\beta_i$ has three points of support using the iterative method of \textcite{arcidiacono2003finite}. I initialize the algorithm at $(\tilde{\beta}-1,\tilde{\beta},\tilde{\beta}+1,\tilde{\gamma}^\intercal)^\intercal$, where $(\tilde{\beta},\tilde{\gamma}^\intercal)^\intercal$ is the estimate of the parametric model (i.e., with $\beta_i$ assumed to be degenerate with unknown support), and continue the iterative steps until the average (over the parameter vector) percent change in the absolute value of the parameter is less than 0.025\%.

\begin{table}[ht]
\centering
\begin{tabular}{ccccc}
  \hline\hline Intercept & $hinc_{i,t}$ & $kids_{i,t}$ & $age_{i,t}$ & $educ_{i,t}$   \\  \hline
-2.473 & -0.298 &  0.087 & -0.626 &  0.304 \\ 
  {\small(0.1293)} & {\small(0.0276)} & {\small(0.0780)} & {\small(0.0785)} & {\small(0.0750)} \\ 
   \hline
\hline
\end{tabular}
\caption{Point estimates of $\gamma$ for the participation model of Section \ref{sec:applic} under the assumption that $\beta_i$ has three points of support, using the estimator of \textcite{arcidiacono2003finite}. Standard errors are in parentheses, calculated as the standard deviation of the estimator over 1{,}000 bootstrap samples.} 
\label{table:altug_AJ}
\end{table}

\begin{figure}[H] \centering \includegraphics[width=\textwidth]{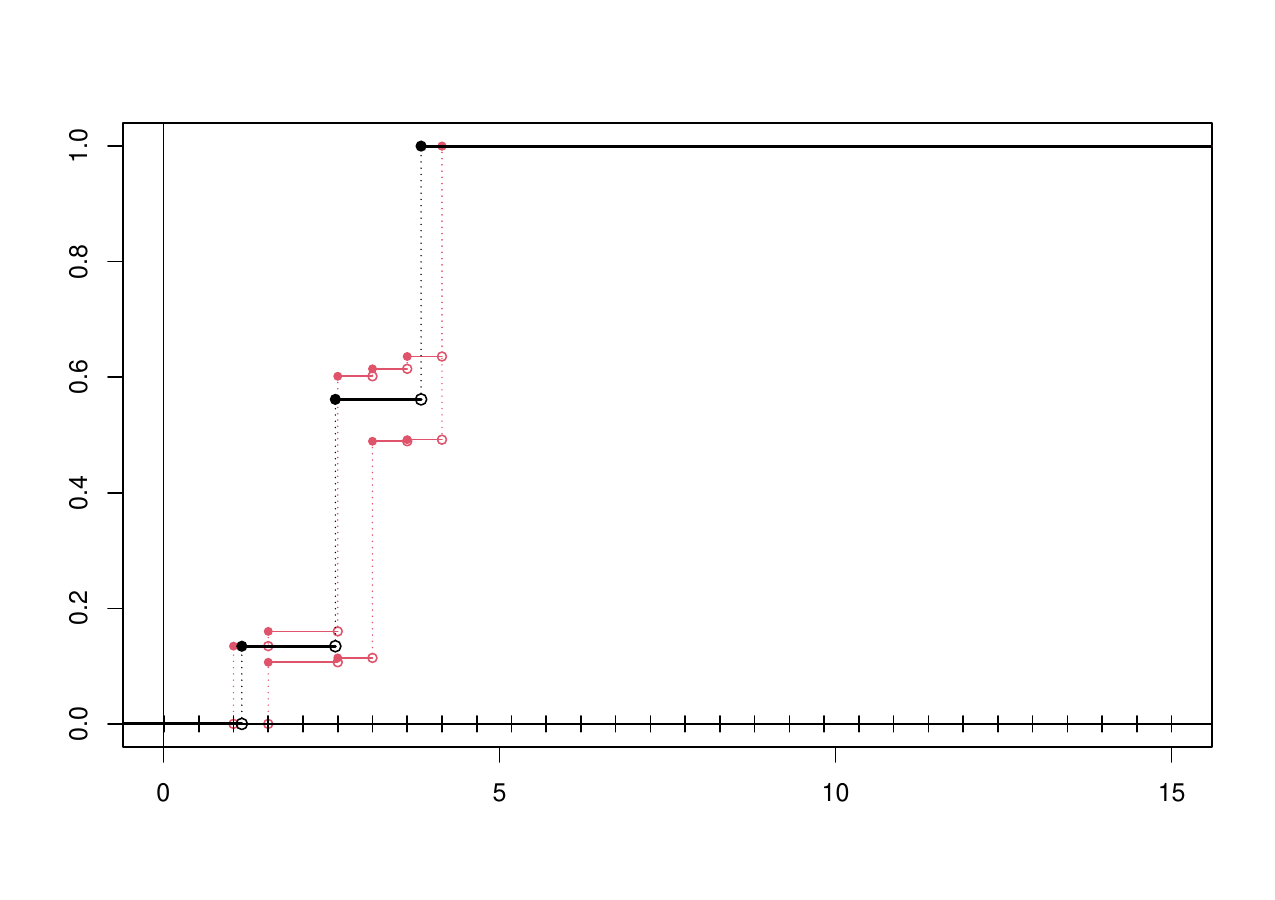}
\vspace*{-20mm} 	\caption{\small{Estimated distribution of
  $\beta_{i}$ for the participation model of Section \ref{sec:applic} under the assumption that $\beta_i$ has three points of support, using the estimator of \textcite{arcidiacono2003finite}. The black curve represents the point estimate. The red curves represent  bootstrapped 95\% pointwise confidence intervals for the c.d.f evaluated at the knots of the sieve space used for the estimator of Section \ref{sec:applic} (indicated by the ticks on the x-axis).}}\label{fig:altug_AJ}
\end{figure}

\subsubsection{Standard errors for the counterfactual estimates}

Table \ref{table:cf_std_errors} presents standard errors for the counterfactual results in Table \ref{table:cf_data}.

\begin{table}[ht]
\centering
\begin{tabular}{l|cccccc}
  \hline\hline & \multicolumn{6}{c}{Quantile of labor productivity $\beta$}  \\ Wage increase & $q_{0.01}$ & $q_{0.2}$ & $q_{0.4}$ & $q_{0.6}$ & $q_{0.8}$ & $q_{0.99}$ \\ 
   \hline
0\% & 0.0230 & 0.0488 & 0.0606 & 0.0238 & 0.0119 & 0.0098 \\ 
  5\% & 0.0247 & 0.0483 & 0.0602 & 0.0236 & 0.0117 & 0.0096 \\ 
  10\%  & 0.0264 & 0.0478 & 0.0597 & 0.0234 & 0.0116 & 0.0095 \\ 
  15\%  & 0.0282 & 0.0473 & 0.0593 & 0.0232 & 0.0114 & 0.0093 \\ 
  20\%  & 0.0298 & 0.0468 & 0.0589 & 0.0229 & 0.0112 & 0.0092 \\ 
  25\%  & 0.0314 & 0.0463 & 0.0584 & 0.0227 & 0.0111 & 0.0090 \\ 
   \hline
Elasticity: & 0.2321 & 0.0478 & 0.0269 & 0.0098 & 0.0042 & 0.0035 \\ 
   \hline
\hline
\end{tabular}
\caption{Bootstrapped standard errors for counterfactual labor force participation rates in Table \ref{table:cf_data}. Each cell presents bootstrapped standard errors for the corresponding cell in Table \ref{table:cf_data}, computed as the standard deviation of 1{,}000{,}000 bootstrap estimates.} 
\label{table:cf_std_errors}
\end{table}

\subsection{Additional state variables}\label{sec:discrete_state}

As claimed in Remark \ref{rem:discrete}, the results of the paper apply immediately to the case that there are additional state variables. This section states conditions that are sufficient for Theorems \ref{thm:inject} and \ref{thm:inf} for the $k\ge \dim(\beta)+ 1$ case. Intuitively, the assumptions require that conditions Assumptions \ref{As:inf_puh} and \ref{As:inf_os} apply to the first $\dim(\beta)+1$ elements of the state vector, leaving the remaining elements largely unrestricted. For instance, the additional variables may be discrete or binary. Analogous conditions can be provided for the models in Section \ref{sec:extensions}.

In this section, denote the observed state vector as $X_t=(Z_t^\intercal,W_t^\intercal)^\intercal$ for $Z_t\in\mathbb{R}^{\dim(\beta)+1}$.

\begin{manualassumption}{I2$^{\bm{add}}$}\label{As:inf_puh_add}
\begin{enumerate*}[label=(\roman*)]
    \item $S_t=(X_t^\intercal,\beta^\intercal)^\intercal\in\mathbb{R}^{k+J}$, and $k\ge J+1$. Denote $X_t=(Z_t^\intercal,W_t)^\intercal$ with $\dim(Z_t)=J+1$. For each $x\in \mathrm{Supp}(X_1)$, $\beta\mid X_1=x$ admits a bounded density $f_{\beta|X_{1}}$.
    \item $u(s,a)=x^\intercal\left(\beta_{a},\;\gamma_{a}^\intercal,\;\delta_a^\intercal\right)^\intercal,$ for $\gamma_a\in\mathbb{R}^{J}$.
\item
The probability distribution of $X_{t+1}$ conditional upon $(A_t,X_t)=(a,x)$ has no singular components, and the associated probability density and mass functions are real analytic functions of $z$ with bounded analytic
continuations to $\mathbb{R}^{J+1}$.
\item Assumptions \ref{As:inf_puh}(iii) and (iv).
\end{enumerate*} 
\end{manualassumption}

\begin{corollary}[Injectivity with additional state variables]
Assume \ref{As:inf_bl} and \ref{As:inf_puh_add}. Let $\mathcal{X}\subset \mathrm{Supp}(X_t)$ be such that $\{z~\colon~ (z^\intercal,w^\intercal)^\intercal\in\mathcal{X}\}$ contains a non-empty open set. Also, let $\mu$ be an absolutely continuous finite signed measure over set $\mathrm{Supp}(\beta)$. If
    \begin{equation*}
        \int P(a,x,b)d\mu(b)=0 \quad \text{for almost every } \forall (a,x)\in A \times\mathcal{X},
    \end{equation*}
    then $\mu=0$.
\end{corollary}

\begin{manualassumption}{I3$^{\bm{add}}$}\label{As:inf_os_add} For all $x\in\mathrm{Supp}(X_1)$, $\exists ~a\in A$ such that: (i) $\mathrm{Supp}(Z_{2}\mid{X}_1=x,A_1=a)$ and $\mathrm{Supp}(Z_{3}\mid{X_2}\in{\mathrm{Supp}(X_2\mid X_{1}=x,A_1 =a)},A_2=0)$ contain a non-empty open set; (ii) Assumption \ref{As:inf_os} (ii).
\end{manualassumption}

\begin{corollary}[Identification with additional state variables]
Assume the distribution of $(X_{t},A_{t})_{t=1}^{T}$ is observed for $T\ge4$, generated from agents solving the model of equation (\ref{eq:inf_prob}) satisfying assumptions \ref{As:inf_bl}, \ref{As:inf_puh_add} and \ref{As:inf_os_add}. Then $(\gamma,\delta,f_{\beta|X_{1}})$ is point identified.
\end{corollary}

\printbibliography[title={References for Online Appendix}]
\end{refsection}

\end{document}